\newcommand{\ud}{\,\mathrm{d}}
\newcommand{\ue}{\mathrm{e}}
\newcommand{\uj}{\mathrm{j}}
\newcommand{\bR}{\mathbb{R}}
\newcommand{\bN}{\mathbb{N}}
\newcommand{\bZ}{\mathbb{Z}}
\renewcommand{\H}[1]{\mathrm{H}\left( {#1} \right)}
\renewcommand{\Pr}[1]{\mathrm{Pr}\left\lbrace{#1}\right\rbrace}
\newcommand{\h}[1]{\mathrm{h}\left( {#1} \right)}
\newcommand{\E}[1]{\mathbb{E}\left[{#1}\right]}
\newcommand{\quant}[2]{{\left[ {#1} \right]}_{#2}}
\newcommand{\EQnum}{\addtocounter{equation}{1}\tag{\theequation}}
\newcommand{\InProd}[2]{\left \langle {#1} , {#2} \right \rangle}
\newcommand{\sgn}[1]{\mathrm{sgn}\left({#1}\right)}
\newtheorem{definition}{Definition}
\newtheorem{theorem}{Theorem}
\newtheorem{lemma}{Lemma}
\newtheorem{corollary}{Corollary}
\newtheorem{remark}{Remark}
\begin{document}
\makeatletter

\title{How Compressible are Innovation Processes?}

\author{Hamid~Ghourchian,
		Arash~Amini,~\IEEEmembership{Senior Member,~IEEE}, and
		Amin~Gohari,~\IEEEmembership{Senior Member,~IEEE}
	

	\thanks{The authors are with the department of Electrical Engineering and Advanced Communication Research Institute (ACRI), Sharif University of Technology, Tehran, Iran.
	(e-mails: h\_ghourchian@ee.sharif.edu,~\{aamini,~aminzadeh\}@sharif.ir)}
}
\maketitle

\begin{abstract}
The sparsity and compressibility of finite-dimensional signals are of great interest in fields such as compressed sensing.
The notion of compressibility is also extended to infinite sequences of i.i.d. or ergodic random variables based on the observed error in their nonlinear $k$-term approximation.
In this work, we use the entropy measure to study the compressibility of continuous-domain innovation processes (alternatively known as white noise).
Specifically, we define such a measure as the entropy limit of the doubly quantized (time and amplitude) process.
This provides a tool to compare the compressibility of various innovation processes. It also allows us to identify an analogue of the concept of ``entropy dimension" which was originally defined by R\'enyi for random variables.
Particular attention is given to stable and impulsive Poisson innovation processes. Here, our results recognize Poisson innovations as the more compressible ones with an entropy measure far below that of stable innovations.
While this result departs from the previous knowledge regarding the compressibility of fat-tailed distributions, our entropy measure ranks stable innovations according to their tail decay.
\end{abstract}

\begin{IEEEkeywords}
	Compressibility,
	entropy,
	impulsive Poisson process,
	stable innovation,
	white L\'evy noise.
\end{IEEEkeywords}

\section{Introduction}
\if@twocolumn
	\IEEEPARstart{T}{he}
\else
	The
\fi
 compressible signal models have been extensively used to represent or approximate various types of data such as audio, image and video signals. 
The concept of compressibility has been separately studied in information theory and signal processing societies. In information theory, this concept is usually studied via the well-known entropy measure and its variants. For instance, the notion of \emph{entropy dimension} was introduced in \cite{Renyi59} for continuous random variables based on the concept of differential entropy. The entropy dimension was later studied for discrete-domain random processes in \cite{Verdu10} and \cite{Jalali2017} in the context of compressed sensing.
In signal processing, a signal is intuitively called compressible if in its representation using a known dictionary only a few atoms contribute significantly and the rest amount to negligible contribution. 
Sparse signals are among the special cases for which the mentioned representation consists of a few non-zero (instead of insignificant) contributions.
Compressible signals in general, and sparse signals in particular, are of fundamental importance in fields such as compressed sensing \cite{Donoho2006,Candes2006}, dimensionality reduction \cite{Baraniuk2010}, and nonlinear approximation theory \cite{DeVore1998}. 

In this work, we consider the compressibility of continuous-domain innovation processes which were originally studied in the context of signal processing \cite{Unser2014_P1,Unser2014_P2}. However, we try to apply information theoretic tools to measure the compressibility. To cover the existing literature in both parts, we begin by the signal processing perspective of compressibility. 

Traditionally, compressible signals are defined as infinite sequences within the Besov spaces \cite{DeVore1998}, where the decay rate of the $k$-term approximation error could be efficiently controlled.
A more recent deterministic approach towards modeling compressibility is via weak-$\ell_p$ spaces \cite{Candes2006, Cohen2009}.
The latter approach is useful in compressed sensing, where $\ell_1$ (or $\ell_p$) regularization techniques are used.

The study of stochastic models for compressibility started with identifying compressible priors.
For this purpose, independent and identically distributed (i.i.d.) sequences of random variables with a given probability law are examined. Cevher in \cite{Cevher2009} defined the compressibility criterion based on the decay rate of the mean values of order statistics.
A more precise definition in \cite{Amini11} revealed the connection between compressibility and the decay rate of the tail probability.
In particular, heavy-tailed priors with infinite $p$-order moments were identified as $\ell_p$-compressible probability laws.
It was later shown in \cite{Silva2012} that this sufficient condition is indeed, necessary as well.
A similar identification of heavy-tailed priors (with infinite variance) was obtained in \cite{Gribonval2012} with a different definition of compressibility\footnote{Based on the definition in \cite{Gribonval2012}, a prior is called compressible if the compressed measurements (by applying a Gaussian ensemble) of a high-dimensional vector of i.i.d. values following this law could be better recovered using $\ell_1$ regularization techniques than the classical $\ell_2$ minimization approaches.}.

The first non-i.i.d. result appeared in \cite{Silva15}.
By extending the techniques used in \cite{Silva2012}, and based on the notion of $\ell_p$-compressibility of \cite{Amini11}, it is shown in \cite{Silva15} that discrete-domain stationary and ergodic processes are $\ell_p$-compressible if and only if the invariant distribution of the process is an $\ell_p$-compressible prior.

The recent framework of sparse stochastic processes introduced in \cite{Unser2014_P1,Unser2014_P2,Unser14} extends the discrete-domain models to continuous-domain.
In practice, most of the compressible discrete-domain signals arise from discretized versions of continuous-domain physical phenomena.
Thus, it might be beneficial to have continuous-domain models that result in compressible/sparse discrete-domain models for a general class of sampling strategies.
Indeed, this goal is achieved in \cite{Unser2014_P1,Unser2014_P2} by considering non-Gaussian stochastic processes.
The building block of these models are the innovation processes (widely known as white noise) that mimic i.i.d. sequences in continuous-domain.
Unlike sequences, the probability laws of innovation processes are bound to a specific family known as \textit{infinitely divisible} that includes $\alpha$-stable distributions ($\alpha=2$ corresponds to Gaussians).

The discretization of innovation processes are known to form stationary and ergodic sequences of random variables with infinitely divisible distributions.
As the tail probability of all non-Gaussian infinitely divisible laws are slower than Gaussians \cite{Amini2012,Amini14}, they exhibit more compressible behavior than Gaussians according to \cite{Amini11,Silva15}.

In this paper, we investigate the compressibility of continuous-domain stochastic processes using the quantization entropy.
As a starting point, we restrict our attention in the present work to innovation processes as the building blocks of more general stochastic processes.
We postpone the evaluation of the quantization entropy for more general processes to future works.

In information theory, entropy naturally arises as the proper measure of compression for Shannon's lossless source coding problem.
It also finds a geometrical interpretation as the volume of the typical sets.
As the definition of entropy ignores the amplitude distribution of the involved random variables and only takes into account the distribution (or concentration) of the probability measure, it provides a fundamentally different perspective of compressibility compared to the previously studied $k$-term approximation. 
More precisely, the entropy reveals a universal compressibility measure that is not limited to a specific measurement technique, while the $k$-term approximation is tightly linked with the linear sampling strategy of compressed sensing. Hence, it is not surprising that our results based on the quantization entropy show that impulsive Poisson innovation processes are by far more compressible than heavy-tailed $\alpha$-stable innovation processes; the previous studies on their $k$-term approximation sort them in the opposite order when the jump distribution in the Poisson innovation is not heavy-tailed. It is interesting to mention that the same ordering of impulsive Poisson and heavy-tailed α-stable innovation processes is observed in \cite{Fageot2017}.

The two main challenges that are addressed in this paper are
1) defining a quantization entropy for continuous-domain innovation processes that translates into operational lossless source coding,
and 2) evaluating such a measure for particular instances to allow for their comparison.
We recall that the \emph{differential entropy} of a random variable $X$ with continuous range is defined by finely quantizing $X$ with resolution $1/m$, followed by canceling a diverging term $\log(m)$ from the discrete entropy of its quantized version.
Obviously, we shall expect more elaborate diverging terms when dealing with continuous-domain processes.
More specifically, after appropriate quantization in time and amplitude with resolutions $1/n$ and $1/m$ respectively, we propose the one of the following two expressions to cancel out the diverging terms:
\begin{align} \label{eqn:Funda1}
	\frac{\mathcal{H}_{m,n}(X)}{\kappa(n)} - \log(m) -\zeta(n),
\end{align}
or
\begin{align} \label{eqn:Funda1--2}
	\mathcal{H}_{m,n}(X) - \zeta(n),
\end{align}
where $\mathcal{H}_{m,n}(X)$ is the discrete entropy of the time/amplitude quantized process, and  $\kappa(\cdot)$ and $\zeta(\cdot)$ are  univariate functions. 
We prove that depending on the white noise process, \eqref{eqn:Funda1} or \eqref{eqn:Funda1--2} give the correct way to cancel out the diverging terms for a wide class of white L\'evy noises with a suitable choice of $\kappa(n)$ and $\zeta(n)$. 
We may view $\kappa(n)/n$ as the analogue of entropy dimension for a white noise process.
A general expression for $\kappa(n)$ is given. However, while we prove existence of a function $\zeta(n)$, we are able to provide its explicit expression  only for special cases of stable and Poisson white noise processes.
While the term $\log(m)$ is reminiscent of the amplitude quantization effect, functions $\kappa(\cdot)$ and $\zeta(\cdot)$ quantify the compressibility of a given L\'evy process: the higher the  growth rate of $\kappa(n)$, the less compressible the process.  If two processes have the same growth rate of $\kappa(n)$, then, the $\zeta(n)$ with the smaller growth rate is the more compressible process.

Finally, $\epsilon$-metric entropy of \cite{Lorentz66} and $\epsilon$-entropy of \cite{Ponser73} could be considered as other alternatives for quantifying compressibility of stochastic processes. The interested reader may refer to \cite{Ihara93} for a general discussion of defining relative entropy for stochastic processes.

The organization of the paper is as follows. 
We begin by reviewing the preliminaries, including some of the basic definitions and results regarding differential entropy and white L\'evy noises in Section \ref{sec.Prelim}.
Next, we present our main contributions in Section \ref{sec.MainRes} wherein we study the quantization entropy of a wide class of white L\'evy noise processes. Furthermore, special attention is given to the stable and impulsive Poisson innovation processes.
To facilitate reading of the paper, we have separated the results from their proofs.
Some of the key lemmas for obtaining the final claims are explained in Section \ref{sec.UsefulLem}, while the main body of proofs are postponed to Section \ref{sec.Proofs}.

\section{Preliminaries} \label{sec.Prelim}
The goal of this paper is to define an quantization entropy for certain random processes.
Hence, we first review the concept of entropy for random variables.
For this purpose, we provide the definition of entropy for three main types of probability distributions.
This is followed by the definition of innovation processes (white L\'evy noises) and in particular, the stable and Poisson white noise processes.

All the logarithms in this paper are in base $\ue$.
In Table \ref{table:summary}, we summarize the notation used in this paper.

\begin{table*}
\begin{center}
\begin{small}
\begin{tabular}{|c|c|}
	\hline
	\textbf{Symbol} & \textbf{Definition} \\
	\hline\hline
	Sets & Caligraphic letters like $\mathcal{A},\mathcal{C},\mathcal{D},\ldots$ \\
	\hline
	Real and natural numbers & $\bR$, $\bN$ \\
	\hline
	Borel sets in $\bR$ & $\mathcal{B}(\bR)$ or just $\mathcal{B}$ \\
	\hline
	Random variables & Capital letters like: $A,X,Y,Z,\ldots$ \\
	\hline
	Probability density function (pdf) of (continous) $X$&  $p_X$ or $q_X$  (lower-case letter $p$)\\
	\hline
	Probability mass function (pmf) of  (discrete) $X$ & $P_X$  (upper-case letter $P$) \\
	\hline
	Cumulative distribution function (cdf) of $X$ & $F_X$ \\
	\hline
	$X_0$ for a given white noise $X(t)$ & $\int_0^1{X(t) \ud t}$; a random variable \\
	\hline
		$\frac{1}{n}$  & step-size for time quantization (time sampling) \\
	\hline
		$\frac{1}{m}$ & step-size for amplitude quantization \\
	\hline
	$X_i^{(n)}$ for a given white noise $X(t)$ & $\int_{\frac{i-1}{n}}^{\frac{i}{n}}{X(t) \ud t}$; a random variable \\
	\hline
	$X_1, X_2, \cdots $  & A sequence of random variables\\
&Distinguished from $X_i^{(n)}$ by lack of $(n)$ superscript.\\
	\hline
	$\quant{x}{m}$ & Quantization of the value of $x$ with step size $1/m$;\vspace{0.0cm}\\
\vspace{0.0cm}
	$\quant{X_i^{(n)}}{m}$ & Quantization of $X_i^{(n)}$ with step size $1/m$ 
 \\
	\hline
	Discrete random variable associated to  & $X_D$ \\
a discrete-continuous random variable $X$& \\
	\hline
	Continuous random variable associated to  & $X_c$ \\
a discrete-continuous random variable $X$ &\\
	\hline
	Entropy & $\mathrm{H}$ \\
	\hline
	Differential entropy & $\mathrm{h}$ \\
	\hline
$\mathcal{H}_{m,n}(X)$ & Quantization entropy associated to a
\\&  white noise $X(t)$ in \eqref{eqn:Hmn-ent-def} and \eqref{def.CompCrit} \\
\hline
	Characteristic function & $\widehat{p}$\\
	\hline
\end{tabular}
\end{small}
\end{center}
\caption{Summary of the notation.}
\label{table:summary}
\end{table*}

\subsection{Types of Random variables} \label{subsec:DistributionTypes}
The main types of random variables considered in this paper are discrete, continuous, and discrete-continuous, which are defined below.

\begin{definition}[Continuous Random Variables] \label{def.ACRV} \cite{Nair06}
	Let $\mathcal{B}$ be the Borel $\sigma$-field of $\mathbb{R}$ and let $X$ be a real-valued random variable with distribution (cdf) $F(x)$ that is measurable with respect to $\mathcal{B}$. 
	We call $X$ a \emph{continuous} random variable if its probability measure $\mu$, induced on $(\bR,\mathcal{B})$, is absolutely continuous with respect to the Lebesgue measure for $\mathcal{B}$ (i.e., $\mu(\mathcal{A})=0$ for all $\mathcal{A}\in\mathcal{B}$ with zero Lebesgue measure).
	We denote the set of all absolutely continuous distributions by $\mathcal{AC}$.
\end{definition}

It is a well-known result that $X$ being a continuous random variable is equivalent to the fact that the cdf $F(x)$ is an absolutely continuous function.
The Radon-Nikodym theorem implies that for each $X\sim F\in\mathcal{AC}$ there exists a $\mathcal{B}$-measurable function $p:\bR\to [0,\infty)$, such that for all $\mathcal{A}\in\mathcal{B}$ we have that
\begin{equation*}
	\Pr{X\in \mathcal{A}} = \int_\mathcal{A}{p(x) \ud x}.
\end{equation*}
The function $p$ is called the probability density function (pdf) of $X$.
The property $F\in\mathcal{AC}$ is alternatively written as $p\in\mathcal{AC}$. \cite[p. 21]{Ihara93}

\begin{definition}[Discrete Random Variable] \cite{Nair06} \label{def.ADRV}
	A random variable $X$ is called discrete if it takes values in a countable alphabet set $\mathcal{X}\subset\mathbb{R}$.
\end{definition}

\begin{definition}[Discrete-Continuous Random Variable] \label{def.DCRV} \cite{Nair06}
	A random variable $X$ is called discrete-continuous with parameters $\left(p_c,P_D,\Pr{X\in\mathcal{D}}\right)$ if there exists a countable set $\mathcal{D}$, a discrete probability mass function $P_D$, whose support is $\mathcal{D}$, and a pdf $p_c\in \mathcal{AC}$ such that
	\begin{equation*}	
		0<\Pr{X \in \mathcal{D}}<1,
	\end{equation*}
	as well as for every Borel set $\mathcal{A}$ in $\mathbb{R}$ we have that
	\if@twocolumn
   		\begin{align*}
		& \Pr{X \in \mathcal{A} | X \notin \mathcal{D}}
		= \int_\mathcal{A}{p_c(x) \ud x},\\
		& \Pr{X \in \mathcal{A} | X \in\mathcal{D}}
		= \sum_{x\in\mathcal{D}\cap\mathcal{A}}{P_D[x]}.		
	\end{align*}
	\else
   		\begin{align*}
		 &\Pr{X \in \mathcal{A} | X \notin \mathcal{D}}
		= \int_\mathcal{A}{p_c(x) \ud x},
		\qquad \Pr{X \in \mathcal{A} | X \in\mathcal{D}}
		= \sum_{x\in\mathcal{D}\cap\mathcal{A}}{P_D[x]}.		
		\end{align*}
	\fi	
	It is clear that we can write the pdf of a discrete-continuous random variable $X$, $p_X$, as follows:
	\begin{align*}
		p_X(x) =& \Pr{X \in\mathcal{D}} P_d(x)
		+\left( 1- \Pr{X \in \mathcal{D}} \right) p_c(x),
	\end{align*}
	where $p_c \in\mathcal{AC}$, and $P_d$ is the probability mass function of the discrete
part, which is a convex combination of Dirac\rq s delta functions.
\end{definition}

In this paper, the probability mass function of discrete random variables is denoted by capital letters like $P$ and $Q$, while the probability density function of continuous or discrete-continuous random variables is denoted by lowercase letters like $p$ and $q$.

\subsection{Definition of Entropy} \label{subsec:EntropyTypes}
We first define the entropy and differential entropy for discrete and continuous random variables, respectively.
Next, we define entropy dimension for discrete-continuous random variables via amplitude quantization.

\begin{definition}[Entropy and Differential Entropy] \label{def.Ent} \cite[Chapter 2]{Cover06}
	We define entropy $\H{X}$, or $\H{P_X}$ for a discrete random variable $X$ with probability mass function (pmf) $P_X[x]$ as
	\begin{equation*}
		\H{P_X}=\H{X} := \sum_x{P_X[x]\log\tfrac{1}{P_X[x]}},
	\end{equation*}
	if the summation converges.
	For a continuous random variable $X$ with pdf $p_X(x) \in \mathcal{AC}$, we define differential entropy $\h{X}$, or $\h{p_X}$ as
	\begin{equation*}
		\h{p_X}=\h{X}
		:= \int_{\bR}{p_X(x)\log \tfrac{1}{p_X(x)}\ud x},
	\end{equation*}
	if
	\begin{equation*}
		\int_{\bR}{p_X(x)\left|\log\tfrac{1}{p_X(x)}\right|\ud x}
		<\infty.
	\end{equation*}
	Similarly, for a discrete or continuous random vector $\bf{X}$, the entropy or the differential entropy is defined as
	\begin{align*}
		\H{\bf X} = \E{\log\frac{1}{P_{\bf X}[\bf X]}}, ~~~
		\text{or}
		~~~
		\h{\bf X} = \E{\log\frac{1}{p_{\bf X}(\bf X)}},
	\end{align*}
	where $P_{\bf X}[\bf x]$  ($p_{\bf X}(\bf x)$) is the pmf (pdf) of the random vector $\bf X$, respectively.
\end{definition}

In brief, we say that the (differential) entropy is well-defined if the corresponding (integral) summation is convergent to a finite value.

Next, we identify a class of absolutely continuous probability distributions, and show that differential entropy is uniformly convergent over this space under the total variation distance metric.

\begin{definition} {\cite{OurFirstPaper}} \label{def.(a,m,v)-AC}
	Given $\alpha,\ell,v\in(0,\infty)$, we define $(\alpha,\ell,v)\text{--}\mathcal{AC}$ to be the class of all $p\in\mathcal{AC}$ such that the corresponding density function $p:\bR\mapsto[0,\infty)$ satisfies
	\begin{align*}
		&\int_\bR{|x|^\alpha\, p(x)\ud x}\leq v, \\
		& {\rm ess}\sup_{x\in\bR} ~p(x) \leq \ell.
	\end{align*}
\end{definition}

\begin{theorem} {\cite{OurFirstPaper}} \label{thm.EntConv}
	The differential entropy of any distribution in $(\alpha,\ell,v)\text{--}\mathcal{AC}$ is well-defined, and for all $p_X,p_Y\in(\alpha,\ell,v)\text{--}\mathcal{AC}$ satisfying $\left\lVert p_X - p_Y \right\rVert_1\leq m$, we have that
	\begin{equation*}
		\left|\h{p_X}-\h{p_Y}\right|
		\leq c_1 D_{X,Y} + c_2 D_{X,Y} \log\tfrac{1}{D_{X,Y}},
	\end{equation*}
	where
	\begin{equation*}
		D_{X,Y}
		=\left\lVert p_X-p_Y \right\rVert_1
		:=\int_\bR{\left|p_X(x)-p_Y(x)\right|\ud x},
	\end{equation*}
	and
	\if@twocolumn
		\begin{align*}
		c_1 =&\tfrac{1}{\alpha}\left|\log(2\alpha v)\right| + |\log(m \ue)|
		+\log\tfrac{\ue}{2}\nonumber \\
		&+\log\left[2\Gamma\left(1+\tfrac{1}{\alpha}\right)\right]
		+ \tfrac{1}{\alpha}+1, \\
		c_2 =& \tfrac{1}{\alpha} + 2.
		\end{align*}
	\else
		\begin{align*}
		c_1 =&\tfrac{1}{\alpha}\left|\log(2\alpha v)\right| + |\log(\ell \ue)|
		+\log\tfrac{\ue}{2}\nonumber 
		+\log\left[2\Gamma\left(1+\tfrac{1}{\alpha}\right)\right]
		+ \tfrac{1}{\alpha}+1, \\
		c_2 =& \tfrac{1}{\alpha} + 2.
		\end{align*}
	\fi
\end{theorem}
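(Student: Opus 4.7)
The plan is to reduce the estimate to a pointwise modulus of continuity for $\phi(t)=-t\log t$ on $[0,\ell]$, combined with a core/tail decomposition of $\bR$ tuned to the moment constraint. I would begin by establishing well-definedness: compare $\h{p}$ with the entropy of the generalized-Gaussian reference density $q(x) = \frac{\alpha(\alpha v)^{-1/\alpha}}{2\Gamma(1/\alpha)}\exp(-|x|^\alpha/(\alpha v))$ (which integrates to $1$ and has $\alpha$-moment exactly $v$) using the elementary inequality $-p\log p \le -p\log q + q - p$. Integrating over $\bR$ yields $\h{p} \le \tfrac{1}{\alpha}\log(2\alpha v) + \log\Gamma(1+1/\alpha) + 1/\alpha$, while the lower bound $\h{p} \ge -\log\ell$ is immediate from $p\le\ell$; both are uniform in the class, so the differential entropy is finite throughout.

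Next I would prove the pointwise estimate: for $a,b\in[0,\ell]$ with $d:=|a-b|$ sufficiently small,
\begin{equation*}
|\phi(a)-\phi(b)| \le d\log\tfrac{1}{d} + (\log(\ell\ue)+1)\,d,
\end{equation*}
via a case analysis on whether the mean-value intermediate $\xi$ lies above or below $d$: when $\xi\ge d$ the derivative $|\phi'(\xi)|=|\log\xi+1|$ is controlled by $|\log d|+\log\ell+1$; when $\xi<d$ both $a,b\in[0,2d]$ and the monotonicity of $\phi$ near zero gives $|\phi(a)-\phi(b)| \le \phi(2d) = 2d\log(1/(2d))$. Integrating this bound then requires splitting at a radius $R>0$: on the core $[-R,R]$, Jensen's inequality (equivalently the maximum-entropy comparison of $\delta/D$ with the uniform density on $[-R,R]$, where $\delta=|p_X-p_Y|$ and $D=\|\delta\|_1$) gives $\int_{|x|\le R}\delta\log(1/\delta)\,\ud x \le D\log(2R/D)$, while on the tail I would apply the Gibbs-type inequality $\int\delta\log(1/\delta) \le \int\delta\log(1/q) + \int q - \int \delta$ with the same reference $q$; the $|x|^\alpha$ term in $\log(1/q)$ is controlled by $\tfrac{1}{\alpha v}\int_{|x|>R}|x|^\alpha(p_X+p_Y)\,\ud x \le 2/\alpha$, and $\int_{|x|>R}q$ decays super-polynomially in $R$.

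The remaining step is to optimize $R$: choosing $R^\alpha \sim v/D$ converts each $\log R$ into $(1/\alpha)\log(1/D) + O(1)$, so the core contribution $D\log(2R/D)$ produces $(1+1/\alpha)D\log(1/D)$, which combined with the pointwise $D\log(1/d)$ factor yields the coefficient $c_2 = 1/\alpha + 2$; all remaining additive constants aggregate into $c_1$. The hardest part will be the bookkeeping: the structural estimate $|\h{p_X}-\h{p_Y}| \le c_1 D + c_2 D\log(1/D)$ falls out relatively directly once the pointwise estimate and tail control are in place, but extracting exactly the stated constants requires careful case work at the boundaries $d\gtrless 1/\ue$ and $p\gtrless 1$, as well as a precise choice of the truncation radius to match the claimed expressions for $c_1$ and $c_2$.
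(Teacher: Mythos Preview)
The paper does not prove this theorem; it is quoted verbatim from the companion work \cite{OurFirstPaper}, so there is no in-paper argument to compare against. Your overall strategy---a pointwise modulus of continuity for $\phi(t)=-t\log t$, reducing the problem to control of $\int_\bR \delta\log(1/\delta)\,\ud x$ with $\delta=|p_X-p_Y|$, then exploiting the $\alpha$-moment and sup-norm constraints---is the natural one and is almost certainly what the cited proof does.

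There is, however, a real gap in your tail step. After Gibbs you bound the moment contribution by
\[
\frac{1}{\alpha v}\int_{|x|>R}|x|^\alpha(p_X+p_Y)\,\ud x\;\le\;\frac{2}{\alpha},
\]
and then write that ``all remaining additive constants aggregate into $c_1$''. But $c_1$ multiplies $D$, while $2/\alpha$ is an absolute constant independent of $D$ and $R$; it cannot be absorbed into $c_1 D$. No choice of $R$ fixes this: the $\alpha$-moment of $p_X+p_Y$ can sit entirely in $\{|x|>R\}$, so $\int_{|x|>R}|x|^\alpha\delta$ is only $\le 2v$, not $o_R(1)$. As written, your bound degenerates to $|\h{p_X}-\h{p_Y}|\le 2/\alpha+\cdots$, which is useless for small $D$.

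The repair is to drop the core/tail split for $\int\delta\log(1/\delta)$ altogether and instead apply your own first-paragraph idea to $\delta/D$: this is a genuine probability density with $\alpha$-moment at most $2v/D$, so the generalized-Gaussian maximum-entropy bound gives
\[
\h{\delta/D}\;\le\;\tfrac{1}{\alpha}\log\!\big(2\alpha v/D\big)+\log\!\big[2\Gamma(1+\tfrac{1}{\alpha})\big]+\tfrac{1}{\alpha}.
\]
Since $\int\delta\log(1/\delta)=D\,\h{\delta/D}+D\log(1/D)$, this yields $\int\delta\log(1/\delta)\le (1+1/\alpha)\,D\log(1/D)+O(D)$ directly, with the $O(D)$ constants matching the $\tfrac{1}{\alpha}|\log(2\alpha v)|$, $\log[2\Gamma(1+1/\alpha)]$, and $1/\alpha$ pieces of the stated $c_1$. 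Combined with the pointwise estimate (which supplies the $|\log(\ell\ue)|$ contribution) you recover the claimed form; no truncation radius $R$ is needed at this stage.
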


Now, we define the quantization of a random variable in amplitude domain.
\begin{definition}[Quantization of Random Variables] \label{def.AmpQuant}
	The quantized version of a random variable $X$ with the step size $1/m$ (for $m >0$) is defined as
	\begin{equation*}
		\quant{X}{m} = \tfrac{1}{m} \left\lfloor \tfrac{1}{2} + mX \right\rfloor.
	\end{equation*}
	Thus, $\quant{X}{m}$ has the pmf $P_{X;m}$ given by
	\begin{equation*}
		P_{X;m}[i]
		:=\Pr{\quant{X}{m}=\tfrac{i}{m}}
		=\int_{\big[\frac{i-0.5}{m}, \frac{i+0.5}{m} \big)}{p_X(x) \ud x}.
	\end{equation*}
	Also, we define a continuous random variable $\widetilde{X}_m$ with pdf $q_{X;m} \in \mathcal{AC}$  as follows 
	\begin{equation*}
		q_{X;m}(x) =m \, P_{X;m}[i],
		\qquad\quad x\in \big[\tfrac{i-0.5}{m}, \tfrac{i+0.5}{m} \big).
	\end{equation*}
\end{definition}

We state a useful lemma about the entropy of quantized random variables here:
\begin{lemma} \cite{Renyi59} \label{lmm:H([X])-ln m=h(q)}
	Let $X\sim p_X(x)$ be a continuous random variable.
	Then,
	\begin{equation*}
		\H{\quant{X}{m}} - \log m = \h{\widetilde{X}_m},
	\end{equation*}
	where $\quant{X}{m}$ is the quantization of $X$ with step size $1/m$, and $\widetilde{X}_m\sim q_{X;m}$ is the random variable defined in Definition \ref{def.AmpQuant}.
\end{lemma}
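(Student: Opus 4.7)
The plan is to establish the identity by direct computation, unpacking the definition of differential entropy of $\widetilde{X}_m$ and using the fact that its density $q_{X;m}$ is piecewise constant on the quantization bins. There is no serious obstacle here; the result is essentially a bookkeeping exercise tying together Definitions \ref{def.Ent} and \ref{def.AmpQuant}.

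First, I would write
\begin{equation*}
	\h{\widetilde{X}_m} = \int_{\bR} q_{X;m}(x)\log\tfrac{1}{q_{X;m}(x)}\ud x,
\end{equation*}
and partition the integral over the countable family of bins $\mathcal{I}_i := [\tfrac{i-0.5}{m},\tfrac{i+0.5}{m})$ indexed by $i\in\bZ$. On each $\mathcal{I}_i$ the density $q_{X;m}$ takes the constant value $m\,P_{X;m}[i]$, and the bin has Lebesgue measure $1/m$.

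Next, I would substitute this constant value inside the integral, so each bin contributes
\begin{equation*}
	\int_{\mathcal{I}_i} m\,P_{X;m}[i]\log\tfrac{1}{m\,P_{X;m}[i]}\ud x
	= \tfrac{1}{m}\cdot m\,P_{X;m}[i]\log\tfrac{1}{m\,P_{X;m}[i]}
	= P_{X;m}[i]\log\tfrac{1}{m\,P_{X;m}[i]}.
\end{equation*}
Summing over $i$ and splitting the logarithm yields
\begin{equation*}
	\h{\widetilde{X}_m}
	= \sum_i P_{X;m}[i]\log\tfrac{1}{P_{X;m}[i]} - \log m\sum_i P_{X;m}[i]
	= \H{\quant{X}{m}} - \log m,
\end{equation*}
where the second equality uses $\sum_i P_{X;m}[i]=1$ and the definition of $\H{\quant{X}{m}}$ via its pmf $P_{X;m}$.

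The only minor subtlety worth flagging is the convergence of the series: one should observe that since $\widetilde{X}_m$ has density bounded by $m$ times the maximum bin probability and inherits moment properties from $X$ (so that $\h{\widetilde{X}_m}$ is well-defined in the extended sense), the rearrangement between the integral and the infinite sum is legitimate, and the two sides are simultaneously finite or simultaneously infinite. Hence the equality $\H{\quant{X}{m}}-\log m=\h{\widetilde{X}_m}$ holds.
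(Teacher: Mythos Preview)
Your proof is correct and follows exactly the computation the paper itself sketches (in the two-column branch near \eqref{eqn:PoissonH2.3.4}): write the differential entropy of $\widetilde{X}_m$ as an integral, use that $q_{X;m}$ is constant equal to $mP_{X;m}[i]$ on each bin of width $1/m$, and split the logarithm. The lemma is cited from R\'enyi without a standalone proof in the paper, so your direct verification is entirely adequate.
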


The following theorem, proved in \cite{Renyi59}, measures the entropy of quantized discrete-continuous random variables by defining \emph{entropy dimension}.
\begin{theorem} \label{thm.DCEntDim} \cite{Renyi59} 
	Let $X$ be a discrete-continuous random variable defined by the triplet $(p_c, P_D, \Pr{X\in \mathcal{D}})$.
	Let $X_D$ be a discrete random variable with pmf $P_D$ and $X_c$ be a continuous random variable with pdf $p_c$. 
	If $\H{\quant{X}{1}}<\infty$, $\H{X_D}<\infty$, and $\h{X_c}$ is well-defined and finite (i.e., $\E{|\log(p_{X_c})|}<\infty$),
	where $\quant{X}{1}$ is the quantized version of $X$ with step size $1$, then the entropy of quantized $X$ with step size $1/m$ can be expressed as
	\if@twocolumn
		\begin{align*}
		\H{\quant{X}{m}}
		=&d\log m + d\h{X_c}+(1-d)\H{X_D}+\mathrm{H}_2(d) \\
		&+ o_m(1),
		\end{align*}
	\else
		\begin{align*}
		\H{\quant{X}{m}}
		=&d\log m + d\h{X_c}+(1-d)\H{X_D}+\mathrm{H}_2(d) 
		+ o_m(1),
		\end{align*}
	\fi
	where $o_m(1)$ vanishes as $m$ tends to $\infty$, and
	\begin{align*}
		d =\;& \Pr{X\notin\mathcal{D}}, \\
		h =\;& d\h{X_c}+(1-d)\H{X_D}+\mathrm{H}_2(d), \\
		\mathrm{H}_2(d) \triangleq\;& d\log{1/d}+(1-d)\log[1/(1-d)].
	\end{align*}
	The variable $d$ is called the \emph{Entropy Dimension} of $X$.
	The theorem is true for the discrete and continuous case with $d=\h{X_c}=\mathrm{H}_2(d)=0$ and $d=1,\H{X_D}=\mathrm{H}_2(d)=0$, respectively.
\end{theorem}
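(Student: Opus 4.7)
The plan is to introduce the Bernoulli indicator $Z := \mathbf{1}\{X \in \mathcal{D}\}$, which satisfies $\Pr{Z=0}=d$, and apply the chain rule
\begin{equation*}
\H{\quant{X}{m}} \;=\; \mathrm{H}_2(d) \;+\; \H{\quant{X}{m} \,|\, Z} \;-\; \H{Z \,|\, \quant{X}{m}}.
\end{equation*}
Conditional on $\{Z=1\}$ (respectively $\{Z=0\}$), $X$ is distributed as $X_c$ with density $p_c$ (respectively as $X_D$ with pmf $P_D$), so $\H{\quant{X}{m}\,|\,Z} = d\,\H{\quant{X_c}{m}} + (1-d)\,\H{\quant{X_D}{m}}$. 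The theorem therefore reduces to three asymptotic claims as $m\to\infty$: (i) $\H{Z\,|\,\quant{X}{m}}\to 0$; (ii) $\H{\quant{X_c}{m}} - \log m \to \h{X_c}$; and (iii) $\H{\quant{X_D}{m}}\to \H{X_D}$.

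For (i), I would enumerate $\mathcal{D}=\{x_1,x_2,\ldots\}$ with atomic weights $p_k:=P_D[x_k]$, truncate to a finite prefix $\mathcal{D}_N$ with $\sum_{k>N}p_k<\varepsilon$, and note that for $m$ large enough each $x_k\in\mathcal{D}_N$ occupies its own bin $I_k(m)$. In that bin the fixed atomic mass $(1-d)p_k$ dominates the vanishing continuous mass $d\int_{I_k(m)}p_c(x)\,\mathrm{d}x$, so $\Pr{Z=1 \,|\, \quant{X}{m} = x_k} \to 0$ and the associated binary-entropy terms vanish; the residual bins coming from $\mathcal{D}\setminus\mathcal{D}_N$ contribute $O(\varepsilon)$ thanks to the $\log 2$ bound on $\mathrm{H}_2$. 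For (ii), Lemma \ref{lmm:H([X])-ln m=h(q)} rewrites the left-hand side as $\h{\widetilde{X}_{c,m}}$, where $\widetilde{X}_{c,m}$ has density $q_{X_c;m}$; martingale convergence gives $q_{X_c;m}\to p_c$ in $L^1$, and the hypothesis $\E{|\log p_{X_c}|}<\infty$ lets me trap the approximants inside a common $(\alpha,\ell,v)\text{--}\mathcal{AC}$ class so that Theorem \ref{thm.EntConv} forces $\h{\widetilde{X}_{c,m}}\to \h{X_c}$. For (iii), an analogous $\varepsilon$-truncation of $\mathcal{D}$ combined with bin-isolation for the finite prefix, with the tail controlled by $\H{X_D}<\infty$, yields the required convergence.

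Assembling these three limits into the chain-rule identity gives exactly
\begin{equation*}
\H{\quant{X}{m}} \;=\; d\log m + d\,\h{X_c} + (1-d)\,\H{X_D} + \mathrm{H}_2(d) + o_m(1),
\end{equation*}
with the purely-discrete and purely-continuous cases falling out by setting $d=0$ or $d=1$. I expect the main technical hurdle to be step (ii): although $q_{X_c;m}\to p_c$ in $L^1$ is classical, the map $p\mapsto\h{p}$ is \emph{not} continuous in general, so the hypothesis $\E{|\log p_{X_c}|}<\infty$ must be exploited---together with the moment and essential-supremum control implicit in $\H{\quant{X}{1}}<\infty$---to place all approximants in a common $(\alpha,\ell,v)\text{--}\mathcal{AC}$ class on which Theorem \ref{thm.EntConv} supplies quantitative continuity. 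A secondary subtlety is that $\mathcal{D}$ may have accumulation points, which is precisely why the $\varepsilon$-truncation to a finite prefix $\mathcal{D}_N$ is indispensable in (i) and (iii): no uniform isolation of all atoms is available.
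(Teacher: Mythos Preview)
The paper does not prove Theorem~\ref{thm.DCEntDim}; it is quoted from R\'enyi. Your chain-rule decomposition via the indicator $Z$ is exactly R\'enyi's strategy, and steps (i) and (iii) are sound in outline (modulo a harmless labeling slip: $Z=1$ means $X\in\mathcal{D}$, so conditional on $Z=1$ you get $X_D$, not $X_c$; the displayed formula that follows is nonetheless correct). The genuine gap is step~(ii). You want to invoke Theorem~\ref{thm.EntConv}, which requires $p_c$ and all the $q_{X_c;m}$ to lie in a common $(\alpha,\ell,v)\text{--}\mathcal{AC}$ class, i.e.\ to have a uniform $\alpha$-moment bound \emph{and} a uniform $L^\infty$ bound. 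Neither is furnished by the hypotheses of Theorem~\ref{thm.DCEntDim}. The condition $\H{\quant{X}{1}}<\infty$ yields no moment whatsoever: with $\Pr{\quant{X_c}{1}=n}\propto 1/\bigl(n(\log n)^3\bigr)$ for $n\geq 2$ one has finite quantized entropy but $\E{|X_c|^\alpha}=\infty$ for every $\alpha>0$. And nothing in the hypotheses bounds $\mathrm{ess\,sup}\,p_c$; for instance $p_c(x)\sim |x|^{-1/2}$ near the origin is compatible with $\E{|\log p_c(X_c)|}<\infty$. So Theorem~\ref{thm.EntConv} is simply not applicable, and the claim that ``$\H{\quant{X}{1}}<\infty$'' supplies ``moment and essential-supremum control'' is false.

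R\'enyi's actual argument for (ii)---which the paper reproduces almost verbatim in its proof of Lemma~\ref{lmm.DifEntShft}---sidesteps both obstructions. One splits $\h{q_{X_c;m}}-\h{p_c}$ into an integral over $[-l,l]$ and two tail integrals over $\{|x|>l\}$. On the compact core, the mean-value theorem gives pointwise (in fact locally uniform) convergence $q_{X_c;m}\to p_c$, and continuity of $t\mapsto t\log(1/t)$ does the rest. The $p_c$-tail is controlled directly by the hypothesis $\int p_c|\log p_c|<\infty$. The $q_{X_c;m}$-tail is controlled \emph{uniformly in $m$} by $\H{\quant{X_c}{1}}<\infty$, via a grouping (log-sum) inequality that compares the fine partition to the unit partition. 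No moment bound and no density bound are needed, which is precisely why R\'enyi's hypotheses are as weak as they are; you should replace your appeal to Theorem~\ref{thm.EntConv} with this direct tail-splitting argument.
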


\begin{remark} \cite{Renyi59} \label{rem.CEntDim}
	According to Theorem \ref{thm.DCEntDim}, if a random variable $X$ is continuous with pdf $p_c$, then, the entropy of quantized $X$ with step size $1/m$ is
	\begin{equation*}
		\H{\quant{X}{m}}=\log m + \h{p_c} + o_m(1),
	\end{equation*}
	provided that
	\begin{align*}
		\H{\quant{X}{1}}<\infty,
		\qquad\int_\bR{p_c(x) \left| \log \tfrac{1}{p_c(x)} \right| \ud x} < \infty,
	\end{align*}
	where $\h{p_c}$ is the differential entropy of $X$.
\end{remark}

\subsection{White L\'evy Noises} \label{subsec.WhiteLevyNoise}
To introduce the notion of white L\'evy noises, we first define the concept of a \emph{generalized function} that generalizes ordinary functions to include distributions such as Dirac's delta function and its derivatives \cite{Unser14}.

\begin{definition}[Schwartz Space] \cite[p. 30]{Unser14} \label{def:S Space}
	The \emph{Schwartz} space, denoted as $\mathcal{S}(\bR)$, consists of infinitely differentiable functions $\phi:\mathbb{R}\mapsto\bR$, for which
	\begin{equation*}
		\sup_{t\in\bR}{\left|t^m \frac{d^n}{dt^n}{\phi(t)}\right|}<\infty,\qquad \forall m,n\in\bN.
	\end{equation*}
	In other words, $\mathcal{S}(\bR)$ is the class of smooth functions that, together with all of their derivatives, decay faster than the inverse of any polynomial at infinity.
\end{definition}

The space of \emph{tempered distributions} or alternatively, the continuous dual of  the Schwartz space denoted by $\mathcal{S'}(\bR)$, is the set of all continuous linear mappings from $\mathcal{S}(\bR)$ into $\bR$ (also known as generalized functions). In other words, for all $x\in \mathcal{S'}(\bR)$ and $\varphi\in \mathcal{S}(\bR)$, $x(\varphi)$ is a well-defined real number. Due to the linearity of  the mapping with respect to $\varphi$, the following notations are interchangeably used:
\begin{equation*}
	x(\varphi)=\InProd{x}{\varphi}=\int_\bR{x(t) \varphi(t) \ud t},
\end{equation*}
where $\InProd{x}{\varphi}$, $x(t)$ and the integral on the right-hand side are merely notations.
This formalism is useful because it allows for a precise mathematical definition of generalized functions, such as impulse function that are common in engineering textbooks. 

Just as generalized functions extend ordinary functions, generalized stochastic processes extend ordinary stochastic processes. In particular, a generalized stochastic process is  a probability measure on $\mathcal{S'}(\mathbb{R})$. Further, observing a generalized stochastic process $X(\cdot)$ is done by applying its realizations to Schwartz functions; \emph{i.e.}, for a given $\varphi\in\mathcal{S}(\mathbb{R})$, $X_{\varphi}=\InProd{X}{\varphi}=\int_\bR{X(t) \varphi(t) \ud t}$ represents a real-valued random variable with
\begin{align*}
	\mathop{\text{Pr}\{X_{\varphi}\in \mathcal{I}\} }_{\mathcal{I}\in\mathcal{B}({\mathbb{R}})}
	=\mu\big(\{x\in\mathcal{S'}(\mathbb{R}) \,\big|\,  \InProd{x}{\varphi}\in \mathcal{I} \}\big),
\end{align*}
where $\mu(\cdot)$ stands for the probability measure on $\mathcal{S'}(\mathbb{R})$ that defines the generalized stochastic process.

The white L\'evy noises are a subclass of generalized stochastic processes with certain properties. Before we introduce them, we define L\'evy exponents:
\begin{definition}[L\'evy Exponent] \cite[p. 59]{Unser14} \label{def.LevyExp}
	A function $f:\mathbb{R}\to\mathbb{C}$ is called a L\'evy exponent if
	\begin{enumerate}
		\item $f(0)=0$,
		\item $f$ is continuous at $0$,
		\item $\forall n\in\bN, \forall \boldsymbol{\omega}\in \bR^n$, and  $\forall \mathbf{a}\in\mathbb{C}^n$  satisfying $\sum_{i=1}^{n}{a_i}=0$,  we have that
		\begin{equation*}
			\sum_{i,j=1}^n{a_i a_j^* f(\omega_i - \omega_j)}\geq 0.
		\end{equation*}
	\end{enumerate}
\end{definition}

The following theorem provides the algebraic characterization of L\'evy exponents.
\begin{theorem}[L\'evy-Khintchin] \cite[p. 61]{Unser14} \label{thm.LevyKhintchin}
	A function $f(\omega)$ is a L\'evy exponent if and only if it can be written as
	\begin{equation*}
		-\frac{\sigma^2}{2}\omega^2 + \uj\mu \omega
		+ \int_{\bR\setminus\{0\}}
		{\left(\ue^{\uj\omega a}-1 -\uj\omega a\mathds{1}_{(-1,1)}(a) \right) \ud V(a)},
	\end{equation*}
	where $\sigma,\mu\in\bR$ are arbitrary constants.
	The function $\mathds{1}_{(-1,1)}(a)$ is an indicator function which is $1$ when $|a|<1$ and is $0$ otherwise. 
The function $V(x)$ is a non-negative non-decreasing function  that is continuous at $a = 0$ and satisfies	
	\begin{align*}
		& \lim_{a\to-\infty}{V(a)}=0,\\
		&\int_{\bR\setminus\{0\}}{\min{\lbrace 1,a^2 \rbrace} \ud V(a)} < \infty.
	\end{align*}
\end{theorem}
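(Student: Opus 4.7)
The plan is to prove the two directions of the equivalence separately. For the sufficiency (\emph{if}) direction, I would verify that any function of the stated form satisfies the three conditions of Definition \ref{def.LevyExp}. Conditions (1) and (2) are immediate from direct substitution and from dominated convergence applied to the integral, using the pointwise estimate $|\ue^{\uj\omega a}-1-\uj\omega a\mathds{1}_{(-1,1)}(a)|\leq C\min\{1,\omega^2 a^2\}$, which is $\ud V$-integrable by the hypothesis on $V$. The conditional positive-semidefiniteness in (3) is handled term by term: the constant and linear contributions collapse to zero whenever $\sum_i a_i=0$, since they factor through $\sum_i a_i$ or $\bigl|\sum_i a_i\bigr|^2=0$; the integral reduces, after expanding $\ue^{\uj(\omega_i-\omega_j)a}=\ue^{\uj\omega_i a}\bigl(\ue^{\uj\omega_j a}\bigr)^*$, to $\int \bigl|\sum_i a_i\ue^{\uj\omega_i a}\bigr|^2\ud V(a)\geq 0$; and the Gaussian part $-\frac{\sigma^2}{2}(\omega_i-\omega_j)^2$ is conditionally positive-semidefinite either by direct algebraic reduction under $\sum a_i=0$ or as a scaling limit of the integral part.

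The \emph{only-if} direction is the substantive half, and my plan is to bridge to the classical Lévy-Khintchin representation for characteristic functions of infinitely divisible laws via Bochner's theorem. Concretely, I would first establish that whenever $f$ satisfies Definition \ref{def.LevyExp}, the function $\omega\mapsto\ue^{tf(\omega)}$ is, for every $t\geq 0$, positive-semidefinite, continuous at $\omega=0$, and equal to $1$ at the origin. Bochner's theorem then identifies $\ue^{tf(\omega)}$ as the characteristic function of a probability measure $\mu_t$, and the identity $\ue^{(s+t)f}=\ue^{sf}\ue^{tf}$ makes $\{\mu_t\}_{t\geq 0}$ an infinitely divisible convolution semigroup with $\mu_{s+t}=\mu_s*\mu_t$. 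Invoking the classical Lévy-Khintchin representation for such semigroups yields the stated integral form for $tf(\omega)$, and dividing by $t$ and specializing to $t=1$ gives the desired expression for $f(\omega)$.

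The principal obstacle is the Schoenberg-type implication that conditional positive-semidefiniteness of $f$ together with $f(0)=0$ and continuity at $0$ upgrades to genuine positive-semidefiniteness of $\ue^{tf}$. I would address this by the standard device of augmenting a given configuration $(\omega_1,\ldots,\omega_n;a_1,\ldots,a_n)$ with an auxiliary frequency $\omega_0$ and coefficient $a_0:=-\sum_{i\geq 1}a_i$ so that the enlarged vector has zero sum, applying the hypothesis, and passing to the limit $\omega_0\to 0$ using continuity of $f$ at the origin; the exponentiation step then follows from $\ue^{tf}=\lim_{n\to\infty}(1+tf/n)^n$ combined with the fact that Hadamard products preserve positive-semidefiniteness. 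Once this technical bridge is in place, the remainder of the argument is the standard derivation of the Lévy-Khintchin formula from the semigroup of characteristic functions and can be imported from classical references on infinitely divisible distributions rather than reproduced in full.
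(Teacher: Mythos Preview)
The paper does not prove this theorem at all: it is stated as a classical result cited from \cite[p.~61]{Unser14}, so there is no ``paper's own proof'' to compare against. Your outline is a standard and correct route to the L\'evy--Khintchin representation (the Schoenberg argument lifting conditional positive-semidefiniteness of $f$ to positive-semidefiniteness of $\ue^{tf}$, then Bochner, then the classical representation for infinitely divisible laws), but it goes well beyond what the paper actually does, which is simply to quote the result.
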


\begin{definition}[White L\'evy Noises] \cite[Sec. 4.4]{Unser14} \label{def.WhiteNoise}
	A generalized stochastic process $X$ is called a white L\'evy noise, if
	\begin{equation*}
		\E{\ue^{\uj\InProd{X}{\varphi}}}
		=\exp\bigg(\int_{\bR}{f\big(\varphi(t)\big) \ud t}\bigg),
		~~~ \forall \varphi\in \mathcal{S}(\bR),
	\end{equation*}
	where $\InProd{X}{\varphi}$ is the output of the linear operator $\varphi$ under $X$, and $f$ is a valid L\'evy exponent 
	for which 
\begin{align*}
\int_{\bR\setminus [-1,1]} |a|^{\theta}\ud V(a)<\infty,
\end{align*}	
	for some $\theta>0$.
\end{definition}

The desired properties of a white noise could be inferred from Definition \ref{def.WhiteNoise}:
\begin{lemma} \cite{Amini14} \label{lmm:Seperatedindependent}
	A white L\'evy noise $X$ is a stationary process in the sense that $\InProd{X}{\varphi_1}$ and $\InProd{X}{\varphi_2}$ have the same probability law when $\varphi_2(t) = \varphi_1(t-t_0)$.
	In addition, the independent atom property of white noise could be expressed as the statistical independence of $\InProd{X}{\varphi_1}$ and $\InProd{X}{\varphi_2}$ when $\varphi_1$ and $\varphi_2$ have disjoint supports ($\varphi_1(t)\varphi_2(t)\equiv 0$).
\end{lemma}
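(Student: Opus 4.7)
The plan is to base both claims on the characteristic function identity in Definition~\ref{def.WhiteNoise} and on the uniqueness of distributions via characteristic functions. For any scalar $\omega\in\bR$ and $\varphi\in\mathcal{S}(\bR)$, replacing $\varphi$ with $\omega\varphi$ in the defining identity gives
\begin{equation*}
\E{\ue^{\uj\omega\InProd{X}{\varphi}}}
=\exp\!\left(\int_{\bR} f\bigl(\omega\varphi(t)\bigr)\,\ud t\right),
\end{equation*}
which is the characteristic function of the real random variable $\InProd{X}{\varphi}$. Similarly, for any $\alpha_1,\alpha_2\in\bR$ and $\varphi_1,\varphi_2\in\mathcal{S}(\bR)$, linearity of the pairing $\InProd{X}{\cdot}$ yields
\begin{equation*}
\E{\ue^{\uj\alpha_1\InProd{X}{\varphi_1}+\uj\alpha_2\InProd{X}{\varphi_2}}}
=\exp\!\left(\int_{\bR} f\bigl(\alpha_1\varphi_1(t)+\alpha_2\varphi_2(t)\bigr)\,\ud t\right),
\end{equation*}
which is the joint characteristic function of the pair.

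For stationarity, I would write $\varphi_2(t)=\varphi_1(t-t_0)$ and perform the change of variables $s=t-t_0$ in the integral on the right-hand side above (with $\varphi_2$ in place of $\varphi$). Since the Lebesgue measure is translation invariant, $\int_\bR f(\omega\varphi_2(t))\,\ud t=\int_\bR f(\omega\varphi_1(s))\,\ud s$ for every $\omega\in\bR$. Hence the characteristic functions of $\InProd{X}{\varphi_1}$ and $\InProd{X}{\varphi_2}$ coincide on all of $\bR$, which by the uniqueness theorem for characteristic functions forces the two random variables to share the same probability law.

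For the independence claim, I would exploit the disjoint-support hypothesis $\varphi_1(t)\varphi_2(t)\equiv 0$, which ensures that for each $t\in\bR$ at least one of $\varphi_1(t),\varphi_2(t)$ vanishes. Consequently, at each $t$ the argument of $f$ in the joint characteristic function collapses to either $\alpha_1\varphi_1(t)$ or $\alpha_2\varphi_2(t)$. Using $f(0)=0$ from Definition~\ref{def.LevyExp}, the integrand splits cleanly and
\begin{equation*}
\int_{\bR} f\bigl(\alpha_1\varphi_1(t)+\alpha_2\varphi_2(t)\bigr)\,\ud t
=\int_{\bR} f\bigl(\alpha_1\varphi_1(t)\bigr)\,\ud t
+\int_{\bR} f\bigl(\alpha_2\varphi_2(t)\bigr)\,\ud t.
\end{equation*}
Exponentiating, the joint characteristic function factors as the product of the marginals for every $(\alpha_1,\alpha_2)\in\bR^2$, and the uniqueness theorem (applied to the joint distribution on $\bR^2$) yields statistical independence.

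I do not anticipate any serious obstacle; the only mildly delicate point is that ``disjoint supports'' must be read in the pointwise sense $\varphi_1\varphi_2\equiv 0$ (as the lemma already spells out), because Schwartz functions are analytic-like smooth but can genuinely have disjoint supports (e.g.\ supported on $(-\infty,0]$ and $[1,\infty)$), so the pointwise collapse of $f\bigl(\alpha_1\varphi_1(t)+\alpha_2\varphi_2(t)\bigr)$ is legitimate. Everything else is a routine combination of the defining characteristic-function formula, change of variables, and the property $f(0)=0$.
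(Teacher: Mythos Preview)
Your argument is correct and is the standard one: the characteristic-function identity of Definition~\ref{def.WhiteNoise}, translation invariance of Lebesgue measure for stationarity, and the pointwise splitting of $f(\alpha_1\varphi_1+\alpha_2\varphi_2)$ via $f(0)=0$ for independence. The paper itself does not prove this lemma; it is quoted as a known result from \cite{Amini14}, so there is no in-paper proof to compare against, but your write-up is exactly the expected derivation.
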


Next, we explain two important types of white L\'evy noises, namely stable and impulsive Poisson, that are studied in this paper.

\begin{definition}[Stable random variables] \cite[p. 5]{Samor94} \label{def.StableRV}
	A random variable $X$ is stable with parameters $(\alpha,\beta,\sigma,\mu)$ if and only if its characteristic function $\widehat{p}(\omega)$ is given by
	\begin{equation*}
		\widehat{p}(\omega):=\E{\ue^{\uj\omega X}}=\ue^{f(\omega)},
	\end{equation*}
	where $f:\bR\mapsto\mathbb{C}$ is 
	\begin{align} \label{eq.StableLevyExp}
		&f(\omega)
		= \uj\omega\mu -\sigma^\alpha |\omega|^\alpha \left(1-\uj\beta\, \sgn{\omega}\Phi(\omega)\right), \\
		&\Phi(\omega)=\left\lbrace
		\begin{array}{ll}
			 \tan\frac{\pi\alpha}{2},  & \alpha\neq 1, \\
			-\frac{2}{\pi}\log{|\omega|}, & \alpha =1,
		\end{array}\right. \nonumber
	\end{align}
	 $\alpha\in (0,2]$ is the stability coefficient, $\beta\in [-1,1]$ is the skewness coefficient, $\sigma\in(0,\infty)$ is the scale coefficient, and $\mu\in\bR$ is the shift coefficient.
	In addition, function $\sgn{x}:\bR\mapsto \{-1,0,1\}$ is the \emph{sign} function defined as
	\begin{equation*}
		\sgn{x}=\left\lbrace
		\begin{array}{ll}
			1&x>0\\
			0&x=0\\
			-1&x<0
		\end{array}\right..
	\end{equation*}
\end{definition}

\begin{definition}[Stable white noise] {\cite[p. 87]{Sato99}} \label{def.StableProcess}
	The random process $X$ is a stable innovation process with parameters $(\alpha,\beta,\sigma,\mu)$ if $X$ is a white L\'evy noise with the L\'evy exponent $f(\omega)$ defined in \eqref{eq.StableLevyExp}.
\end{definition}

\begin{definition}[Impulsive Poisson white noise] \cite[p. 64]{Unser14} \label{def.Poisson}
	When the L\'evy exponent $f(\omega)$ of a white L\'evy noise satisfies 
	\begin{equation*}
		f(\omega) =
		\lambda \int_{\bR\setminus{0}}{\left(\ue^{\uj a \omega}-1 \right) \ud F_A(a)}
	\end{equation*}
	for some scalar $\lambda > 0$ (known as the rate of impulses) and cumulative distribution function $F_A$ over $\bR$ (called the amplitude cdf), then, it is called an impulsive Poisson white noise. 
\end{definition}

\begin{lemma} \label{lmm:finite v->Poisson}
	Let $X(t)$ be a white L\'evy noise with parameters $\sigma=0$, $\mu$ and $V(a)$ such that
	\begin{equation*}
		\int_{\bR\setminus\{0\}}{\ud V(a)} < \infty.
	\end{equation*}
	Then, $X(t)$ can be decomposed as $X(t)=Y(t)+\mu'$, where $Y(t)$ is an impulsive Poisson white noise with impulse rate $\lambda = \int_{\bR\setminus\{0\}}{\ud V(a)} < \infty$ and impulse amplitude cdf $F_A(a) = \frac{1}{\lambda}V(a)$. The constant $\mu'$ is also given by
	\begin{equation*}
		\mu' = \mu-\lambda \int_{[-1,1]\setminus\{0\}}{a \ud V(a)}.
	\end{equation*}
\end{lemma}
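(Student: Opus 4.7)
My plan is to argue entirely at the level of characteristic functionals and read off the decomposition from the Lévy-Khintchin representation, using the fact that the Lévy measure $V$ is finite so that all the pieces of the exponent are individually convergent.

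First I would write the Lévy exponent of $X(t)$ using Theorem \ref{thm.LevyKhintchin} with $\sigma = 0$:
\begin{equation*}
f_X(\omega) = \uj\mu\omega + \int_{\bR\setminus\{0\}} \big(\ue^{\uj\omega a} - 1 - \uj\omega a \mathds{1}_{(-1,1)}(a)\big)\,\ud V(a).
\end{equation*}
Because $\int_{\bR\setminus\{0\}} \ud V(a) = \lambda < \infty$, the term $\ue^{\uj\omega a}-1$ is $V$-integrable on its own (it is bounded), and so is the compensator $\uj\omega a\mathds{1}_{(-1,1)}(a)$ (it is dominated by $|\omega|$ on a set of finite $V$-mass). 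Hence I can split the integral legitimately into
\begin{equation*}
f_X(\omega) = \uj\omega\Big(\mu - \int_{[-1,1]\setminus\{0\}} a\,\ud V(a)\Big) + \int_{\bR\setminus\{0\}}\big(\ue^{\uj\omega a}-1\big)\,\ud V(a).
\end{equation*}

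Next I would set $F_A(a) = V(a)/\lambda$, which is a valid cdf on $\bR$ by the monotonicity/continuity properties listed in Theorem \ref{thm.LevyKhintchin}, and rewrite the remaining integral as $\lambda\int_{\bR\setminus\{0\}}(\ue^{\uj\omega a}-1)\,\ud F_A(a)$. This is exactly the L\'evy exponent $f_Y(\omega)$ of an impulsive Poisson white noise $Y(t)$ with rate $\lambda$ and amplitude cdf $F_A$, as in Definition \ref{def.Poisson}. Defining $\mu'$ by the formula stated, we then have $f_X(\omega) = f_Y(\omega) + \uj\mu'\omega$.

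Finally I would verify that adding the constant $\mu'$ to $Y(t)$ produces the claimed shift in the characteristic functional. For any $\varphi\in\mathcal{S}(\bR)$, $\InProd{Y+\mu'}{\varphi} = \InProd{Y}{\varphi} + \mu'\int_\bR\varphi(t)\,\ud t$, so by Definition \ref{def.WhiteNoise},
\begin{equation*}
\E{\ue^{\uj\InProd{Y+\mu'}{\varphi}}} = \exp\Big(\int_\bR \big[f_Y(\varphi(t)) + \uj\mu'\varphi(t)\big]\,\ud t\Big) = \exp\Big(\int_\bR f_X(\varphi(t))\,\ud t\Big),
\end{equation*}
which is the characteristic functional of $X$. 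Since a generalized stochastic process on $\mathcal{S}'(\bR)$ is determined by its characteristic functional (Bochner-Minlos), this gives $X(t) = Y(t) + \mu'$ in law. The only real subtlety is the splitting step above: one must be careful that neither half of the integrand is merely conditionally integrable, and this is exactly what finiteness of $V$ supplies, making this the one place where the hypothesis is used.
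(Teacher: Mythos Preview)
Your proposal is correct and follows essentially the same route as the paper: both arguments expand the L\'evy--Khintchin exponent, use finiteness of $V$ to separate the compensator $\uj\omega a\mathds{1}_{(-1,1)}(a)$ from $\ue^{\uj\omega a}-1$, and recognize the latter piece as the exponent of an impulsive Poisson noise. The paper carries this out directly at the characteristic-functional level for test functions $\varphi$, whereas you first manipulate the scalar exponent $f_X(\omega)$ and then lift to $\varphi$; you are also more explicit than the paper about why the split is legitimate and about invoking Bochner--Minlos for the identification in law, but these are presentational rather than substantive differences.
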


The proof of the lemma can be found in Section \ref{subsec:prf:lmm:finite v->Poisson}.

As can be seen from Lemma \ref{lmm:finite v->Poisson}, the function $V$ has a great influence on the type of the random variables derived linearly from an innovation process. To clarify this fact, we consider the following decomposition on a generic $V$ to three non-negative increasing functions as \cite{Tucker62, Renyi59}:
\begin{equation} \label{eqn:v=vd+vac+vsc}
	V(a) = V_{ac}(a) + V_d(a) + V_{cs}(a),
\end{equation}
where
\begin{itemize}
	\item the function $V_{d}$ known as the discrete part of $V$, consists of countable finite jumps at $a\in\bR$ such that $0<V_d(a^+)-V_d(a^-)<\infty$, while $V_d(a)$ is constant elsewhere,

	\item the function $V_{ac}$ known as the absolutely continuous part of $V$,  \emph{i.e.} for any Borel set $\mathcal{A}$ with Lebesgue measure of $0$, $\int_\mathcal{A}{\ud V_{ac}(a)} = 0$, and
	
	\item the function $V_{cs}$ known as the continuous singular part of $V$, is a continuous function, but not absolutely continuous.
\end{itemize}

Then, we have the following theorem:

\begin{theorem} \label{thm:InfDivRVType}
	Let $X(t)$ be a white noise with the triplet $(\mu,\sigma,V=V_d+V_{ac}+V_{sc})$, and define the random variable $X_0$ as
	\begin{equation*}
		X_0=\int_0^1{X(t) \ud t}.
	\end{equation*}
	Then,
	\begin{enumerate}
		\item \cite{Blum59} $X_0$ is discrete if and only if
		\begin{itemize}
			\item $\sigma = 0$,
			
			\item $\int_{\bR\setminus\{0\}}{\ud V(a)} < \infty$,
			
			\item $V_{ac}(a)\equiv V_{cs}(a) \equiv 0$.
		\end{itemize}
		
		\item \cite{Tucker62} $X_0$ is continuous if at least one of the following conditions is satisfied:
		\begin{itemize}
			\item $\sigma>0$,
			
			\item $\int_{\bR\setminus\{0\}}{\ud V_{ac}(a)} = \infty$.
		\end{itemize}
		
		\item \cite{Blum59} $X_0$ is discrete-continuous if and only if
		\begin{itemize}
			\item $\sigma = 0$,

			\item $\int_{\bR\setminus\{0\}}{\ud V(a)} < \infty$,

			\item $V_{cs}(a)\equiv 0$ and $V_{ac}(a)\not\equiv 0$.
		\end{itemize}
	\end{enumerate}
\end{theorem}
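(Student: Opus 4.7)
The plan is to treat $X_0=\int_0^1 X(t)\ud t$ as an infinitely divisible random variable with characteristic function $\exp(f(\omega))$, where $f$ is the L\'evy exponent associated with the triplet $(\mu,\sigma,V)$, and to exploit the decomposition $V=V_d+V_{ac}+V_{cs}$. The engine for the two cases in which $\int\ud V<\infty$ is Lemma~\ref{lmm:finite v->Poisson}, which rewrites $X_0=\mu'+\sum_{i=1}^N A_i$, where $N\sim\mathrm{Poisson}(\lambda)$ with $\lambda=\int\ud V$, and the amplitudes $A_i$ are i.i.d.\ with cdf $V/\lambda$, independent of $N$.

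\textbf{Sufficiency.} Under the Part 1 hypotheses, $V=V_d$, each $A_i$ is discrete, the compound Poisson sum takes values in a countable set, and $X_0$ is discrete. Under the Part 3 hypotheses, $V/\lambda$ is a nontrivial mixture of a purely atomic and an absolutely continuous cdf; conditional on $N\ge 1$ at least one summand has an absolutely continuous component, which is preserved under convolution with the remaining summands, so the conditional law of $X_0$ given $N\ge 1$ carries an absolutely continuous component, while $\Pr{N=0}=\ue^{-\lambda}>0$ contributes an atom at $\mu'$, making $X_0$ discrete-continuous. For Part 2, when $\sigma>0$ an independent Gaussian summand makes $X_0$ absolutely continuous, since convolution with a Gaussian density is itself a density. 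When $\int\ud V_{ac}=\infty$, I would split the L\'evy measure as $V=V_{ac}+V_{0}$, so that $X_0$ is the sum of two independent infinitely divisible components; absolute continuity of the $V_{ac}$-component follows from a Tucker-type argument showing $\exp(\mathrm{Re}\,f_{ac}(\omega))\in L^1(\bR)$, where $\mathrm{Re}\,f_{ac}(\omega)=\int(\cos(\omega a)-1)\ud V_{ac}(a)$ diverges to $-\infty$ fast enough in $|\omega|$ because the infinite mass of $V_{ac}$ prevents average cancellation of the Fourier integrand, and convolution with the remaining law preserves absolute continuity.

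\textbf{Necessity.} The converse implications in Parts 1 and 3 are the classical theorems of Blum-Rosenblatt \cite{Blum59} and Tucker \cite{Tucker62}, which I would cite directly. The underlying idea is to use Wiener's criterion $\lim_{T\to\infty}\tfrac{1}{2T}\int_{-T}^T|\widehat{p}_{X_0}(\omega)|^2\ud\omega=\sum_x\Pr{X_0=x}^2$ to translate discreteness of $X_0$ into non-decay of $|\exp(f(\omega))|$ in an averaged sense; this rules out $\sigma>0$ and forces $\int\ud V<\infty$, after which the compound-Poisson factorization $\widehat{p}_{X_0}(\omega)=\exp(\lambda(\widehat{F}_A(\omega)-1))$ reduces the remaining question to whether $F_A$ is purely atomic. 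The main obstacle is ruling out a continuous singular part $V_{cs}$ in Part 1, since the characteristic function of a singular measure can decay arbitrarily slowly while the underlying law still has no atoms, so the Wiener-type averaging alone is not enough and the more delicate Fourier-analytic arguments of \cite{Blum59} are needed.
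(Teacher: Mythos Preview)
The paper does not prove Theorem~\ref{thm:InfDivRVType}; each of the three parts is stated with a citation to \cite{Blum59} or \cite{Tucker62} and no argument is supplied anywhere in Section~\ref{sec.Proofs}. Your proposal therefore already goes beyond what the paper offers.

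Your sketch is essentially sound. The sufficiency arguments for Parts~1 and~3 via Lemma~\ref{lmm:finite v->Poisson} are correct; in Part~3 you could make more explicit that the hypothesis $V_{cs}\equiv 0$ is precisely what prevents a singular continuous component from appearing in the compound Poisson sum---this is the content of Lemma~\ref{lmm:DC+DC}, applied inductively to the i.i.d.\ summands $A_i$, each of which is a discrete/absolutely-continuous mixture. For Part~2, the $\sigma>0$ case is immediate as you say. The $\int\ud V_{ac}=\infty$ case is the one that genuinely requires Tucker's analysis, and your description is accurate in spirit but informal: the assertion that $\exp(\mathrm{Re}\,f_{ac})\in L^1(\bR)$ does not follow merely from ``infinite mass prevents cancellation'', since $\int(\cos(\omega a)-1)\ud V_{ac}(a)$ can behave irregularly in $\omega$; Tucker's actual argument is more delicate and does not proceed solely through integrability of the characteristic function. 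For the necessity directions you rightly defer to \cite{Blum59} and correctly flag the exclusion of $V_{cs}$ as the subtle point that Wiener's criterion alone cannot handle. Since the paper itself simply cites these references without elaboration, your treatment is already more detailed than the paper's own.
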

Note that the conditions for the discrete and the discrete-continuous cases are necessary and sufficient, while the conditions for the continuous case are only sufficient.

\section{Main Results} \label{sec.MainRes}
We first propose a criterion for the compressibility of stochastic processes, and study  its operational meaning from the viewpoint of source coding.
This criterion and our general results regarding its evaluation are described in Section \ref{subsec.CompCrit}.
In Section \ref{subsec.StablePoisson}, we evaluate the compressibility criterion for some special cases namely stable white noise, impulsive Poisson white noise and their sum.
Finally, we present a qualitative comparison between the compressibility of the considered innovation processes in Section \ref{subsec:Comparison}. 

\subsection{Compressibility via quantization} \label{subsec.CompCrit}
A generic continuous-domain stochastic process is spread over the continuum of time and amplitude. Hence, we doubly discretize the process by applying time and amplitude quantization.
This enables us to utilize the conventional definition of the entropy measure. Then, we monitor the entropy trends as the quantizations become finer.

The amplitude quantization was previously defined in Definition \ref{def.AmpQuant}.
The time quantization, or equivalently, the sampling in time shall be defined in a similar fashion:

\begin{definition}[Time Quantization] \label{def.TimeQuant}
	The time quantization with step size $1/n$ of a white L\'evy noise (an innovation process) $X(t)$ is defined as
	the sequence $\big\{X_{i}^{(n)}\big\}_{i\in\mathbb{Z}}$ of random variables
	\begin{align*}
		X_i^{(n)}=\InProd{X}{\phi_{i,n}},
	\end{align*}
	where $\phi_{i,n}(t)=\phi \left(n t -i+1 \right)$ and 
	\begin{align} \label{eq.PulseFunc}
		\phi(t) = \left\lbrace
		\begin{array}{ll}
			1 & t \in [0,1)\\
			0 & t \notin [0,1)\\
		\end{array}\right..
	\end{align}
\end{definition}

\begin{remark}
	Observe that $\phi(t)$ is not a member of $\mathcal{S}(\bR)$ as defined in Definition \ref{def:S Space}.
	Hence, strictly speaking, for a white L\'evy noise $X(t)$ with sample space $\mathcal{S}'(\bR)$, we cannot automatically define $\InProd{X}{\phi}$ based on Definition \ref{def.WhiteNoise}.
	However, the random variables $X_i^{(n)}$ could be easily interpreted as the increments of the L\'evy process corresponding to this white noise.
	Alternatively, one can define $\InProd{X}{\phi}$ as the limit of $\InProd{X}{\psi_k}$ as $k\to\infty$, when $\{\psi_k(t)\}_{k=1}^\infty\subset\mathcal{S}(\bR)$ satisfy $\lim_{k\to\infty}\int_\bR{|\psi_k(t)-\phi(t)|\ud t}=0$.
	For definitions and arguments in this paper, convergence in probability is sufficient for $\InProd{X}{\psi_k} \stackbin[k\to\infty]{}{\to} \InProd{X}{\phi}$ to hold. 
	
	In some cases, instead of just one function $\phi(t)$, we have multiple step functions $\phi_i$ for $i=0,1,\dots, n$ which are not members of $\mathcal{S}(\bR)$ and we want to simultaneously define the random variables 
	$\InProd{X}{\phi_{0}} ,\dots, \InProd{X}{\phi_{n}}$.
	To account for the simultaneous definition of $\{\InProd{X}{\phi_i}\}_{i=0}^{n}$ that captures the joint distribution, we need the convergence in probability of the multivariate random variable $[\InProd{X}{\psi_{k,0}} ,\dots, \InProd{X}{\psi_{k,n}}]$ to $[\InProd{X}{\phi_{0}} ,\dots, \InProd{X}{\phi_{n}}]$, when
	\begin{equation*}
		\lim_{k\to\infty} \sum_{i=0}^{n} \int_\bR { |\psi_{k,i}(t)-\phi_i(t)|\ud t} = 0.
	\end{equation*}
	Such convergence results could be achieved via the approach of \cite{Rajput1989}.
\end{remark}

Our next step, is to find the entropy rate of a (doubly) quantized random process. Let $X(t)$ be a 
white L\'evy noise and define the random vectors of size $n$
\begin{equation*}
	\widetilde{\bf X}^{m,n}_i
	:= \left(\quant{X_{n(i-1)+1}^{(n)}}{m}, \cdots, \quant{X_{ni}^{(n)}}{m}\right),
\end{equation*}
where $X_{i}^{(n)}$ refers to the time quantization of the process (Definition \ref{def.TimeQuant}) followed by amplitude quantization in the form $\quant{X_{i}^{(n)}}{m}$ as shown in Definition \ref{def.AmpQuant}.
The time quantization in $\widetilde{\bf X}^{m,n}_i$ spans the interval $t\in[i-1,i)$ of the process $X(t)$ via $n$ random variables $\quant{X_{n(i-1)+j}^{(n)}}{m},~ j=1,\dots,n$.
Thus, the sequence $\big\{\widetilde{\bf X}^{m,n}_i \big\}_{i\in\mathbb{Z}}$ represents the innovation process over the whole real axis $t\in\mathbb{R}$ in a quantized way.  We evaluate the quantization entropy rate (entropy per unit interval of time) for $\big\{{\widetilde{\bf X}^{m,n}_i} \big\}_{i\in\mathbb{Z}}$
\begin{equation}
	\mathcal{H}_{m,n}(X)
	:=\lim_{T\to\infty}{\frac{\H{\widetilde{\bf X}^{m,n}_{-T+1}, \cdots, \widetilde{\bf X}^{m,n}_{T}} }{2T}}, \label{eqn:Hmn-ent-def}
\end{equation}
where $\H{\cdot}$ stands for the discrete entropy. 
The above definition has an operational meaning in terms of the number of bits required for asymptotic lossless compression of the source as $T$ tends to $\infty$.
For fixed $m,n$ and varying $i$, since $\quant{X_i^{(n)}}{m}$'s depend on equilength and non-overlapping time intervals of the white noise, they are independent and identically distributed (Lemma \ref{lmm:Seperatedindependent} in Sec. \ref{subsec.WhiteLevyNoise}).
Therefore,
\begin{equation} \label{def.CompCrit}
	\mathcal{H}_{m,n}(X)
	=\H{\widetilde{\bf X}^{m,n}_{1}}
	=n\H{\quant{X_1^{(n)}}{m}}.
\end{equation}
To compensate for the quantization effect, we shall study the behavior of $\mathcal{H}_{m,n}(X)$ as $m,n\to\infty$.

The following theorems consider the behavior of $\mathcal{H}_{m,n}(X)$ by showing that for a wide class of white noises we have one of the following cases
\begin{align}
	&\lim_{n\to\infty} \sup_{m\geq m(n)}
	\left|\frac{\mathcal{H}_{m,n}(X)}{\kappa(n)} - \log(m) - \zeta(n)\right|
	=0, \label{def.CompCritGeneralCDC}\\
	&\lim_{n\to\infty} \sup_{m\geq m(n)}
	\left|\mathcal{H}_{m,n}(X) - \zeta(n)\right|
	=0, \label{def.CompCritGeneralD}
\end{align}
for appropriate functions $m(n)$, $\kappa(n)$, and $\zeta(n)$.
The following two theorems prove the existence and uniqueness of $\kappa(n)$ and $\zeta(n)$ such that \eqref{def.CompCritGeneralCDC} or \eqref{def.CompCritGeneralD} hold. Let us begin with the asymptotic uniqueness of $\kappa(n)$ and $\zeta(n)$ first.

\begin{theorem}[Asymptotic Uniqueness of $\kappa(n)$ and $\zeta(n)$] \label{thm:Uniqueness}
	Let $X(t)$ be a white L\'evy noise. If one can find functions $\kappa_i(n)$, $\zeta_i(n)$, and $m_i(n)$ for $i=1,2$ such that 
	\begin{equation*}
		\lim_{n\to\infty}\sup_{m\geq m_i(n)}
		{\left|\frac{\mathcal{H}_{m,n}(X)}{\kappa_i(n)}-\log m - \zeta_i(n)\right|} = 0,
	\end{equation*}
	then,
	\begin{align}
		& \exists n_0:
		\kappa_1(n)=\kappa_2(n),
		\qquad\forall n \geq n_0 \label{eqn:thm:Uniqkn}\\
		&\lim_{n\to\infty}
		{\left|\zeta_1(n)-\zeta_2(n)\right|} = 0 \label{eqn:thm:Uniqzn}.
	\end{align}
	Likewise, if
	\begin{equation*}
		\lim_{n\to\infty} \sup_{m\geq m(n)}
		\left|\mathcal{H}_{m,n}(X) - \zeta_i(n)\right|
		=0,
	\end{equation*}
	then,
	\begin{equation*}
		\lim_{n\to\infty}
		{\left|\zeta_1(n)-\zeta_2(n)\right|} = 0.
	\end{equation*}
\end{theorem}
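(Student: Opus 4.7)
The plan is to recast both hypotheses as
\begin{equation*}
\mathcal{H}_{m,n}(X) = \kappa_i(n)\bigl(\log m + \zeta_i(n) + \epsilon_i(m,n)\bigr),\quad i=1,2,
\end{equation*}
where $\epsilon_i(m,n):=\mathcal{H}_{m,n}(X)/\kappa_i(n)-\log m-\zeta_i(n)$ satisfies $\sup_{m\ge m_i(n)}|\epsilon_i(m,n)|\to 0$ as $n\to\infty$, and then equate the two right-hand sides. This yields, for every $m\ge M(n):=\max\{m_1(n),m_2(n)\}$,
\begin{equation*}
\bigl(\kappa_1(n)-\kappa_2(n)\bigr)\log m
=\kappa_2(n)\bigl(\zeta_2(n)+\epsilon_2(m,n)\bigr)-\kappa_1(n)\bigl(\zeta_1(n)+\epsilon_1(m,n)\bigr).
\end{equation*}

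The key idea for \eqref{eqn:thm:Uniqkn} is that, with $n$ held fixed, the right-hand side varies only through the $\epsilon_i$'s, while the left-hand side contains an unbounded $\log m$. Precisely, I would choose $n_0$ large enough that $|\epsilon_i(m,n)|\le 1$ whenever $n\ge n_0$ and $m\ge m_i(n)$; fix any such $n$; evaluate the identity above both at $m$ and at $M(n)$; and subtract. This gives
\begin{equation*}
|\kappa_1(n)-\kappa_2(n)|\,\log\!\bigl(m/M(n)\bigr)\le 2\bigl(|\kappa_1(n)|+|\kappa_2(n)|\bigr).
\end{equation*}
Since the right-hand side is a fixed finite constant (for fixed $n$), while the left-hand side can be driven to $+\infty$ by sending $m\to\infty$, the only way both sides can be consistent is $\kappa_1(n)=\kappa_2(n)$, establishing \eqref{eqn:thm:Uniqkn}.

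Once the $\kappa$'s agree, say $\kappa(n):=\kappa_1(n)=\kappa_2(n)$, subtracting the two assumed expressions at any single $m\ge M(n)$ yields $\zeta_1(n)-\zeta_2(n)=\epsilon_2(m,n)-\epsilon_1(m,n)$, and the uniform decay of the $\epsilon_i$'s delivers \eqref{eqn:thm:Uniqzn}. The analogous claim under \eqref{def.CompCritGeneralD} is immediate by direct subtraction: $|\zeta_1(n)-\zeta_2(n)|$ is bounded by the sum of $\sup_{m\ge m_i(n)}|\mathcal{H}_{m,n}(X)-\zeta_i(n)|$ for $i=1,2$, both of which vanish by hypothesis. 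The only point that genuinely requires care is the order of quantifiers — fixing $n$ first and only afterwards letting $m\to\infty$ — so that the $\kappa_i(n)$'s on the right-hand side of the bound play the role of fixed finite constants; beyond this, the argument is purely algebraic and does not invoke any deeper property of $\mathcal{H}_{m,n}(X)$.
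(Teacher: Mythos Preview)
Your proposal is correct and follows essentially the same approach as the paper: fix $n$ large enough that the error terms are uniformly bounded (by $1$), then exploit the unbounded $\log m$ factor to force $\kappa_1(n)=\kappa_2(n)$, after which \eqref{eqn:thm:Uniqzn} follows by the triangle inequality. The only cosmetic difference is that you evaluate the identity at two values of $m$ and subtract (thereby cancelling the $\zeta_i$'s directly), whereas the paper applies a single triangle-inequality bound that keeps $|\zeta_1(n)-\zeta_2(n)\kappa_2(n)/\kappa_1(n)|$ as an explicit finite constant; both routes arrive at the same contradiction.
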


The theorem is proved in Section \ref{subsec:prf:thm:Uniqueness}.

The above theorem shows that $\kappa(n)$ and $\zeta(n)$ are essentially unique, if they exist. In the following theorem, we show the existence of $\kappa(n)$ and $\zeta(n)$ under certain conditions. We further express $\kappa(n)$ in terms of the parameters of the white noise. Next, we state some facts about $\zeta(n)$ for a class of white noise processes.

\begin{theorem} \label{thm:GenerelEntropyDim}
	Let $X(t)$ be a white noise with the triplet $(\mu, \sigma, V = V_d + V_{ac} + V_{sc} )$.
Assume that $X_0$ defined by
	\begin{equation*}
		X_0 = \int_0^1{X(t) \ud t},
	\end{equation*}
	is either a discrete, continuous, or discrete-continuous random variable (as discussed  in Theorem \ref{thm:InfDivRVType}).
	Then, the following statements hold:
	\begin{itemize}
		\item
		\underline{If $X_0$ is discrete:}\\
		There exist functions $m(n)$ and $\zeta(n)$ such that 
		\begin{align*}
			&\lim_{{n\to\infty}}\sup_{m:~m\geq m(n)}\left|
			{\mathcal{H}_{m,n}(X)-\zeta(n)}\right|
			=0,
		\end{align*}
		and
		\begin{equation*}
			c_1 \log n \leq \zeta(n) \leq c_2 \log n,
		\end{equation*}
		for some constants $0 < c_1 \leq c_2 <\infty$ (not depending on $n$) provided that $\H{X_0}$ and $\H{A}$ are finite, where
		\begin{equation*}
			p_{A}(x)
			:= \frac{1}{\int_{\bR\setminus\{0\}}{\ud V(a)}} \tfrac{\ud }{\ud x}V(x).
		\end{equation*}
		Since $X_0$ is discrete $V(a)$ is discrete and bounded (Theorem \ref{thm:InfDivRVType}), so the above definition of $p_A(x)$ corresponds to a discrete random variable $A$ and the discrete entropy $\H{A}$ is meaningful.

		\item \underline{If $X_0$ is discrete-continuous:}\\
		Set
		\begin{equation*}
			\kappa(n) =
			n\left(1-\exp\left[-\frac{\lambda}{n}\left(1-\alpha\right)\right]\right),
		\end{equation*}
		where
		\begin{equation*}
			\lambda := \int_{\bR\setminus\{0\}}{\ud V(a)},
			\qquad \alpha := \frac{1}{\lambda}\int_{\bR\setminus\{0\}}{\ud V_{d}(a)}.
		\end{equation*}
		Then, there exist functions $m(n)$ and $\zeta(n)$ such that 
		\begin{align}
			&\lim_{{n\to\infty}}\sup_{m:~m\geq m(n)}\left|
			{\frac{\mathcal{H}_{m,n}(X)}{\kappa(n)} - \log(m) -\zeta(n)}\right|
			=0, \label{eqn:H/k-lnm-z=0}
		\end{align}
		and
		\begin{equation*}
			c_1 \log n \leq \zeta(n) \leq c_2 \log n,
		\end{equation*}
		for some constants $0 < c_1 \leq c_2 <\infty$ (not depending on $n$) if random variables ${X_{0,c}}$, ${X_{0,D}}$, $X_{1,c}^{(n)}$, ${A_c}$, and ${A_D}$ defined below satisfy the technical assumptions that $\H{\quant{X_0}{1}}$, $\h{X_{1,c}^{(n)}}$, $\h{A_c}$, $\H{X_{0,D}}$ and $\H{A_D}$ are well-defined for all $n$:
		random variables $X_{0,D}$ and $X_{0,c}$ are the discrete and continuous parts of $X_0$, respectively.
		Similarly, $X_{1,c}^{(n)}$ stands for the continuous part of $X_1^{(n)}$ (Theorem \ref{thm:InfDivRVType}, shows that $X_1^{(n)}$ is also discrete-continuous if $X_0$ is discrete-continuous).
		The random variables ${A_c}$ and ${A_D}$ correspond to the distributions
		\begin{equation*}
			\qquad p_{A_c}(x)
			:= \frac{1}{\int_{\bR\setminus\{0\}}{\ud V_{ac}(a)}} \tfrac{\ud }{\ud x} V_{ac}(x),\text{~~~and~~~}
			p_{A_D}(x)
			:= \frac{1}{\int_{\bR\setminus\{0\}}{\ud V_{d}(a)}} \tfrac{\ud }{\ud x} V_{d}(x),
		\end{equation*}
		respectively.
		
		\item \underline{If $X_0$ is continuous:}\\
		Set $\kappa(n) = n$ and $\zeta(n)=\h{X_1^{(n)}}$ where $X_1^{(n)}=\int_{0}^{\frac{1}{n}}{X(t) \ud t}$.
		If at least one of the assumptions in part (2) of Theorem \ref{thm:InfDivRVType} is satisfied, $\H{\quant{X_0}{1}}$ is finite and $\h{X_1^{(n)}}$ is well-defined for all $n$, then, we have that
		\begin{align}
			&\lim_{{n\to\infty}}\sup_{m:~m\geq m(n)}\left|
			{\frac{\mathcal{H}_{m,n}(X)}{\kappa(n)} - \log(m) -\zeta(n)}\right|
			=0 \label{eqn:H/k-lnm-z=0C}.
		\end{align}
		Furthermore, $\zeta(n)=\h{X_1^{(n)}}$ is a non-increasing function in $n$. 
	\end{itemize}
\end{theorem}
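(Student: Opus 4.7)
The plan is to exploit the factorization $\mathcal{H}_{m,n}(X)=n\,\H{\quant{X_1^{(n)}}{m}}$ from \eqref{def.CompCrit} and reduce the theorem to a careful study of the single random variable $X_1^{(n)}$ for each $n$. Since a white L\'evy noise with triplet $(\mu,\sigma,V)$ has characteristic exponent $f$, the integral $X_1^{(n)}=\int_0^{1/n}X(t)\ud t$ is infinitely divisible with characteristic exponent $f/n$, hence corresponds to the scaled triplet $(\mu/n,\sigma/\sqrt{n},V/n)$. Thus the distributional type (discrete / continuous / discrete-continuous) of $X_1^{(n)}$ is inherited from $X_0$ via Theorem~\ref{thm:InfDivRVType}. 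The principal tool in every case will be R\'enyi's amplitude-quantization formula (Theorem~\ref{thm.DCEntDim}) applied to $X_1^{(n)}$, followed by an $n$-uniform control of the error term $o_m(1)$ via the differential-entropy continuity result Theorem~\ref{thm.EntConv}.

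\textbf{Discrete case.} Here $\sigma=0$, $V=V_d$ is bounded, so Lemma~\ref{lmm:finite v->Poisson} writes $X_1^{(n)}=Y_1^{(n)}+\mu'/n$ where $Y_1^{(n)}=\sum_{i=1}^{N_n}A_i$ is compound Poisson with $N_n\sim\mathrm{Poisson}(\lambda/n)$ and $A_i\sim p_A$. In particular $X_1^{(n)}$ is discrete, so $\H{\quant{X_1^{(n)}}{m}}\nearrow \H{X_1^{(n)}}$ and I can take $\zeta(n)=n\H{X_1^{(n)}}$. The upper bound $\H{X_1^{(n)}}\le \H{N_n}+\tfrac{\lambda}{n}\H{A}$ together with the Poisson-entropy asymptotic $\H{N_n}\sim(\lambda/n)\log(n/\lambda)$ gives $\zeta(n)\le c_2\log n$; for the lower bound, $\H{X_1^{(n)}}\ge \mathrm{H}_2(\Pr{N_n=0})$, which is also of order $(\log n)/n$.

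\textbf{Discrete-continuous case.} Theorem~\ref{thm:InfDivRVType} ensures $X_1^{(n)}$ is again discrete-continuous. Its entropy dimension is
\begin{equation*}
d_n=\Pr{X_1^{(n)}\notin\mathcal{D}}=1-\mathbb{E}[\alpha^{N_n}]=1-\exp\!\bigl[-\tfrac{\lambda}{n}(1-\alpha)\bigr],
\end{equation*}
so $\kappa(n)=n d_n$ as stated. Plugging into Theorem~\ref{thm.DCEntDim} and dividing $n\H{\quant{X_1^{(n)}}{m}}$ by $\kappa(n)$ identifies
\begin{equation*}
\zeta(n)=\h{X_{1,c}^{(n)}}+\tfrac{1-d_n}{d_n}\H{X_{1,D}^{(n)}}+\tfrac{\mathrm{H}_2(d_n)}{d_n}.
\end{equation*}
Using $d_n\sim \lambda(1-\alpha)/n$, the dominant term is $\mathrm{H}_2(d_n)/d_n\sim\log(n/[\lambda(1-\alpha)])$, giving the $\Theta(\log n)$ growth. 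The remaining two terms will be bounded by constants: $\h{X_{1,c}^{(n)}}\to\h{A_c}$ because, conditionally on $X_1^{(n)}\notin\mathcal{D}$, a single continuous jump dominates as $n\to\infty$; and $(1-d_n)\H{X_{1,D}^{(n)}}/d_n=O(\log n)$ via the same Poisson-entropy bound applied to the purely-discrete compound-Poisson component (rate $\alpha\lambda/n$).

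\textbf{Continuous case.} Theorem~\ref{thm.DCEntDim} with $d=1$ collapses to $\H{\quant{X_1^{(n)}}{m}}=\log m+\h{X_1^{(n)}}+o_m(1)$, immediately yielding $\zeta(n)=\h{X_1^{(n)}}$ and $\kappa(n)=n$. For monotonicity, independence of increments of the L\'evy process $L$ underlying $X$ gives, for $n'>n$, the decomposition $X_1^{(n)}\stackrel{d}{=}X_1^{(n')}+Z$ with $Z\!\perp\!X_1^{(n')}$; the ``conditioning reduces entropy'' argument $\h{X_1^{(n)}}\ge \h{X_1^{(n)}\mid Z}=\h{X_1^{(n')}}$ then gives the claim.

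\textbf{Main obstacle.} The delicate point in all three cases is the passage from R\'enyi's pointwise-in-$n$ $o_m(1)$ to a genuine $\sup_{m\ge m(n)}$ statement: in the discrete-continuous case, the error appears as $o_m(1)/d_n$ with $d_n\to 0$, so I must choose $m(n)$ fast enough that the quantization error is $o(d_n)$ uniformly. The plan is to quantify the R\'enyi error by invoking Theorem~\ref{thm.EntConv}: control the $L^1$-distance $\lVert p_{X_1^{(n)}}-q_{X_1^{(n)};m}\rVert_1$ by a standard modulus-of-continuity/tail estimate for the density of $X_1^{(n)}$, and check that $X_1^{(n)}$ lies in a class $(\alpha,\ell(n),v(n))\text{--}\mathcal{AC}$ with $\ell(n),v(n)$ controllable through the Lévy triplet $(\mu/n,\sigma/\sqrt{n},V/n)$. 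Selecting $m(n)$ growing sufficiently fast relative to these bounds provides the required uniform vanishing; this uniformity step is the most technically demanding part of the proof.
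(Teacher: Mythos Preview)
Your overall architecture is exactly that of the paper: factor $\mathcal{H}_{m,n}(X)=n\,\H{\quant{X_1^{(n)}}{m}}$, observe that $X_1^{(n)}$ has triplet $(\mu/n,\sigma/\sqrt n,V/n)$ and therefore inherits the type of $X_0$, apply R\'enyi's Theorem~\ref{thm.DCEntDim} to $X_1^{(n)}$, and then analyze the resulting $d_n$ and $h_n$ case by case. Your computation of $d_n$ in the discrete-continuous case and your identification of $\zeta(n)$ in all three cases match the paper. The monotonicity argument in the continuous case is also the paper's.

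Where you diverge is in the paragraph you label ``Main obstacle,'' and there you are making the problem much harder than it is. You worry that $e_{m,n}/d_n$ must be driven to zero \emph{uniformly in $n$}, and you plan to achieve this by invoking Theorem~\ref{thm.EntConv} with explicit $(\alpha,\ell(n),v(n))$ bounds on the density of $X_1^{(n)}$. This is unnecessary. The theorem only asserts the \emph{existence} of some $m(n)$; it does not ask for an explicit formula. For every \emph{fixed} $n$, the quantity $d_n>0$ is a fixed positive constant and R\'enyi's theorem gives $e_{m,n}\to 0$ as $m\to\infty$; hence $e_{m,n}/d_n\to 0$ as $m\to\infty$ for that $n$. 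One may therefore simply choose $m(n)$ so large that $|e_{m,n}/d_n|\le 1/n$ for all $m\ge m(n)$, and then
\[
\lim_{n\to\infty}\ \sup_{m\ge m(n)}\left|\frac{n\,e_{m,n}}{\kappa(n)}\right|
=\lim_{n\to\infty}\ \sup_{m\ge m(n)}\left|\frac{e_{m,n}}{d_n}\right|\le\lim_{n\to\infty}\frac{1}{n}=0.
\]
This is a pure diagonal/existence argument and is exactly what the paper does; no quantitative control via Theorem~\ref{thm.EntConv} is required at this stage. Your proposed route through density regularity would work in principle but is substantially more laborious (and in the discrete case there is no density at all, so the $(\alpha,\ell,v)\text{--}\mathcal{AC}$ machinery is inapplicable anyway).

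One small additional remark: in the discrete case you write $\H{\quant{X_1^{(n)}}{m}}\nearrow \H{X_1^{(n)}}$. Convergence holds by Theorem~\ref{thm.DCEntDim} with $d=0$, but monotonicity need not; drop the arrow. Also, to apply Theorem~\ref{thm.DCEntDim} you need $\H{\quant{X_1^{(n)}}{1}}<\infty$ for every $n$, which the paper obtains from $\H{\quant{X_0}{1}}<\infty$ via the decomposition $X_0=\sum_{i=1}^n X_i^{(n)}$ and the crude bound $\H{\quant{X_1^{(n)}}{1}}\le \H{\quant{X_0}{1}}+\log n$; you should include this step.
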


The proof can be found in Section \ref{prf:thm:GenerelEntropyDim}.

\begin{remark} \label{rmk:X0DC_AcAD}
	In order to define ${A_c}$ and ${A_D}$, which are used in the case that $X_0$ is a discrete-continuous random variable, note that Theorem \ref{thm:InfDivRVType} in conjunction with Lemma \ref{lmm:finite v->Poisson}  shows that $X_0$ is discrete-continuous only if the white noise is an impulsive Poisson white noise, plus a constant.
	Utilizing Theorem \ref{thm:InfDivRVType}, we can define continuous random variable $A_c$ with pdf $p_{A_c}$ such that:
	\begin{equation*}
		\qquad p_{A_c}(x)
		:= \frac{1}{\int_{\bR\setminus\{0\}}{\ud V_{ac}(a)}} \tfrac{\ud}{\ud x}V_{ac},
	\end{equation*}
	where $V_{ac}$ was defined in \eqref{eqn:v=vd+vac+vsc}.
	Note that, according to Theorem \ref{thm:InfDivRVType}, even if $V(a)$ does not have any discrete part, $X_0$ can still be discrete-continuous.
	Hence, if the discrete part of $V(a)$ exists, one can define $A_D$ with pmf $p_{A_D}$ such that
	\begin{equation*}
		p_{A_D}(x)
		:= \frac{1}{\int_{\bR\setminus\{0\}}{\ud V_{d}(a)}} \tfrac{\ud}{\ud x} V_{d}.
	\end{equation*}
\end{remark}

\subsection{Special Cases} \label{subsec.StablePoisson}
In this section, we evaluate functions $\zeta(n)$ and $m(n)$ in cases where white noise is impulsive Poisson, stable, or sum of impulsive Poisson and stable.

\begin{theorem} \label{thm.AlphaCompCrit}
	Let $X(t)$ be a stable white noise with parameters $(\alpha,\beta,\sigma,\mu)$.
	We define
	\begin{equation*}
		X_0 := \InProd{X}{\phi}=\int_0^1{X(t) \ud t},
	\end{equation*}
	where $\phi$ is the function defined in \eqref{eq.PulseFunc}.
	Then, for all $m(\cdot):\bN\to \bN$ satisfying $\lim_{n\to \infty} \frac{m(n)}{\sqrt[\alpha]{n}}=\infty$ we have that
	\begin{equation*}
		\lim_{n\to\infty}\sup_{m:m\geq m(n)}{
		\bigg| \frac{\mathcal{H}_{m,n}(X)}{n}
		-\log \tfrac{m}{\sqrt[\alpha]{n}}} 
		- \h{X_0}\bigg| =0,
	\end{equation*}
	where $\mathcal{H}_{m,n}(X)$ is defined in \eqref{def.CompCrit}, and $\h{X_0}$ is the differential entropy of continuous random variable $X_0$.
\end{theorem}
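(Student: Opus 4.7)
The approach is to exploit the self-similarity of $\alpha$-stable laws, reducing the double (time/amplitude) quantization at scales $(1/n,1/m)$ to a single-scale quantization of the fixed random variable $X_0$ at scale $1/M$, where $M:=m/\sqrt[\alpha]{n}$. I would first read off from Definition \ref{def.WhiteNoise} that $\E{\ue^{\uj\omega X_1^{(n)}}}=\exp(f(\omega)/n)$ with $f$ as in \eqref{eq.StableLevyExp}, compare it with the characteristic function $\exp(f(\omega/\sqrt[\alpha]{n}))$ of $n^{-1/\alpha}X_0$, and conclude that $X_1^{(n)}$ has the same distribution as $n^{-1/\alpha}X_0+c_n$ for an explicit constant $c_n$; for $\alpha\neq 1$ one gets $c_n=\mu(n^{-1}-n^{-1/\alpha})$, while for $\alpha=1$ an extra $(\log n)/n$ term appears because $\Phi$ is non-constant there.

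Setting $\delta_n:=\sqrt[\alpha]{n}\,c_n$, the fact that $mX_1^{(n)}$ and $M(X_0+\delta_n)$ share the same distribution reduces the problem to quantizing $X_0+\delta_n$ at scale $1/M$. Since integer shifts preserve discrete entropy, I can absorb $\lfloor M\delta_n\rfloor$ and work with $V_n:=X_0+r_n/M$, where $r_n:=M\delta_n-\lfloor M\delta_n\rfloor\in[0,1)$. Lemma \ref{lmm:H([X])-ln m=h(q)} applied to the continuous random variable $V_n$ then yields, together with \eqref{def.CompCrit},
\begin{equation*}
\frac{\mathcal{H}_{m,n}(X)}{n}-\log\frac{m}{\sqrt[\alpha]{n}}
=\H{\quant{V_n}{M}}-\log M
=\h{\widetilde{V}_{n,M}},
\end{equation*}
and by translation invariance of differential entropy, $\h{V_n}=\h{X_0}$. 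The claim therefore reduces to proving $\big|\h{\widetilde{V}_{n,M}}-\h{X_0}\big|\to 0$ uniformly in $m\geq m(n)$ as $n\to\infty$ (equivalently, as $M\geq m(n)/\sqrt[\alpha]{n}\to\infty$).

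I would close the argument by applying Theorem \ref{thm.EntConv}. This requires (i) that $p_{V_n}$ and $q_{V_n;M}$ lie in a common class $(\alpha',\ell,v)\text{--}\mathcal{AC}$ for some $\alpha'\in(0,\alpha)$, and (ii) that $\|p_{V_n}-q_{V_n;M}\|_1\to 0$ uniformly in $n$. For (i), $p_{V_n}$ is the shift $p_{X_0}(\cdot-r_n/M)$ with $r_n/M<1$, so its sup-norm and absolute $\alpha'$-moment are controlled by those of $p_{X_0}$, and $q_{V_n;M}$ is a bin-average of $p_{V_n}$ which inherits the same bounds. For (ii), the standard piecewise-averaging estimate gives
\begin{equation*}
\|p_{V_n}-q_{V_n;M}\|_1\;\leq\;\frac{1}{M}\int_\bR |p_{X_0}'(x)|\ud x,
\end{equation*}
and the right-hand side is finite because stable densities are Schwartz. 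Both bounds are $n$-independent while $M\to\infty$, so Theorem \ref{thm.EntConv} supplies the required uniform vanishing.

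The main obstacle is precisely this $n$-uniformity: the target density $p_{V_n}$ moves with $n$, so a priori both the class parameters and the $L^1$ rate could degrade. The observation that salvages everything is that $V_n$ is a bounded translate of the fixed stable $X_0$, so sup-norm, total variation $\int|p'|$, and $\alpha'$-moment are translation-invariant. The case $\alpha=1$ requires one extra line because $\delta_n$ itself grows like $\log n$, but only its fractional part $r_n\in[0,1)$ enters the above estimates, and the argument proceeds unchanged.
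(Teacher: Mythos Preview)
Your reduction via self-similarity and Lemma~\ref{lmm.H([aX])} is exactly what the paper does (packaged there as Lemma~\ref{lmm:StableXnDist}), so up to the point where you arrive at $\H{[V_n]_M}-\log M$ with $V_n=X_0+r_n/M$ and $M=m/\sqrt[\alpha]{n}$, the two arguments coincide. The genuine divergence is in the last step: the paper invokes its tailor-made Lemma~\ref{lmm.DifEntShft}, which directly shows $\H{[X+c_m]_m}-\log m\to\h{X}$ for arbitrary shifts under piecewise continuity, boundedness of $p$, and finiteness of $\H{[X]_1}$ and $\int p|\log p|$. You instead route through Theorem~\ref{thm.EntConv}, controlling $\|p_{V_n}-q_{V_n;M}\|_1$ by $M^{-1}\int|p_{X_0}'|$. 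Your path is shorter and avoids the somewhat delicate tail-splitting in the proof of Lemma~\ref{lmm.DifEntShft}; the price is that you need $p_{X_0}'\in L^1$, a smoothness input the paper's route does not require.

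One correction: stable densities with $\alpha<2$ are \emph{not} Schwartz---they decay only like $|x|^{-\alpha-1}$. What you actually need, $\int_\bR|p_{X_0}'(x)|\ud x<\infty$, is nonetheless true: the characteristic function $\widehat{p}_{X_0}(\omega)=\ue^{f(\omega)}$ decays like $\ue^{-c|\omega|^\alpha}$, so $\omega\widehat{p}_{X_0}(\omega)\in L^1$ gives $p_{X_0}'$ bounded and continuous, and the known asymptotic $p_{X_0}'(x)=O(|x|^{-\alpha-2})$ gives integrability at infinity. With that fix, your membership and total-variation estimates are uniform in $n$ (the shift $r_n/M\le\sqrt[\alpha]{n}/m(n)\to 0$, and sup-norm, $\alpha'$-moment, and $\int|p'|$ are translation-invariant), so Theorem~\ref{thm.EntConv} closes the argument as you say.
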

The proof is given in Section \ref{subsec:ProofAlphaCompCrit}.

\begin{theorem} \label{thm.PoissonCompCrit}
	Let $X(t)$ be an impulsive Poisson white noise with rate $\lambda$ and amplitude pdf $p_A$
	such that:
	\begin{equation}
		p_A \in (\alpha, \ell, v)\text{--}\mathcal{AC},\label{eqn:defp_AAC}
	\end{equation}
	for some positive constants $\alpha, \ell, v$ and 
	\begin{align*}
		\int_\mathbb{R}{p_A(x) \left| \log \tfrac{1}{p_A(x)} \ud x \right|} < \infty,
		\qquad\H{\quant{A}{1}}<\infty,
	\end{align*}
	where $A \sim p_A$ and $[A]_1$ is the quantization of $A$ with step size one.
	Then, for any function $m(\cdot):\bN\to\bN$ satisfying $\lim_{n\rightarrow \infty}\frac{m(n)}{\log n}=\infty$ we have that
	\begin{equation*} 
		\lim_{{n\to\infty}}\sup_{m:~m\geq m(n)}\left|
		{\frac{\mathcal{H}_{m,n}(X)}{\kappa(n)} - \log (mn)  -\h{A} +\log{\lambda} - 1}\right|
		=0,
	\end{equation*}	
	where $\kappa(n)=n(1-\ue^{-\lambda/n})$. 
\end{theorem}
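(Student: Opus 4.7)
The plan is to exploit the compound-Poisson structure of $X_1^{(n)}=\int_0^{1/n}{X(t)\ud t}$ together with R\'enyi's decomposition of the entropy of discrete-continuous random variables (Theorem \ref{thm.DCEntDim}). Since $X(t)$ is impulsive Poisson, one has $X_1^{(n)}=\sum_{i=1}^{N_n}A_i$ with $N_n\sim\text{Poisson}(\lambda/n)$ and $A_i$ i.i.d.\ of density $p_A$. Hence $X_1^{(n)}$ is discrete-continuous, with atom of mass $1-d_n=\ue^{-\lambda/n}$ at the origin and continuous part of weight $d_n=1-\ue^{-\lambda/n}=\kappa(n)/n$, whose density is the Poisson mixture
\begin{equation*}
p_{X_{1,c}^{(n)}}(x) = \tfrac{1}{d_n}\sum_{k\geq 1}\ue^{-\lambda/n}\,\tfrac{(\lambda/n)^k}{k!}\,p_A^{*k}(x).
\end{equation*}
The hypotheses of Theorem \ref{thm.DCEntDim} follow from $p_A\in(\alpha,\ell,v)\text{--}\mathcal{AC}$ because convolutions preserve $\|p_A^{*k}\|_\infty\leq\ell$ (Young's inequality) and the $\alpha$-th moment of $X_{1,c}^{(n)}$ is bounded uniformly in $n$ since $\E{N_n^p\mid N_n\geq 1}=O(1)$ for any fixed $p\geq 1$. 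As the discrete part is a point mass at $0$, Theorem \ref{thm.DCEntDim} yields
\begin{equation*}
\H{\quant{X_1^{(n)}}{m}} = d_n\log m + d_n\,\h{X_{1,c}^{(n)}} + \mathrm{H}_2(d_n) + o_m(1);
\end{equation*}
combining with $\mathcal{H}_{m,n}(X)=n\,\H{\quant{X_1^{(n)}}{m}}$ and $\kappa(n)=n d_n$ gives
\begin{equation*}
\frac{\mathcal{H}_{m,n}(X)}{\kappa(n)} = \log m + \h{X_{1,c}^{(n)}} + \frac{\mathrm{H}_2(d_n)}{d_n} + \frac{o_m(1)}{d_n}.
\end{equation*}

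Next, I would evaluate the three $n$-dependent contributions. A Taylor expansion around $d_n = \lambda/n - (\lambda/n)^2/2 + \cdots$ yields
\begin{equation*}
\mathrm{H}_2(d_n)/d_n = -\log d_n - \tfrac{1-d_n}{d_n}\log(1-d_n) = \log n - \log\lambda + 1 + O(1/n),
\end{equation*}
exactly matching the target constant. For the convergence $\h{X_{1,c}^{(n)}}\to\h{A}$, I would observe that the weight of the $k=1$ term in the mixture is $w_1^{(n)}=(\lambda/n)\ue^{-\lambda/n}/d_n = 1 - O(1/n)$, so $\|p_{X_{1,c}^{(n)}}-p_A\|_1\leq 2(1-w_1^{(n)}) = O(1/n)$; since the mixture stays in a common $(\alpha,\ell',v')\text{--}\mathcal{AC}$ class (again by the convolution sup-bound and the moment estimate), Theorem \ref{thm.EntConv} closes the argument.

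The main technical obstacle is the uniform control of $o_m(1)/d_n$ for $m\geq m(n)$: because $1/d_n\sim n/\lambda$ diverges with $n$, the residual is amplified. To quantify it, I would unpack Theorem \ref{thm.DCEntDim} via Lemma \ref{lmm:H([X])-ln m=h(q)} and split the $o_m(1)$ as $d_n\,\delta_n(m)+\varepsilon_n(m)$, where $\delta_n(m)=\h{\widetilde{X}_m}-\h{X_{1,c}^{(n)}}$ for $\widetilde{X}_m\sim q_{X_{1,c}^{(n)};m}$ is a ``continuous R\'enyi residual'', and $\varepsilon_n(m)$ is a ``mixing residual'' from merging the atom at $0$ with the central quantization bin. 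After dividing by $d_n$, the first contribution becomes $\delta_n(m)$ alone, which Theorem \ref{thm.EntConv} bounds uniformly in $n$ via $\|q_{X_{1,c}^{(n)};m}-p_{X_{1,c}^{(n)}}\|_1\leq 2\|p_{X_{1,c}^{(n)}}-p_A\|_1 + \|q_{A;m}-p_A\|_1$: the first summand vanishes with $n$ and the second with $m$ at a rate depending only on the fixed density $p_A$. A direct computation using $P_{X_{1,c}^{(n)};m}[0]\leq \ell/m$ (which follows from $\|p_{X_{1,c}^{(n)}}\|_\infty\leq\ell$) gives $\varepsilon_n(m)/d_n = O\bigl((\log m + \log n)/m\bigr)$; this is the sharpest estimate in the proof, and it is precisely what forces the hypothesis $m(n)/\log n\to\infty$.
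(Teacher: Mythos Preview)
Your proposal is correct and follows essentially the same route as the paper, though organized somewhat differently. The paper works directly at the pmf level: it splits $n\H{\quant{X_1^{(n)}}{m}}$ into the $i=0$ term and the $i\neq 0$ terms, substitutes $P_{X_1^{(n)};m}[i]=(1-\ue^{-\lambda/n})P_{A_n;m}[i]$ for $i\neq 0$, and then handles each resulting piece explicitly. You instead invoke Theorem~\ref{thm.DCEntDim} as a framework and then unpack its $o_m(1)$ residual into a continuous R\'enyi residual $d_n\delta_n(m)$ and a mixing residual $\varepsilon_n(m)$. Both organizations land on exactly the same technical ingredients: the total-variation bound $\|p_{A_n}-p_A\|_1=O(1/n)$ from Lemma~\ref{lmm.PoissonQuantDist}, the entropy continuity of Theorem~\ref{thm.EntConv} over a common $(\alpha,\ell,v')\text{--}\mathcal{AC}$ class, and the estimate $P_{A_n;m}[0]\leq \ell/m + O(1/n)$ that makes the term of order $(\log n)/m$ appear and forces $m(n)/\log n\to\infty$. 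Your triangulation $\|q_{A_n;m}-p_{A_n}\|_1\leq 2\|p_{A_n}-p_A\|_1+\|q_{A;m}-p_A\|_1$ (via Lemma~\ref{lmm.QuantNorm}) is a minor variant of the paper's comparison of $q_{A_n;m}$ with $q_{A;m}$; both reduce to the same two vanishing pieces. The only cosmetic difference is that the paper never cites Theorem~\ref{thm.DCEntDim} in this proof, effectively reproving the needed special case inline, whereas you quote it and then reopen it to obtain uniformity --- which amounts to the same computation.
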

The proof is given in Section \ref{subsec:ProofPoissonCompCrit}.

\begin{corollary}
	Let $X(t)$ be a white L\'evy noise with parameters $\sigma=0$, $\mu$, and absolutely continuous and bounded function $V(a) = V_{ac}(a)$, which was defined in \eqref{eqn:v=vd+vac+vsc}.
	Then, utilizing Lemma \ref{lmm:finite v->Poisson}, the compressibility of $X(t)$ is similar to the compressibility of the impulsive Poisson case with
	\begin{equation*}
		\lambda := \int_{\bR\setminus\{0\}}{\ud V(a)},
		\qquad A \sim p_A(x) := \tfrac{1}{\lambda}\tfrac{\ud }{\ud x} V(x),
	\end{equation*}
	provided that $p_A$ satisfies the conditions in Theorem \ref{thm.PoissonCompCrit}.
\end{corollary}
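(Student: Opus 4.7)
The strategy is to reduce the corollary to Theorem~\ref{thm.PoissonCompCrit} by exhibiting the white noise $X(t)$ as an impulsive Poisson white noise up to an additive constant, and then arguing that this constant drift does not affect the asymptotic quantization entropy.

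First, since $V=V_{ac}$ is bounded, $\lambda:=\int_{\bR\setminus\{0\}}\ud V(a)<\infty$, so the hypothesis of Lemma~\ref{lmm:finite v->Poisson} is satisfied. This lemma yields the decomposition $X(t)=Y(t)+\mu'$, where $Y(t)$ is an impulsive Poisson white noise with rate $\lambda$ and amplitude cdf $F_A(a)=V(a)/\lambda$. Because $V=V_{ac}$ is absolutely continuous, the amplitude variable $A$ admits the pdf $p_A(x)=\tfrac{1}{\lambda}\tfrac{\ud}{\ud x}V(x)$, which by assumption satisfies the hypotheses of Theorem~\ref{thm.PoissonCompCrit}. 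Applying that theorem to $Y(t)$ gives the claimed asymptotic for $\mathcal{H}_{m,n}(Y)$ with $\kappa(n)=n(1-\ue^{-\lambda/n})$ and $\zeta(n)=\log n+\h{A}-\log\lambda+1$.

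Next I would relate $\mathcal{H}_{m,n}(X)$ to $\mathcal{H}_{m,n}(Y)$. From \eqref{def.CompCrit} and the identity $X_1^{(n)}=Y_1^{(n)}+\mu'/n$, it suffices to compare $\H{\quant{Y_1^{(n)}+\mu'/n}{m}}$ with $\H{\quant{Y_1^{(n)}}{m}}$. Writing $m\mu'/n=k_{m,n}+\eta_{m,n}$ with $k_{m,n}\in\bZ$ and $|\eta_{m,n}|\le 1/2$, the integer piece merely relabels the quantization bins and is entropy-preserving; only the fractional shift $\eta_{m,n}/m$ remains. A direct computation of bin probabilities shows that the total variation distance between the laws of $\quant{Y_1^{(n)}+\eta_{m,n}/m}{m}$ and $\quant{Y_1^{(n)}}{m}$ is bounded by $|\eta_{m,n}|$ times an $L^\infty$ bound on the density of $Y_1^{(n)}$ (restricted to its absolutely continuous part). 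Combining this with a continuity-of-entropy estimate in the spirit of Theorem~\ref{thm.EntConv} gives $\bigl|\H{\quant{X_1^{(n)}}{m}}-\H{\quant{Y_1^{(n)}}{m}}\bigr|\to 0$ uniformly for $m\ge m(n)$, provided $m(n)$ grows fast enough (which is already required in Theorem~\ref{thm.PoissonCompCrit}).

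The main obstacle is ensuring the continuity estimate is uniform in the joint limit $m,n\to\infty$ demanded by \eqref{def.CompCritGeneralCDC}. A clean way to handle this is to verify that the densities $q_{Y_1^{(n)};m}$ of Definition~\ref{def.AmpQuant} lie in a common class $(\alpha,\ell,v)\text{--}\mathcal{AC}$ independent of $m,n$, which follows from the moment and boundedness hypotheses on $p_A$; then Theorem~\ref{thm.EntConv} supplies the required uniform modulus of continuity. As an alternative route, one can bypass this analysis altogether by invoking Theorem~\ref{thm:Uniqueness}: since Theorem~\ref{thm.PoissonCompCrit} supplies a valid pair $(\kappa,\zeta)$ for $Y(t)$ and the deterministic drift contributes only a vanishing correction to $\mathcal{H}_{m,n}$, asymptotic uniqueness forces the same $(\kappa,\zeta)$ to govern $\mathcal{H}_{m,n}(X)$, yielding the stated compressibility.
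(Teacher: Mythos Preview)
The paper states this corollary without a separate proof, treating it as immediate from Lemma~\ref{lmm:finite v->Poisson} and Theorem~\ref{thm.PoissonCompCrit}. Your plan to make the reduction explicit is reasonable, and the first two steps (decomposing $X(t)=Y(t)+\mu'$ via Lemma~\ref{lmm:finite v->Poisson} and applying Theorem~\ref{thm.PoissonCompCrit} to $Y$) are correct. The difficulty is in your handling of the constant drift, where there is a genuine gap.

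You propose to control $\bigl|\H{\quant{X_1^{(n)}}{m}}-\H{\quant{Y_1^{(n)}}{m}}\bigr|$ via a total-variation bound together with Theorem~\ref{thm.EntConv}, applied to the associated densities $q_{Y_1^{(n)};m}$. This fails for two reasons. First, $Y_1^{(n)}$ carries a point mass $\ue^{-\lambda/n}$ at $0$, so $q_{Y_1^{(n)};m}$ has a spike of height $\approx m\,\ue^{-\lambda/n}$ on the central bin; the family $\{q_{Y_1^{(n)};m}\}_{m,n}$ therefore does \emph{not} lie in any common $(\alpha,\ell,v)\text{--}\mathcal{AC}$ class, contrary to what you assert. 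Second, even granting a TV bound, the corollary actually requires $\bigl|\H{\quant{X_1^{(n)}}{m}}-\H{\quant{Y_1^{(n)}}{m}}\bigr|=o(1/n)$ (since $\mathcal{H}_{m,n}=n\H{\cdot}$ is divided by $\kappa(n)\to\lambda$), and a TV-to-entropy continuity estimate cannot deliver that rate here. Your ``alternative route'' via Theorem~\ref{thm:Uniqueness} is circular: it presupposes that the drift contributes only a vanishing correction, which is exactly the point at issue.

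The clean fix is to carry the shift through the proof of Theorem~\ref{thm.PoissonCompCrit} rather than compare entropies at the level of $Y_1^{(n)}$. The delta mass of $X_1^{(n)}$ sits at $\mu'/n$ and lands in some bin $i_0$; the argument for \eqref{eq.kPoissonQuantH1} is label-invariant and goes through with $[0]$ replaced by $[i_0]$. For the continuous part one needs $\H{\quant{A_n+\mu'/n}{m}}-\log m\to \h{A}$, and here the $(\alpha,\ell,v)\text{--}\mathcal{AC}$ machinery \emph{does} apply, because $p_{A_n}$ (unlike the full law of $Y_1^{(n)}$) is uniformly bounded by $\ell$ and has uniformly bounded $\alpha$-moments; the extra shift $\mu'/n\to 0$ is absorbed by $L^1$-continuity of translation combined with Lemma~\ref{lmm.QuantNorm}, exactly as in the paper's proof of \eqref{eqn:PoissonH2.3.3}.
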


\begin{theorem} \label{thm:LinCom}
	Let $X(t)$ be a stable white noise with parameters $(\alpha,\beta,\sigma,\mu)$ and $Y(t)$ be an impulsive Poisson white noise, independent of $X(t)$, with rate $\lambda$ and amplitude random variable $A$, such that there exists $\theta>0$ where
	\begin{equation*}
		\E{|A|^\theta}<\infty.
	\end{equation*}
	We define
	\begin{equation*}
		X_0 := \InProd{X}{\phi}=\int_0^1{X(t) \ud t},
	\end{equation*}
	where $\phi$ is the function defined in \eqref{eq.PulseFunc}.
	Then, for $Z(t)=X(t) + Y(t)$ and all $m(\cdot):\bN\to \bN$ satisfying $\lim_{n\to \infty} \frac{m(n)}{\sqrt[\alpha]{n}}=\infty$ we have that
	\begin{equation*}
		\lim_{n\to\infty}\sup_{m:m\geq m(n)}{
		\bigg| \frac{\mathcal{H}_{m,n}(Z)}{n}
		-\log \tfrac{m}{\sqrt[\alpha]{n}}} 
		- \h{X_0}\bigg| =0,
	\end{equation*}
	where $\mathcal{H}_{m,n}(Z)$ is defined in \eqref{def.CompCrit}.
\end{theorem}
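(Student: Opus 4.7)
The plan is to show that adding the independent impulsive Poisson noise $Y(t)$ does not alter the leading-order quantization entropy, so that the claim reduces to Theorem~\ref{thm.AlphaCompCrit} applied to $X(t)$ alone. By Lemma~\ref{lmm:Seperatedindependent} and \eqref{def.CompCrit}, we have $\mathcal{H}_{m,n}(Z)/n=\H{\quant{Z_1^{(n)}}{m}}$ and $\mathcal{H}_{m,n}(X)/n=\H{\quant{X_1^{(n)}}{m}}$, where $Z_1^{(n)}=X_1^{(n)}+Y_1^{(n)}$ with $X_1^{(n)}\perp Y_1^{(n)}$. Both random variables are absolutely continuous (since $p_{Z_1^{(n)}}$ is the convolution of the stable density $p_{X_1^{(n)}}$ with the law of $Y_1^{(n)}$), so Lemma~\ref{lmm:H([X])-ln m=h(q)} lets me replace each discrete entropy by $\log m+\h{q_{\cdot;m}}$ and reduce the problem to proving
\begin{equation*}
\lim_{n\to\infty}\sup_{m\geq m(n)}\bigl|\h{q_{Z;m}}-\h{q_{X;m}}\bigr|=0,
\end{equation*}
after which a triangle inequality with Theorem~\ref{thm.AlphaCompCrit} closes the argument.

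Next I would bound the displayed entropy difference via the total-variation distance using Theorem~\ref{thm.EntConv}. Conditioning on $N\sim\mathrm{Poisson}(\lambda/n)$, the number of Poisson impulses in $[0,1/n]$, yields the mixture representation $p_{Z_1^{(n)}}=\ue^{-\lambda/n}\,p_{X_1^{(n)}}+(1-\ue^{-\lambda/n})\,q^{(n)}$ for some density $q^{(n)}$, so $\|p_{Z_1^{(n)}}-p_{X_1^{(n)}}\|_1\leq 2(1-\ue^{-\lambda/n})\leq 2\lambda/n$; since the map $p\mapsto q_{p;m}$ is an $L^1$-contraction, $D_n:=\|q_{Z;m}-q_{X;m}\|_1\leq 2\lambda/n$ uniformly in $m$. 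To set up Theorem~\ref{thm.EntConv}, I would choose $\tilde\alpha=\min(\alpha/2,\theta,1)$: the stable scaling gives $\E{|X_1^{(n)}|^{\tilde\alpha}}=O(n^{-\tilde\alpha/\alpha})$, and the compound-Poisson bound (valid for $\tilde\alpha\leq 1$) gives $\E{|Y_1^{(n)}|^{\tilde\alpha}}\leq(\lambda/n)\,\E{|A|^{\tilde\alpha}}$, so the $\tilde\alpha$-th moments of both step densities are uniformly bounded by a constant $\tilde v$. Moreover, the stable density satisfies $\sup p_{X_1^{(n)}}=O(n^{1/\alpha})$ by explicit inversion of its characteristic function, and neither convolution with a probability measure nor the step-density construction can increase the sup; hence $q_{X;m}$ and $q_{Z;m}$ both lie in $(\tilde\alpha,\tilde\ell,\tilde v)\text{--}\mathcal{AC}$ with $\tilde\ell=Cn^{1/\alpha}$. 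Theorem~\ref{thm.EntConv} then delivers $|\h{q_{Z;m}}-\h{q_{X;m}}|\leq c_1 D_n+c_2 D_n\log(1/D_n)$ with $c_1=O(\log\tilde\ell)=O(\log n)$ and $c_2=O(1)$; substituting $D_n=O(1/n)$ yields an $O((\log n)/n)$ bound that vanishes uniformly in $m$.

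The hard part will be managing the joint limit $n,m\to\infty$ in the presence of the concentration of the stable density: as $n$ grows, $\sup p_{X_1^{(n)}}$ grows like $n^{1/\alpha}$, which forces the regularity constant $\tilde\ell$ (and hence $c_1$) to grow logarithmically in $n$. The crucial observation that makes the estimate close is that this mild blow-up is still far slower than the $O(1/n)$ contraction of $D_n$ obtained by conditioning on the rare Poisson-jump event, and that the final bound inherits no residual $m$-dependence (it is absorbed by the step-density identity of Lemma~\ref{lmm:H([X])-ln m=h(q)}), which is exactly what enables the uniform $\sup_{m\geq m(n)}$ control under the hypothesis $m(n)/\sqrt[\alpha]{n}\to\infty$ already required by Theorem~\ref{thm.AlphaCompCrit}.
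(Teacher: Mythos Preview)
Your approach is correct and follows the same overall strategy as the paper: reduce to Theorem~\ref{thm.AlphaCompCrit} by showing $\H{\quant{Z_1^{(n)}}{m}}-\H{\quant{X_1^{(n)}}{m}}\to 0$, pass to differential entropies of step densities via Lemma~\ref{lmm:H([X])-ln m=h(q)}, bound the total variation by $2(1-\ue^{-\lambda/n})$ using the compound-Poisson mixture representation, and close with Theorem~\ref{thm.EntConv}.

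The one substantive technical difference is where the stable scaling is absorbed. The paper first invokes Lemma~\ref{lmm:StableXnDist} and Lemma~\ref{lmm.H([aX])} to rewrite $\H{\quant{X_1^{(n)}+Y_1^{(n)}}{m}}=\H{\quant{X_0+r_n+\sqrt[\alpha]{n}\,Y_1^{(n)}}{m/\sqrt[\alpha]{n}}}$ (with $0\le r_n<\sqrt[\alpha]{n}/m\le 1$), so that Theorem~\ref{thm.EntConv} is applied to densities dominated by $p_{X_0}$; this yields a \emph{fixed} $(\gamma,\ell,v)$ class and hence fixed constants $c_1,c_2$. You instead apply Theorem~\ref{thm.EntConv} directly at scale $X_1^{(n)}$, accepting that $\tilde\ell\asymp n^{1/\alpha}$ and hence $c_1=O(\log n)$, and then observe that $c_1 D_n=O((\log n)/n)$ still vanishes. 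Both work; the paper's rescaling buys cleaner constants and avoids tracking the $n$-dependence of $c_1$, while your route is slightly more direct and sidesteps the $b_n\to r_n$ reduction. One small caveat: your claim $\E{|X_1^{(n)}|^{\tilde\alpha}}=O(n^{-\tilde\alpha/\alpha})$ is not quite the right rate when $\alpha<1$ and $\mu\neq 0$ (the shift $b_n$ grows), but the moments are still $o(1)$, which is all you need for a uniform $\tilde v$.
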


The theorem is proved in Section \ref{subsec:pfr:thm:LinCom}.

\begin{remark}
	Note that Theorem \ref{thm:LinCom}  assumes that the amplitude measure of the impulsive Poisson only has a finite moment $\theta$ for an arbitrary $\theta>0$, while Theorem \ref{thm.PoissonCompCrit} has the stronger assumption that amplitude measure is in $\left(\theta,\ell,v\right)\text{--}\mathcal{AC}$.
\end{remark}

\begin{corollary}
	Let $Z(t)$ be a white L\'evy noise with parameters $\sigma > 0$, $\mu$, and \emph{bounded} function $V(a)$.
	Then, $Z(t)$ can be decomposed as $Z(t)=X_1(t)+Y_1(t)$ where $X_1(t)$ is a Gaussian white noise with parameters $(0,\sigma)$ and $Y_1(t)$ is a white L\'evy noise with parameters $\sigma = 0$, $\mu$, and the {bounded} function $V(a)$.
	According to Lemma \ref{lmm:finite v->Poisson}, we have that $Y_1(t)=\mu'+Y(t)$ where $Y(t)$ is an impulsive Poisson white noise with parameters $\lambda$ and amplitude cdf $F_A$ where
	\begin{equation*}
		\lambda := \int_{\bR\setminus\{0\}}{\ud V(a)} < \infty,
		\qquad A \sim F_A(x) := \frac{1}{\lambda}V(x),
		\qquad\mu' := \mu-\lambda \int_{[-1,1]\setminus\{0\}}{a \ud V(a)}.
	\end{equation*}
	Then, the compressibility of $Z(t)$ is simillar to the above case with
	\begin{equation*}
		\alpha=2,
		\qquad \h{X_0} = \frac{1}{2} \log(2\pi\ue\sigma^2).
	\end{equation*}
	provided that $\E{|A|^\theta}<\infty$ for some $\theta>0$.
\end{corollary}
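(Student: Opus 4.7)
The plan is to reduce the corollary to a direct invocation of Theorem \ref{thm:LinCom} after exhibiting the claimed decomposition. First, I would use the L\'evy--Khintchin representation (Theorem \ref{thm.LevyKhintchin}): the exponent of $Z(t)$ splits additively into a pure Gaussian part ($-\tfrac{\sigma^2}{2}\omega^2$) and a pure jump-plus-drift part ($\uj\mu\omega + \int(\ue^{\uj\omega a}-1-\uj\omega a\mathds{1}_{(-1,1)}(a))\ud V(a)$). Since sums of independent white L\'evy noises have L\'evy exponents equal to the sum of the individual exponents, one can realize $Z(t)$ on a product space as $X_1(t)+Y_1(t)$ with $X_1(t)$ an independent Gaussian white noise of parameter $(0,\sigma)$ and $Y_1(t)$ a jump white noise with triplet $(\mu,0,V)$. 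Independence matters because Theorem \ref{thm:LinCom} is stated for the sum of an independent stable and impulsive Poisson white noise.

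Next, because $V$ is bounded, $\lambda:=\int_{\bR\setminus\{0\}}\ud V(a)<\infty$, so Lemma \ref{lmm:finite v->Poisson} applies to $Y_1(t)$ and yields $Y_1(t)=\mu'+Y(t)$, where $Y(t)$ is an impulsive Poisson white noise with rate $\lambda$ and amplitude cdf $F_A(x)=V(x)/\lambda$, and the constant $\mu'$ is as stated. I would then note that a Gaussian white noise is exactly the $\alpha=2$ special case of a stable white noise (Definitions \ref{def.StableRV}--\ref{def.StableProcess}, since at $\alpha=2$ the skewness $\beta$ disappears from \eqref{eq.StableLevyExp}). The deterministic shift $\mu'$ can be absorbed into the shift parameter of $X_1$, producing a stable white noise $\widetilde{X}(t):=X_1(t)+\mu'$ with parameters $(2,0,\sigma,\mu')$. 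Thus
\begin{equation*}
Z(t)=\widetilde{X}(t)+Y(t),
\end{equation*}
a sum of independent stable ($\alpha=2$) and impulsive Poisson white noises, which is the exact setting of Theorem \ref{thm:LinCom}.

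Finally, I would verify that the hypotheses of Theorem \ref{thm:LinCom} hold: the moment assumption $\E{|A|^\theta}<\infty$ for some $\theta>0$ is given, and $\widetilde{X}(t)$ is stable with $\alpha=2$. Theorem \ref{thm:LinCom} then delivers the conclusion with $m(n)/\sqrt{n}\to\infty$ and a correction term equal to $\h{\widetilde{X}_0}$, where $\widetilde{X}_0:=\int_0^1\widetilde{X}(t)\ud t$. A direct computation shows $\widetilde{X}_0\sim\mathcal{N}(\mu',\sigma^2)$ (the variance of a Gaussian white noise integrated over a unit interval equals $\sigma^2$), and since differential entropy is translation-invariant,
\begin{equation*}
\h{\widetilde{X}_0}=\tfrac{1}{2}\log(2\pi\ue\sigma^2),
\end{equation*}
matching the claim.

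The main obstacle is the first step: rigorously justifying that an arbitrary white L\'evy noise with triplet $(\mu,\sigma,V)$ admits a \emph{pathwise/distributional} decomposition as a sum of independent Gaussian and jump noises compatible with the generalized-process framework of Definition \ref{def.WhiteNoise}. Once this independence-preserving decomposition is in hand, together with Lemma \ref{lmm:finite v->Poisson} to convert the bounded-$V$ part into an impulsive Poisson noise, the remainder is bookkeeping: recognizing Gaussian as $\alpha=2$ stable, absorbing $\mu'$ into the stable shift, checking the moment hypothesis, and computing the Gaussian differential entropy.
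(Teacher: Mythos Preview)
Your proposal is correct and matches the paper's approach exactly: the corollary is stated without a separate proof because it is meant to follow immediately from Theorem \ref{thm:LinCom} via precisely the reduction you outline---split off the Gaussian part, invoke Lemma \ref{lmm:finite v->Poisson} on the bounded-$V$ remainder, recognize Gaussian as $\alpha=2$ stable, absorb the constant $\mu'$ into the shift, and read off $\h{X_0}=\tfrac12\log(2\pi\ue\sigma^2)$. The decomposition concern you flag is legitimate but standard for L\'evy white noises and is implicitly taken for granted by the paper.
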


\subsection{Comparison and discussion}\label{subsec:Comparison}
In this section, we compare different white L\'evy noises based on Theorem \ref{thm:GenerelEntropyDim}.
Take two white L\'evy noises $X(t)$ and $Y(t)$  belonging to the general class of processes considered in Theorem \ref{thm:GenerelEntropyDim}.
Let $m_X(n)$ and $m_Y(n)$ be the functions associated with these two white noises in Theorem \ref{thm:GenerelEntropyDim}.
Observe that if Theorem \ref{thm:GenerelEntropyDim} holds for a choice of $m(n)$, it will also hold for any function $m'(n)\geq m(n)$.
As a result, we can look at the sequence of pairs $(n,m)$ where $m \geq\max\{m_X(n), m_Y(n)\}$ when we compare the two white noise processes.
Therefore to make the comparison we only need to compare the values of $\kappa(n)$ and $\zeta(n)$ for the two white noise processes asymptotically.

\begin{theorem} \label{thm:GeneralComparison}
	Let $X(t)$ and $Y(t)$ be two white L\'evy noises and define $X_0:=\int_0^1{X(t) \ud t}$ and $Y_0:=\int_0^1{Y(t) \ud t}$.
	Assume that conditions in Theorem \ref{thm:GenerelEntropyDim} hold for $X(t)$ and $Y(t)$ based on their type ($X_0$ and $Y_0$ might have different types).
	Let $\kappa_X(n), \kappa_Y(n)$, $\zeta_X(n), \zeta_Y(n)$ , and $m_X(n), m_Y(n)$ be the functions associated in Theorem \ref{thm:GenerelEntropyDim} to $X$ and $Y$, respectively. 	
	\begin{enumerate}
		\item \label{itm:D-DC}
		If $X_0$ is discrete and $Y_0$ is discrete-continuous, then
		\begin{equation*}
			\lim_{{n\to\infty}}\sup_{m\geq m(n)}
			{\frac{\mathcal{H}_{m,n}(X)}{\mathcal{H}_{m,n}(Y)}} = 0,
		\end{equation*}
		where $m(n)$ is a function such that
		\begin{equation*}
			m(n) \geq \max\{m_X(n),m_Y(n)\},
			\qquad \lim_{n\to\infty}\frac{\log m(n)}{\log n} = \infty.
		\end{equation*}
		
		\item \label{itm:DC-C}
		If $X_0$ is discrete-continuous and $Y_0$ is continuous, then
		\begin{equation*}
			\lim_{{n\to\infty}}\sup_{m\geq m(n)}
			{\frac{\mathcal{H}_{m,n}(X)}{\mathcal{H}_{m,n}(Y)}} = 0,
		\end{equation*}
		where $m(n)$ is a function such that
		\begin{equation*}
			m(n) \geq \max\{m_X(n),m_Y(n)\},
			\qquad \lim_{n\to\infty}\frac{\log m(n)}{\zeta_Y(n)} = \infty.
		\end{equation*}
		
		\item \label{itm:C-CG}
		If $X_0$ is continuous and $Y(t)$ is a Gaussian white noise with parameters $\sigma>0$ and $V(a)=0$ for all $a\in\bR$, then
		\begin{equation*}
			\limsup_{{n\to\infty}}\sup_{m: m\geq m(n)}
			{\frac{\mathcal{H}_{m,n}(X)}{\mathcal{H}_{m,n}(Y)}} \leq 1,
		\end{equation*}
		where $m(n)$ is a function such that
		\begin{equation*}
			m(n) \geq \max\{m_X(n),m_Y(n)\},
			\qquad \lim_{n\to\infty}{\frac{m(n)}{\sqrt{n}}} = \infty.
		\end{equation*}
	\end{enumerate}
\end{theorem}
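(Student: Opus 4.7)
The plan is to substitute the asymptotic expansions from Theorem \ref{thm:GenerelEntropyDim} into the ratio $\mathcal{H}_{m,n}(X)/\mathcal{H}_{m,n}(Y)$ and compare the dominant terms in each of the three cases. Recall that in the discrete regime $\mathcal{H}_{m,n}=\zeta(n)+o(1)$ with $c_1\log n\le \zeta(n)\le c_2\log n$; in the discrete--continuous regime $\mathcal{H}_{m,n}=\kappa(n)(\log m+\zeta(n))+o(\kappa(n))$ with $\kappa(n)=n(1-\ue^{-\lambda(1-\alpha)/n})\to \lambda(1-\alpha)\in(0,\infty)$ and $\zeta(n)=\Theta(\log n)$; and in the continuous regime $\mathcal{H}_{m,n}=n(\log m+\zeta(n))+o(n)$. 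Throughout, the requirement $m\ge\max\{m_X(n),m_Y(n)\}$ puts us in the regime where both expansions apply simultaneously.

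For Part \ref{itm:D-DC}, the numerator is $\Theta(\log n)$, while the denominator is at least $c\log m$ for large $n$ since $\kappa_Y(n)$ tends to a positive constant and $\zeta_Y(n)>0$. Hence
\[
\frac{\mathcal{H}_{m,n}(X)}{\mathcal{H}_{m,n}(Y)}\;=\;O\!\left(\frac{\log n}{\log m}\right)\;\longrightarrow\;0
\]
under the hypothesis $\log m(n)/\log n\to\infty$. For Part \ref{itm:DC-C}, the ratio is of order $[\kappa_X(n)(\log m+O(\log n))]/[n(\log m+\zeta_Y(n))]$ with $\kappa_X(n)=O(1)$. After splitting the numerator into its $\log m$ and $\log n$ pieces, each piece becomes $O(1/n)$ once the hypothesis $\log m/\zeta_Y(n)\to\infty$ is used to show that the denominator is comparable to $n\log m$.

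Part \ref{itm:C-CG} is the most subtle, because numerator and denominator share the same factor $\kappa=n$. After cancellation,
\[
\frac{\mathcal{H}_{m,n}(X)}{\mathcal{H}_{m,n}(Y)} \;=\; 1 + \frac{\zeta_X(n)-\zeta_Y(n) + o(1)}{\log m + \zeta_Y(n) + o(1)}.
\]
Here $\zeta_Y(n)=\h{Y_1^{(n)}}=\tfrac12\log(2\pi \ue\sigma^2/n)$, so $\log m+\zeta_Y(n)=\log(m/\sqrt n)+\tfrac12\log(2\pi \ue\sigma^2)\to\infty$ under the hypothesis $m/\sqrt n\to\infty$. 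The key step is to show $\zeta_X(n)-\zeta_Y(n)\le O(1)$. When $X_0$ has finite variance $\sigma_X^2$, the independent-increment property of the underlying L\'evy process gives $\mathrm{Var}(X_1^{(n)})=\sigma_X^2/n$, and the maximum-entropy property of the Gaussian yields $\h{X_1^{(n)}}\le \tfrac12\log(2\pi \ue\sigma_X^2/n)=\h{Y_1^{(n)}}+\tfrac12\log(\sigma_X^2/\sigma^2)$. The bounded correction, divided by the divergent denominator, yields $\limsup\le 1$.

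The main obstacle is the infinite-variance subcase of Part \ref{itm:C-CG}, typified by stable innovations with $\alpha<2$, for which the variance-based comparison is unavailable. In this setting one invokes the explicit formula of Theorem \ref{thm.AlphaCompCrit}, namely $\zeta_X(n)=\h{X_0}-\tfrac{1}{\alpha}\log n$; since $\tfrac{1}{\alpha}\ge \tfrac12$ whenever $\alpha\le 2$, one even obtains $\zeta_X(n)-\zeta_Y(n)\to -\infty$, making the required bound more than comfortable. For a generic continuous L\'evy noise of infinite variance one would combine the $\theta$-moment integrability condition embedded in Definition \ref{def.WhiteNoise} with a L\'evy--Khintchine decomposition into a Gaussian part and a jump part whose entropy contribution can be controlled through truncated moments.
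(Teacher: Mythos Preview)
Your arguments for Parts~\ref{itm:D-DC} and~\ref{itm:DC-C} are correct and follow the same route as the paper: plug in the expansions of Theorem~\ref{thm:GenerelEntropyDim}, identify the dominant orders, and let the growth condition on $m(n)$ do the rest.

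In Part~\ref{itm:C-CG} you have identified the right reduction---after cancelling the common factor $n$, everything hinges on an upper bound for $\zeta_X(n)-\zeta_Y(n)$---but your treatment of the infinite-variance subcase is a genuine gap. You handle finite variance via the Gaussian maximum-entropy bound and the stable case via Theorem~\ref{thm.AlphaCompCrit}, but a general continuous white L\'evy noise satisfying the hypotheses of Theorem~\ref{thm:GenerelEntropyDim} need be neither finite-variance nor stable (for instance, an infinite-activity Poisson-type component with a heavy L\'evy measure). Your final sentence, proposing to ``combine the $\theta$-moment integrability condition \ldots\ with a L\'evy--Khintchine decomposition \ldots\ controlled through truncated moments,'' is not a proof; it is not clear how truncated moments would yield a pointwise upper bound on $\h{X_1^{(n)}}$ of the required form.

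The paper closes this gap with a single uniform tool: the entropy power inequality. Since $X_0=\sum_{i=1}^n X_i^{(n)}$ is a sum of $n$ i.i.d.\ continuous random variables, EPI gives
\[
\ue^{2\h{X_0}}\;\ge\; n\,\ue^{2\h{X_1^{(n)}}},\qquad\text{hence}\qquad \zeta_X(n)=\h{X_1^{(n)}}\;\le\;\h{X_0}-\tfrac12\log n,
\]
with no moment hypothesis whatsoever (this is exactly Lemma~\ref{lmm:zeta<EntP}). Since $\zeta_Y(n)=\tfrac12\log(2\pi\ue\sigma^2/n)=\h{Y_0}-\tfrac12\log n$, one obtains $\zeta_X(n)-\zeta_Y(n)\le \h{X_0}-\h{Y_0}$, a fixed constant, and the $\limsup\le 1$ conclusion follows immediately. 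Note that your finite-variance argument is in fact a special case of this: the Gaussian maximum-entropy inequality is a consequence of EPI, so the EPI route subsumes yours and removes the need for any case split.
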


The proof of the theorem is presented in Section \ref{subsec:prf:thm:GeneralComparison}.

The theorem, intuitively states that when $m$ grows sufficiently faster than $n$, the discrete white noises are asymptotically more compressible than discrete-continuous noises, and the discrete-continuous white noises are more compressible than continuous noises.
In addition, Gaussian white noise is the least compressible white noise.

In the following theorem, we compare impulsive Poisson and stable white noises amongst themselves.

\begin{theorem} \label{thm:Comparison}
The following statements apply to impulsive Poisson and stable white noises.
\begin{enumerate}[label=\roman*)]
	\item \label{part:ComparStable}
	A stable white noise becomes less compressible as its stability parameter $\alpha$ increases.
	Let $X_1$ and $X_2$ be stable white noises with stability parameters $0<\alpha_1<\alpha_2\leq 2$, and functions $m_1(n)$ and $m_2(n)$ defined in Theorem \ref{thm:GenerelEntropyDim}, respectively.
	Then, for $m(n) \geq \max\{m_1(n),m_2(n)\}$, we have that
	\begin{equation} \label{eqn:ComparStable/}
		\limsup_{n\to\infty} \sup_{m\geq m(n)}
		\frac{\mathcal{H}_{m,n}(X_1)}{\mathcal{H}_{m,n}(X_2)} \leq 1,
	\end{equation}
	\begin{equation} \label{eqn:ComparStable-}
		\lim_{n\to\infty} \sup_{m\geq m(n)}
		{\mathcal{H}_{m,n}(X_1)-\mathcal{H}_{m,n}(X_2)} = -\infty.
	\end{equation}

	\item \label{part:ComparPoisson}
	An impulsive Poisson white noise becomes less compressible as its rate parameter $\lambda$ increases.
	Let $X_1$ and $X_2$ be impulsive Poisson white noises with rate parameters $0<\lambda_1 <\lambda_2$, absolutely continuous amplitude distributions, and functions $m_1(n)$ and $m_2(n)$ defined in Theorem \ref{thm:GenerelEntropyDim}, respectively.
	Then, for $m(n) \geq \max\{m_1(n),m_2(n)\}$, we have that
	\begin{equation} \label{eqn:ComparPoisson/}
		\lim_{n\to\infty} \sup_{m\geq m(n)}
		\frac{\mathcal{H}_{m,n}(X_1)}{\mathcal{H}_{m,n}(X_2)} = \frac{\lambda_1}{\lambda_2}
		< 1,
	\end{equation}
	\begin{equation} \label{eqn:ComparPoisson-}
		\lim_{n\to\infty} \sup_{m\geq m(n)}
		{\mathcal{H}_{m,n}(X_1)-\mathcal{H}_{m,n}(X_2)}
		= -\infty.
	\end{equation}
\end{enumerate}
\end{theorem}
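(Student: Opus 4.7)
The plan is to apply the sharp asymptotic expansions from Theorems \ref{thm.AlphaCompCrit} and \ref{thm.PoissonCompCrit} directly to each of the two processes being compared, and then read off the behavior of the differences and ratios. Since those theorems require $m\geq m_i(n)$ for $i=1,2$, the hypothesis $m \geq m(n) \geq \max\{m_1(n),m_2(n)\}$ is exactly what makes both expansions hold uniformly in $m$ simultaneously.

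For part \ref{part:ComparStable}, Theorem \ref{thm.AlphaCompCrit} gives for $i=1,2$
$$
\mathcal{H}_{m,n}(X_i) = n\log m - \tfrac{n}{\alpha_i}\log n + n\,\h{X_{0,i}} + n\,\epsilon_i(m,n),
$$
where $\sup_{m\geq m_i(n)}|\epsilon_i(m,n)|\to 0$ as $n\to\infty$. The $n\log m$ terms cancel in the difference, yielding
$$
\mathcal{H}_{m,n}(X_1)-\mathcal{H}_{m,n}(X_2) = n\log n\left(\tfrac{1}{\alpha_2}-\tfrac{1}{\alpha_1}\right) + n\bigl(\h{X_{0,1}}-\h{X_{0,2}}\bigr) + n(\epsilon_1-\epsilon_2).
$$
Since $\alpha_1<\alpha_2$, the leading term equals $-cn\log n$ with $c>0$, which dominates the remaining $O(n)+n\cdot o(1)$ contributions uniformly in $m\geq m(n)$; this gives \eqref{eqn:ComparStable-}. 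For the ratio, I would rewrite
$$
1-\frac{\mathcal{H}_{m,n}(X_1)}{\mathcal{H}_{m,n}(X_2)} = \frac{(1/\alpha_1-1/\alpha_2)\log n + [\h{X_{0,2}}-\h{X_{0,1}}] + (\epsilon_2-\epsilon_1)}{\log(m/n^{1/\alpha_2}) + \h{X_{0,2}} + \epsilon_2},
$$
and observe that for large $n$ the numerator is positive (dominated by the positive $\log n$ coefficient) and the denominator is positive (dominated by $\log(m/n^{1/\alpha_2})\to\infty$), so the ratio remains below $1$ uniformly in $m\geq m(n)$, giving \eqref{eqn:ComparStable/}.

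For part \ref{part:ComparPoisson}, Theorem \ref{thm.PoissonCompCrit} gives
$$
\mathcal{H}_{m,n}(X_i) = \kappa_i(n)\bigl[\log(mn) + \h{A_i} - \log\lambda_i + 1 + \delta_i(m,n)\bigr],
$$
with $\kappa_i(n)=n(1-e^{-\lambda_i/n})\to\lambda_i$ and $\sup_{m\geq m_i(n)}|\delta_i(m,n)|\to 0$. The ratio factors as
$$
\frac{\mathcal{H}_{m,n}(X_1)}{\mathcal{H}_{m,n}(X_2)} = \frac{\kappa_1(n)}{\kappa_2(n)} \cdot \frac{\log(mn) + \h{A_1} - \log\lambda_1 + 1 + \delta_1}{\log(mn) + \h{A_2} - \log\lambda_2 + 1 + \delta_2}.
$$
As $n\to\infty$, $\kappa_1(n)/\kappa_2(n)\to\lambda_1/\lambda_2$, and because $m(n)/\log n\to\infty$ forces $\log(mn)\to\infty$ uniformly for $m\geq m(n)$, the second factor tends to $1$; this proves \eqref{eqn:ComparPoisson/}. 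For the difference, the dominant term is $[\kappa_1(n)-\kappa_2(n)]\log(mn)$, whose coefficient tends to $\lambda_1-\lambda_2<0$, and the remaining $O(1)+\kappa_i\delta_i=O(1)$ parts are negligible; the sup over $m$ is therefore attained near $m=m(n)$ and equals $(\lambda_1-\lambda_2+o(1))\log(m(n)n)+O(1)\to-\infty$, giving \eqref{eqn:ComparPoisson-}.

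The one point requiring care is the uniformity of the supremum over $m$: in every step above, the $m$-dependence of the error terms $\epsilon_i,\delta_i$ is absorbed into quantities of order $n\cdot o(1)$ (part i) or $o(\log(mn))$ (part ii), and these are exactly the uniform-in-$m$ bounds supplied by Theorems \ref{thm.AlphaCompCrit} and \ref{thm.PoissonCompCrit}. Consequently, I do not anticipate any serious technical obstacle: this theorem is essentially a bookkeeping corollary of the two main entropy-rate results, with the comparisons following from the different leading asymptotics ($n\log n$ in the stable case, $\log(mn)$ in the Poisson case).
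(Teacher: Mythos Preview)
Your proposal is correct and follows essentially the same route as the paper: both proofs apply Theorems~\ref{thm.AlphaCompCrit} and~\ref{thm.PoissonCompCrit} to each process, cancel the common leading terms, and read off the sign of the residual (the $n\log n$ term with coefficient $1/\alpha_2-1/\alpha_1$ in the stable case, the $(\kappa_1(n)-\kappa_2(n))\log(mn)$ term in the Poisson case). Your treatment of the uniformity in $m$ is, if anything, slightly more explicit than the paper's, particularly in noting that the sup in the Poisson difference is attained near $m=m(n)$ because $\kappa_1(n)-\kappa_2(n)<0$ makes the expression decreasing in $m$.
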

The proof can be found in Section \ref{subsec:Comparison:proof}.

We should mention that Theorem \ref{thm:Comparison} confirms with previously studied notions of compressibility in \cite{Amini11,Gribonval2012,Silva15}.
However, Part \ref{itm:DC-C} of Theorem \ref{thm:GeneralComparison} deviates from the available literature by identifying the impulsive Poisson white noises as more compressible than heavy-tailed stable white noises.
This difference is fundamental and is caused by basing the compressibility definition on the probability concentration properties of processes rather than their amplitude distribution.

\section{Some Useful Lemmas} \label{sec.UsefulLem}
In Section \ref{subsec:PoissonUseful}, we provide  a useful theorem for Poisson white noises, and in Section \ref{subsec:AmpQuant}, we give some lemmas about quantization of a random variable in amplitude domain. These results are used in the proof of the main results.

\subsection{Poisson and Stable White Noises} \label{subsec:PoissonUseful}
The following theorem states a feature of integrated Poisson white noise in a small interval.
\begin{lemma} \label{lmm.PoissonQuantDist}
	Let $X(t)$ be an impulsive Poisson white noise with rate $\lambda$ and amplitude pdf $p_A\in\mathcal{AC}$.
	Define $Y_n$ as
	\begin{equation}
		Y_n = \InProd{X}{\phi(nt)}=\int_0^\frac{1}{n}{X(t)\ud t}. \label{eq.Yn}
	\end{equation}
	Then, we have that
	\begin{equation} \label{eq.pYn}
		p_{Y_n}(x)=
		\ue^{-\frac{\lambda}{n}} \delta (x) + \left( 1-\ue^{-\frac{\lambda}{n}} \right) p_{A_n}(x),
	\end{equation}
	\begin{equation} \label{eq:pAn}
		p_{A_n}(x)=\frac{1}{\ue^{\frac{\lambda}{n}}-1}
		\sum_{k=1}^\infty
		{\frac{\left( \frac{\lambda}{n} \right)^k}{k!}\big(\overbrace{p_A*\cdots*p_A}^k\big) (x)}, 
	\end{equation}
	where $p_{Y_n}$ is the probability density function of $Y_n$, $*$ is the convolution operator, and $A_n$ is a random variable with probability density function $p_{A_n}$.
	Further, we have that
	\begin{equation} \label{eq.VarConv}
		\int_\mathbb{R}{\left|p_{A_n}(x)-p_A(x)\right|\ud x}
		\leq 2\frac{\ue^{\frac{\lambda}{n}}-\frac{\lambda}{n}-1}{\ue^{\frac{\lambda}{n}}-1}, 
	\end{equation}
	\begin{equation} \label{eqn:EAn->EA}
		\E{\left| A_n \right|^\alpha} \text{ exists for all $n$},
		\qquad\lim_{n\to\infty}{\E{\left| A_n \right|^\alpha}}
		=\E{\left| A \right|^\alpha},\qquad\forall \alpha\in(0,\infty),
	\end{equation}
	where $A$ is a random variable with the probability density function $p_A$.
	Also, the upper bound on the total variation distance in \eqref{eq.VarConv} vanishes as $n$ tends to infinity.
\end{lemma}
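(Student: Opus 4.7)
\medskip

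\noindent\textbf{Proof proposal.} The plan is to first compute the characteristic function of $Y_n$, then invert it to obtain the mixture decomposition \eqref{eq.pYn}--\eqref{eq:pAn}, and finally use this explicit form to verify the total variation estimate \eqref{eq.VarConv} and the moment statement \eqref{eqn:EAn->EA}.

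First I would observe that, formally, $\phi(n\,\cdot)=\mathds{1}_{[0,1/n)}$ is not Schwartz, but the approximation remark following Definition \ref{def.TimeQuant} tells us that $Y_n$ is well-defined as a limit in probability of $\langle X,\psi_k\rangle$ for $\psi_k\to \mathds{1}_{[0,1/n)}$ in $L^1$. Applying Definition \ref{def.WhiteNoise} to $\psi_k$ with the L\'evy exponent $f(\omega)=\lambda\int(\ue^{\uj a\omega}-1)\ud F_A(a)$ from Definition \ref{def.Poisson}, and passing to the limit, yields
\begin{equation*}
\widehat{p}_{Y_n}(\omega)=\exp\!\Big(\tfrac{\lambda}{n}\big(\widehat{p}_A(\omega)-1\big)\Big)
= \ue^{-\lambda/n}\sum_{k=0}^\infty \frac{(\lambda/n)^k}{k!}\,\widehat{p}_A(\omega)^k.
\end{equation*}
Inverting term by term (each $\widehat{p}_A(\omega)^k$ is the characteristic function of the $k$-fold convolution $p_A^{*k}$, and the $k=0$ term corresponds to $\delta$), I would identify the $k=0$ contribution as $\ue^{-\lambda/n}\delta(x)$ and group the $k\ge 1$ terms as $(1-\ue^{-\lambda/n})p_{A_n}(x)$, where the normalization $\sum_{k\ge 1}(\lambda/n)^k/k!=\ue^{\lambda/n}-1$ gives exactly the prefactor $1/(\ue^{\lambda/n}-1)$ in \eqref{eq:pAn}.

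For \eqref{eq.VarConv}, I would rewrite
\begin{equation*}
p_{A_n}(x)-p_A(x)
= -\,\frac{\ue^{\lambda/n}-1-\lambda/n}{\ue^{\lambda/n}-1}\,p_A(x)
\;+\; \frac{1}{\ue^{\lambda/n}-1}\sum_{k=2}^\infty \frac{(\lambda/n)^k}{k!}\,p_A^{*k}(x),
\end{equation*}
obtained by separating the $k=1$ term in \eqref{eq:pAn}. The first piece is nonpositive and the second is nonnegative, and since each $p_A^{*k}$ integrates to $1$ both pieces have equal $L^1$-norm $(\ue^{\lambda/n}-1-\lambda/n)/(\ue^{\lambda/n}-1)$. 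The triangle inequality then yields \eqref{eq.VarConv}, and a Taylor expansion of $\ue^{\lambda/n}$ shows the bound is $O(1/n)$ and hence vanishes.

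For \eqref{eqn:EAn->EA}, using $A_n\mid\{k\text{ impulses}\}\stackrel{d}{=}A_1+\cdots+A_k$ with $A_i$ i.i.d.\ copies of $A$, I would apply the $c_r$-inequality $\mathbb{E}|A_1+\cdots+A_k|^\alpha\le C_\alpha k^{\max(\alpha,1)}\mathbb{E}|A|^\alpha$. Inserting this into \eqref{eq:pAn} and using $\sum_{k\ge 1}(\lambda/n)^k k^{\max(\alpha,1)}/k!<\infty$ (with $\ue^{\lambda/n}-1$ bounded away from $0$ for each fixed $n$) gives finiteness of $\mathbb{E}|A_n|^\alpha$. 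For the limit, splitting off the $k=1$ term,
\begin{equation*}
\mathbb{E}|A_n|^\alpha
= \frac{\lambda/n}{\ue^{\lambda/n}-1}\,\mathbb{E}|A|^\alpha
+ \frac{1}{\ue^{\lambda/n}-1}\sum_{k=2}^\infty \frac{(\lambda/n)^k}{k!}\,\mathbb{E}|A_1+\cdots+A_k|^\alpha,
\end{equation*}
the first coefficient tends to $1$ while the tail is $O(\lambda/n)$ since $\ue^{\lambda/n}-1\sim \lambda/n$ and the $k\ge 2$ sum carries an extra factor $(\lambda/n)^2$.

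The main obstacle I anticipate is the moment bound in \eqref{eqn:EAn->EA} for $\alpha$ not necessarily integer: one must handle $0<\alpha\le 1$ by the subadditivity $|a+b|^\alpha\le |a|^\alpha+|b|^\alpha$ and $\alpha>1$ by Jensen's inequality separately to obtain a uniform-in-$k$ polynomial bound $k^{\max(\alpha,1)}$, and then ensure this bound is summable against the Poisson weights uniformly in $n$ (which it is, since $\lambda/n\le\lambda$). Everything else is a direct computation from the Lévy--Khintchin characterization.
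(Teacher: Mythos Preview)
Your proposal is correct and follows essentially the same route as the paper: compute the characteristic function via the L\'evy exponent, expand the exponential and invert term by term to get \eqref{eq.pYn}--\eqref{eq:pAn}, then isolate the $k=1$ term both for the total variation bound \eqref{eq.VarConv} and for the moment convergence \eqref{eqn:EAn->EA}. The only notable difference is that for \eqref{eqn:EAn->EA} the paper bounds $\E{|A_1+\cdots+A_k|^\alpha}$ exponentially by $c^k$ with $c=2^\alpha(1+\E{|A|^\alpha})$ via the product inequality $1+\sum|A_i|\le\prod(1+|A_i|)$, whereas you use the polynomial $c_r$-bound $k^{\max(\alpha,1)}\E{|A|^\alpha}$; both are summable against the Poisson weights and yield the same conclusion.
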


The main novelties of Lemma \ref{lmm.PoissonQuantDist}  are \eqref{eq.VarConv} and \eqref{eqn:EAn->EA}, as \eqref{eq.pYn} and \eqref{eq:pAn} are rather classical results. To make the paper self-contained, we prove all the identities in Section \ref{subsec:ProofPoissonQuantDist}.

\begin{lemma} \label{lmm.AlphaStableRenyiCondition}
	Let $X$ be a stable white noise with parameters $(\alpha,\beta,\sigma,\mu)$.
	Define random variable $X_0$ with pdf $p_{X_0}$ as follows:
	\begin{equation*}
		X_0 = \InProd{X}{\phi}=\int_0^1{X(t) \ud t},
	\end{equation*}
	where $\phi$ is the function defined in \eqref{eq.PulseFunc}.
	Then,
	\begin{equation} \label{eqn:pX0AC}
		p_{X_0} \in \mathcal{AC},
	\end{equation}
	\begin{equation} \label{eqn:pX0<L}
		\mathrm{ess}\sup_{x\in\bR}{p_{X_0}(x)}=L < \infty,
	\end{equation}
	\begin{equation} \label{eqn:pX0PWC}
		p_{X_0}(x) \text{ is a piecewise continuous function},
	\end{equation}
	\begin{equation} \label{eqn:hpX0<inf}
		\int_\mathbb{R}{p_{X_0}(x) \left|\log \tfrac{1}{p_{X_0}(x)}\right| \ud x} < \infty,
	\end{equation}
	Moreover, we have that
	\begin{equation} \label{eqn:HpX0<inf}
		\H{\quant{X_0}{1}} < \infty.
	\end{equation}
\end{lemma}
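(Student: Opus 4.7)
\medskip
\noindent\textbf{Proof plan for Lemma \ref{lmm.AlphaStableRenyiCondition}.}
The plan is to first identify $X_0$ explicitly as a stable random variable, then exploit the fast decay of its characteristic function to obtain smoothness and boundedness of $p_{X_0}$, and finally use the classical tail estimates for stable densities to handle the two integrability claims.

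\emph{Step 1: identifying $X_0$.} Applying Definition \ref{def.WhiteNoise} with $\varphi = \phi$ (indicator of $[0,1]$, approximated by Schwartz functions as in the remark after Definition \ref{def.TimeQuant}) and passing to the limit in characteristic functions, I obtain
\[
\mathbb{E}[\ue^{\uj \omega X_0}]
=\exp\!\left(\int_{0}^{1} f(\omega)\,\ud t\right)
=\ue^{f(\omega)},
\]
where $f$ is the stable L\'evy exponent \eqref{eq.StableLevyExp}. Hence $X_0$ is itself a stable random variable with parameters $(\alpha,\beta,\sigma,\mu)$, with $\sigma>0$.

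\emph{Step 2: Fourier inversion gives \eqref{eqn:pX0AC}--\eqref{eqn:pX0PWC}.} From the explicit form of $f$ I read off $|\widehat{p}_{X_0}(\omega)|=\ue^{-\sigma^\alpha|\omega|^\alpha}$, which is integrable on $\bR$ for every $\alpha\in(0,2]$ and $\sigma>0$. By the Fourier inversion theorem, $p_{X_0}$ exists, is bounded by $L := \frac{1}{2\pi}\int_{\bR}\ue^{-\sigma^\alpha|\omega|^\alpha}\ud \omega<\infty$, and is continuous on $\bR$ (in fact $C^\infty$, since $\omega^k\widehat{p}_{X_0}\in L^1$ for all $k\in\bN$). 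Continuity implies piecewise continuity, and as $p_{X_0}$ is a bona fide density, the induced cdf is absolutely continuous with respect to Lebesgue, so $p_{X_0}\in\mathcal{AC}$.

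\emph{Step 3: tail decay and the two integrability statements.} The key classical estimate I would cite is that a non-degenerate stable density with parameter $\alpha\in(0,2)$ satisfies $p_{X_0}(x)\leq C(1+|x|)^{-(1+\alpha)}$ for a constant $C$ depending on $(\alpha,\beta,\sigma,\mu)$, while for $\alpha=2$ the density is Gaussian and decays like $\ue^{-cx^2}$. In either case, $-\log p_{X_0}(x)\leq C_1 + C_2\log(1+|x|)$ on the set $\{p_{X_0}\leq 1\}$, so
\[
\int_{\{p_{X_0}\leq 1\}} p_{X_0}(x)\log\tfrac{1}{p_{X_0}(x)}\ud x
\leq C_1+C_2\int_{\bR}p_{X_0}(x)\log(1+|x|)\ud x<\infty,
\]
because the tail bound gives a finite moment of $\log(1+|x|)$ for any $\alpha>0$. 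On the set $\{p_{X_0}>1\}$, which has Lebesgue measure at most $1$, the integrand is bounded by $L\log L$, hence integrable. Combining these yields \eqref{eqn:hpX0<inf}. For \eqref{eqn:HpX0<inf}, set $p_i=\Pr{\quant{X_0}{1}=i}=\int_{i-1/2}^{i+1/2}p_{X_0}(x)\ud x$; the same tail bound gives $p_i\leq \widetilde{C}(1+|i|)^{-(1+\alpha)}$. For $|i|$ large enough that $p_i<1/\ue$, I use the monotonicity of $x\mapsto x\log(1/x)$ on $(0,1/\ue)$ to upper bound $p_i\log(1/p_i)$ by $\widetilde{C}(1+|i|)^{-(1+\alpha)}\bigl[(1+\alpha)\log(1+|i|)-\log\widetilde{C}\bigr]$, whose sum over $i\in\bZ$ converges since $\alpha>0$. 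The finitely many remaining terms are trivially bounded, which yields $\H{\quant{X_0}{1}}<\infty$.

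\emph{Where the work is.} None of the steps is deep in isolation; the statements (1)--(3) follow immediately once $|\widehat{p}_{X_0}|$ is shown to be in $L^1$. The main technical input is invoking (or re-deriving) the uniform polynomial tail bound for non-Gaussian stable densities, and then verifying that the same bookkeeping argument simultaneously handles (4) and (5) for every $\alpha\in(0,2]$ --- including the Gaussian endpoint, where the polynomial tail is replaced by a much faster Gaussian tail and both $\int p\log p$ and the entropy series still converge without modification.
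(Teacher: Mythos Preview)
Your Steps 1--2 and your treatment of \eqref{eqn:HpX0<inf} in Step 3 are correct and essentially match the paper's proof (the paper simply cites \cite{Samor94} where you give the Fourier-inversion argument directly, which is fine).

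There is, however, a genuine gap in your argument for \eqref{eqn:hpX0<inf}. You cite the \emph{upper} bound $p_{X_0}(x)\leq C(1+|x|)^{-(1+\alpha)}$ and then claim ``$-\log p_{X_0}(x)\leq C_1+C_2\log(1+|x|)$''. But an upper bound on $p$ yields a \emph{lower} bound on $-\log p$, not an upper bound; your inequality would require $p_{X_0}(x)\geq c'(1+|x|)^{-C_2}$, which you never stated and which in fact fails on one tail for totally skewed stable laws (e.g.\ $\beta=\pm 1$, $\alpha\in(1,2)$, where one tail decays faster than any polynomial). So the chain
\[
\int_{\{p_{X_0}\leq 1\}} p_{X_0}\log\tfrac{1}{p_{X_0}}
\leq C_1+C_2\int p_{X_0}(x)\log(1+|x|)\,\ud x
\]
is not justified as written.

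The fix is exactly the device you already use correctly for \eqref{eqn:HpX0<inf}: on $[0,1/\ue]$ the map $t\mapsto t\log(1/t)$ is increasing, so for $|x|$ large enough that $p_{X_0}(x)\leq C(1+|x|)^{-(1+\alpha)}\leq 1/\ue$ you get
\[
p_{X_0}(x)\log\tfrac{1}{p_{X_0}(x)}\;\leq\; C(1+|x|)^{-(1+\alpha)}\bigl[(1+\alpha)\log(1+|x|)-\log C\bigr],
\]
which is integrable. The remaining bounded region is handled by continuity and boundedness of $p_{X_0}$. This is precisely the paper's route.
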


Except \eqref{eqn:HpX0<inf}, other claims in Lemma \ref{lmm.AlphaStableRenyiCondition} are known results in the literature. Again, we include the proof of all parts in Section \ref{subsec:ProofX0features} for the sake of completeness.

\begin{lemma} \label{lmm:StableXnDist}
	For a stable white noise with parameters $(\alpha,\beta,\sigma,\mu)$ we have that
	\begin{equation*}
		X_1^{(n)} \stackrel{d}{=} \frac{X_0-b_n}{\sqrt[\alpha]{n}},
	\end{equation*}
	where $X_1^{(n)}$ was defined in Definition \ref{def.TimeQuant},
	\begin{equation*}
		X_0 := \InProd{X}{\phi}=\int_0^1{X(t) \ud t},
	\end{equation*}
	and
	\begin{equation*}
		b_n =\left\lbrace
		\begin{array}{ll}
			\mu\left(1-\frac{\sqrt[\alpha]{n}}{n}\right) &\alpha\neq 1\\
			\frac{2}{\pi}\sigma\beta\ln{n} &\alpha=1
		\end{array}
		\right..
	\end{equation*}
\end{lemma}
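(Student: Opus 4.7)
The strategy is to show that $X_1^{(n)}$ and $(X_0 - b_n)/\sqrt[\alpha]{n}$ have identical characteristic functions, and then invoke the uniqueness of characteristic functions. Since all laws involved are absolutely continuous (by Lemma~\ref{lmm.AlphaStableRenyiCondition} applied to both $X_0$ and the scaled/translated version), this identification gives equality in distribution.

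First I would compute the characteristic function of $X_1^{(n)}$. Using Definition~\ref{def.WhiteNoise} with the test function $\phi_{1,n}(t)=\phi(nt)$ (which is the indicator of $[0,1/n)$, interpreted via the limiting argument discussed in the remark following Definition~\ref{def.TimeQuant}), one obtains
\begin{equation*}
	\E{\ue^{\uj\omega X_1^{(n)}}}
	=\exp\Bigl(\int_0^{1/n} f(\omega)\ud t\Bigr)
	=\exp\bigl(f(\omega)/n\bigr),
\end{equation*}
while $\E{\ue^{\uj\omega X_0}}=\exp(f(\omega))$ for the same stable L\'evy exponent $f$ given in \eqref{eq.StableLevyExp}. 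Consequently,
\begin{equation*}
	\E{\ue^{\uj\omega (X_0-b_n)/\sqrt[\alpha]{n}}}
	=\ue^{-\uj\omega b_n/\sqrt[\alpha]{n}}\,\exp\!\bigl(f(\omega/\sqrt[\alpha]{n})\bigr).
\end{equation*}
So the claim reduces to verifying the algebraic identity
\begin{equation*}
	f(\omega)/n \;=\; f(\omega/\sqrt[\alpha]{n}) - \uj\omega b_n/\sqrt[\alpha]{n}, \qquad \forall \omega\in\bR,
\end{equation*}
for the appropriate $b_n$.

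For $\alpha\neq 1$, I would use $|\omega/\sqrt[\alpha]{n}|^\alpha = |\omega|^\alpha/n$ and $\sgn{\omega/\sqrt[\alpha]{n}}=\sgn{\omega}$, so that the non-Gaussian (power) part of $f$ is automatically scaled by $1/n$. The only discrepancy is the linear drift: $f(\omega/\sqrt[\alpha]{n})$ contributes $\uj\mu\omega/\sqrt[\alpha]{n}$, whereas $f(\omega)/n$ contributes $\uj\mu\omega/n$. Matching these via the shift $b_n$ yields $b_n=\mu\bigl(1-\sqrt[\alpha]{n}/n\bigr)$, as claimed.

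For $\alpha=1$ the scaling argument picks up an extra term because $\Phi(\omega)=-\tfrac{2}{\pi}\log|\omega|$ is not homogeneous. Expanding $\log(|\omega|/n)=\log|\omega|-\log n$, the term $\uj\beta\,\sgn{\omega}\Phi$ produces an additional contribution $\uj\sigma\beta\omega\cdot\tfrac{2}{\pi n}\log n$ in $f(\omega/n)$ relative to $f(\omega)/n$. Matching this with the drift $-\uj\omega b_n/n$ then forces $b_n=\tfrac{2}{\pi}\sigma\beta\log n$. The $\alpha=1$ case is the main (though modest) obstacle, since one must be careful to keep track of signs and the precise form of $\Phi$. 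Once both cases are settled, the equality of characteristic functions gives $X_1^{(n)}\stackrel{d}{=}(X_0-b_n)/\sqrt[\alpha]{n}$, completing the proof.
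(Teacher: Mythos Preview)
Your proposal is correct and follows essentially the same approach as the paper's proof: both compute the characteristic functions of $X_1^{(n)}$ and $X_0$ via Definition~\ref{def.WhiteNoise}, obtaining $\exp(f(\omega)/n)$ and $\exp(f(\omega))$ respectively, and then verify the scaling/shift relation by matching exponents in the two cases $\alpha\neq 1$ and $\alpha=1$. The paper writes out the two characteristic functions explicitly and reads off the relation $\widehat p_{X_1^{(n)}}(\omega)=\widehat p_{X_0}(\omega/\sqrt[\alpha]{n})\,\ue^{-\uj\omega c_n}$ with $c_n=b_n/\sqrt[\alpha]{n}$, which is exactly your identity $f(\omega)/n=f(\omega/\sqrt[\alpha]{n})-\uj\omega b_n/\sqrt[\alpha]{n}$ rewritten.
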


The proof can be found in Section \ref{subsec:prf:lmm:StableXnDist}

\subsection{Amplitude Quantization} \label{subsec:AmpQuant}
In the following lemma, we show that the total variation distance between two variables decreases by quantizing;
furthermore, moments of a quantized random variable tends to the moments of the original random variable, as the quantization step size vanishes.
\begin{lemma} \label{lmm.QuantNorm}
	Let random variables $X \sim p_X$ and $Y \sim p_Y$ be continuous, and $\quant{X}{m}\sim P_{X;m}$ and $\quant{Y}{m}\sim P_{Y;m}$ be their quantized version, defined in Definition \ref{def.AmpQuant}, respectively.
	Then for all $m \in (0,\infty)$ we have
	\begin{align}
		&\int_\mathbb{R}{\left| q_{X;m}(x) - q_{Y;m}(x) \right|\ud x}
		\leq \int_\mathbb{R}{\left| p_X(x) - p_Y(x) \right|\ud x}, \label{eq.QuantNorm1}
	\end{align}
	where $\widetilde{X}_m \sim q_{X;m}$ and $\widetilde{Y}_m \sim q_{Y;m}$ are random variables defined in Definition \ref{def.AmpQuant}.
	In addition, for any $\alpha\in(0,\infty)$, and $m\geq4$ we have
	\begin{equation} \label{eq.QuantNorm2}
		\E{\left|\widetilde{X}_m\right|^\alpha}
		\leq\left(\tfrac{2}{\sqrt{m}}\right)^\alpha+\ue^\frac{\alpha}{\sqrt{m}}\E{|X|^\alpha},
	\end{equation}
	\begin{equation} \label{eq.QuantNorm3}
		\Pr{|X|>\tfrac{1}{\sqrt{m}}} \ue^{-2\frac{\alpha}{\sqrt{m}}}\E{|X|^\alpha}
		\leq\E{\left|\widetilde{X}_m\right|^\alpha},
	\end{equation}
	provided that $\E{|X|^\alpha}$ exists.
\end{lemma}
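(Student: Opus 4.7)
The plan is to handle \eqref{eq.QuantNorm1} by a direct bin-by-bin calculation and to dispatch \eqref{eq.QuantNorm2} and \eqref{eq.QuantNorm3} together via a single coupling argument.

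For \eqref{eq.QuantNorm1}, observe that $q_{X;m}$ is piecewise constant by construction: on each bin $\mathcal{I}_i := [(i-0.5)/m,(i+0.5)/m)$ of width $1/m$ it equals $mP_{X;m}[i]$. Therefore
\begin{equation*}
\int_\bR |q_{X;m}(x)-q_{Y;m}(x)|\ud x
=\sum_i |P_{X;m}[i]-P_{Y;m}[i]|
=\sum_i \Big|\int_{\mathcal{I}_i}(p_X(x)-p_Y(x))\ud x\Big|,
\end{equation*}
and pulling the absolute value inside the integral via the triangle inequality, then summing over $i$, yields the right-hand side $\int_\bR|p_X-p_Y|\ud x$.

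For \eqref{eq.QuantNorm2} and \eqref{eq.QuantNorm3} I would construct an explicit coupling of $X$ and $\widetilde{X}_m$ on the same probability space. Let $i(X)$ denote the a.s.\ unique bin index with $X\in\mathcal{I}_{i(X)}$, and set $\widetilde{X}_m:=i(X)/m+U$ with $U$ uniform on $[-\tfrac{1}{2m},\tfrac{1}{2m})$ and independent of $X$. Then $\widetilde{X}_m$ lands in $\mathcal{I}_i$ with probability $P_{X;m}[i]$ and is uniform within that bin, so its density is $q_{X;m}$, while one has the deterministic bound $|\widetilde{X}_m-X|\le 1/m$. With this coupling in hand, split on whether $|X|\le 1/\sqrt{m}$ or $|X|>1/\sqrt{m}$. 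In the small regime, $|\widetilde{X}_m|\le 1/\sqrt{m}+1/m\le 2/\sqrt{m}$ for $m\ge 4$, so this event contributes at most $(2/\sqrt{m})^\alpha$ to $\E{|\widetilde{X}_m|^\alpha}$. In the large regime, $|\widetilde{X}_m|/|X|\in[1-1/\sqrt{m},\,1+1/\sqrt{m}]$; using $1+x\le \ue^x$ together with (for $x\in[0,1/2]$) the elementary inequality $1-x\ge \ue^{-2x}$ gives $\ue^{-2/\sqrt{m}}|X|\le|\widetilde{X}_m|\le \ue^{1/\sqrt{m}}|X|$. Raising to the $\alpha$-th power and taking expectations over this event produces \eqref{eq.QuantNorm2} immediately.

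For the lower bound \eqref{eq.QuantNorm3}, the same estimate gives $\E{|\widetilde{X}_m|^\alpha}\ge \ue^{-2\alpha/\sqrt{m}}\,\E{|X|^\alpha\mathds{1}_{|X|>1/\sqrt{m}}}$. To reach the stated form, I would note the trivial ordering $\E{|X|^\alpha\mid|X|\le 1/\sqrt{m}}\le m^{-\alpha/2}\le \E{|X|^\alpha\mid|X|>1/\sqrt{m}}$, whence the law of total expectation yields $\E{|X|^\alpha}\le \E{|X|^\alpha\mid|X|>1/\sqrt{m}}$; multiplying through by $\Pr{|X|>1/\sqrt{m}}$ converts the restricted expectation into $\Pr{|X|>1/\sqrt{m}}\,\E{|X|^\alpha}$, completing the proof. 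No step presents a real obstacle; the only delicate point is isolating the factor $\ue^{-2\alpha/\sqrt{m}}$ via the inequality $1-x\ge \ue^{-2x}$ on $[0,1/2]$, which is exactly why the hypothesis $m\ge 4$ appears.
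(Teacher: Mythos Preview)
Your proposal is correct and matches the paper's proof essentially step for step: the same bin-by-bin triangle inequality for \eqref{eq.QuantNorm1}, the same coupling $\widetilde{X}_m=[X]_m+U$ yielding $|\widetilde{X}_m-X|\le 1/m$, the same split at $|X|=1/\sqrt{m}$, the same elementary inequalities $1+x\le \ue^x$ and $1-x\ge \ue^{-2x}$ on $[0,1/2]$, and the same law-of-total-expectation trick for \eqref{eq.QuantNorm3}. The only cosmetic difference is that you bound the ratio $|\widetilde{X}_m|/|X|$ directly whereas the paper passes through $\log$; otherwise the arguments coincide.
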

The proof is given in Section \ref{subsec:ProofQuantNorm}.

\begin{corollary}
	Let $X \sim p_X$ be a continuous random variable, and let $\widetilde{X}_m\sim q_{X;m}$ be the random variable defined in Definition \ref{def.AmpQuant}.
	Then, we have that
	\begin{equation*}
		\lim_{m\to\infty}{\E{\left|\widetilde{X}_m\right|^\alpha}}
		=\E{|X|^\alpha},
	\end{equation*}
	if $\E{|X|^\alpha}$ exists.
\end{corollary}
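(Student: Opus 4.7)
The plan is to sandwich $\E{|\widetilde{X}_m|^\alpha}$ between two quantities that both converge to $\E{|X|^\alpha}$ as $m\to\infty$, using the two inequalities \eqref{eq.QuantNorm2} and \eqref{eq.QuantNorm3} supplied by Lemma \ref{lmm.QuantNorm}. Since the conclusion is essentially immediate from that lemma, the task is really to verify that the two bounding sequences both tend to $\E{|X|^\alpha}$.

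First I would take the limsup in \eqref{eq.QuantNorm2}. The term $(2/\sqrt{m})^\alpha$ vanishes as $m\to\infty$, and $\ue^{\alpha/\sqrt{m}}\to 1$; since $\E{|X|^\alpha}<\infty$ by hypothesis, we obtain
\begin{equation*}
\limsup_{m\to\infty}\E{\bigl|\widetilde{X}_m\bigr|^\alpha}\;\leq\;\E{|X|^\alpha}.
\end{equation*}

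Next I would take the liminf in \eqref{eq.QuantNorm3}. The factor $\ue^{-2\alpha/\sqrt{m}}\to 1$, and for the probability factor I would use the fact that $X$ is a continuous random variable (so $\Pr{X=0}=0$), hence
\begin{equation*}
\lim_{m\to\infty}\Pr{|X|>\tfrac{1}{\sqrt{m}}}\;=\;\Pr{|X|>0}\;=\;1,
\end{equation*}
by continuity of the probability measure from below along the increasing events $\{|X|>1/\sqrt{m}\}$. Therefore
\begin{equation*}
\liminf_{m\to\infty}\E{\bigl|\widetilde{X}_m\bigr|^\alpha}\;\geq\;\E{|X|^\alpha}.
\end{equation*}
Combining the two bounds yields the claimed equality.

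There is no real obstacle here; the only subtlety is the use of absolute continuity of $p_X$ to guarantee $\Pr{X=0}=0$, which is exactly what makes $\Pr{|X|>1/\sqrt{m}}\to 1$. All other ingredients are direct consequences of Lemma \ref{lmm.QuantNorm} and of the assumed finiteness of $\E{|X|^\alpha}$.
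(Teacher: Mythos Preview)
Your proof is correct and follows essentially the same approach as the paper: the paper's argument is simply that $p_X\in\mathcal{AC}$ implies $\Pr{|X|\le 1/\sqrt{m}}\to 0$, and then invokes Lemma~\ref{lmm.QuantNorm} (i.e., the two bounds \eqref{eq.QuantNorm2} and \eqref{eq.QuantNorm3}) exactly as you do. Your write-up is just a slightly more explicit version of the same sandwich argument.
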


\begin{proof}
	Since $p_X\in\mathcal{AC}$, we conclude that $\Pr{|X|\leq{1}/{\sqrt{m}}}$ vanishes as $m$ tends to $\infty$.
	Hence, the corollary achieved from Lemma \ref{lmm.QuantNorm}.
\end{proof}

The following lemma discusses the entropy of quantized version of multiplies of a continuous random variable.
\begin{lemma} \label{lmm.H([aX])}
	Let $X\sim p$ be a continuous random variable, and $m\in(0,\infty)$ be arbitrary.
	If $\H{\quant{X}{m}}$ exists, then for all $a\in (0,\infty)$, we have
	\begin{equation*}
		\H{\quant{aX}{\frac{m}{a}}}
		= \H{\quant{X}{m}}.
	\end{equation*}
\end{lemma}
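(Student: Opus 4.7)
The plan is to reduce the claim to the invariance of discrete entropy under bijective relabeling. Starting from the definition of amplitude quantization in Definition \ref{def.AmpQuant}, I would compute
\begin{equation*}
\quant{aX}{m/a} = \tfrac{a}{m}\left\lfloor \tfrac{1}{2} + \tfrac{m}{a}\cdot aX\right\rfloor = \tfrac{a}{m}\left\lfloor \tfrac{1}{2} + mX\right\rfloor = a\cdot\quant{X}{m},
\end{equation*}
where the middle equality uses that the inner product $\tfrac{m}{a}\cdot a = m$ cancels cleanly, independent of $a>0$.

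Having established this pointwise identity, the random variable $\quant{aX}{m/a}$ is just the image of the discrete random variable $\quant{X}{m}$ under the map $y\mapsto ay$. Since $a\in(0,\infty)$, this map is strictly monotone and hence a bijection from the countable support $\{k/m:k\in\bZ\}$ of $\quant{X}{m}$ onto the support $\{ak/m:k\in\bZ\}$ of $\quant{aX}{m/a}$. In particular the induced pmfs satisfy $P_{\quant{aX}{m/a}}[ay] = P_{\quant{X}{m}}[y]$ for every $y$ in the support, so the two pmfs have identical multisets of values.

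Applying Definition \ref{def.Ent}, the entropy depends only on the multiset of probability masses, and hence
\begin{equation*}
\H{\quant{aX}{m/a}} = \sum_{y}P_{\quant{X}{m}}[y]\log\tfrac{1}{P_{\quant{X}{m}}[y]} = \H{\quant{X}{m}},
\end{equation*}
with the left-hand sum converging if and only if the right-hand sum does. Thus the hypothesis that $\H{\quant{X}{m}}$ exists automatically transfers to the scaled quantization. There is no real obstacle here; the whole content of the lemma is the algebraic cancellation in the first display, and the rest is the elementary fact that entropy is invariant under bijections of the alphabet.
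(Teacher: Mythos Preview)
Your proof is correct and follows essentially the same approach as the paper: both establish that $\quant{aX}{m/a}$ and $\quant{X}{m}$ have identical probability mass functions up to a bijective relabeling of their supports, whence the entropies coincide. Your version is arguably a touch cleaner, since you first derive the pointwise identity $\quant{aX}{m/a}=a\,\quant{X}{m}$ directly from the definition, making the bijection $y\mapsto ay$ immediate; the paper instead verifies the equality $\Pr{\quant{aX}{m/a}=k\tfrac{a}{m}}=\Pr{\quant{X}{m}=k\tfrac{1}{m}}$ cell by cell, but the content is the same.
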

The lemma is proved in Section \ref{subsec:ProofH([aX]}.

In the following lemma, we extend Remark \ref{rem.CEntDim} to an arbitrary shifted continuous random variables.
\begin{lemma} \label{lmm.DifEntShft}
	Let $X\sim p$ be a continuous random variable with a piecewise continuous pdf $p(x)$.
	For an arbitrary sequence $\lbrace c_m \rbrace_{m=1}^\infty\subset \bR$, we have 
	\begin{equation*}
		\lim_{m\to\infty}{\H{\quant{X+c_m}{m}}-\log m}=\h{X},
	\end{equation*}
	provided that
	\if@twocolumn
		\begin{equation*}
		\H{\quant{X}{1}} < \infty,
		\end{equation*}
		\begin{equation*}
		\int_\bR{p(x) \left| \log \tfrac{1}{p(x)} \right| \ud x} < \infty, ~~~\text{and}
		\end{equation*}
		\begin{equation*}
		\mathrm{ess}\sup_{x\in\bR}{p(x)}=L < \infty,
		\end{equation*}
	\else
		\begin{align*}
		\H{\quant{X}{1}} < \infty, ~~~
		\int_\bR{p(x) \left| \log \tfrac{1}{p(x)} \right| \ud x} < \infty, ~~~\text{and}~~~
		\mathrm{ess}\sup_{x\in\bR}{p(x)}=L < \infty,
		\end{align*}
	\fi
	where $\quant{X}{1}$ is the quantized version of $X$ with step size $1$.
\end{lemma}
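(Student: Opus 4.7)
The plan is to combine Lemma~\ref{lmm:H([X])-ln m=h(q)} with a small perturbation argument that reduces the problem to an adaptation of R\'enyi's argument (Remark~\ref{rem.CEntDim}). First, invoke Lemma~\ref{lmm:H([X])-ln m=h(q)} to rewrite
$$\H{\quant{X+c_m}{m}} - \log m = \h{\widetilde{(X+c_m)}_m}.$$
Because the discrete entropy $\H{\quant{X+c_m}{m}}$ is unchanged when $c_m$ is shifted by any integer multiple of $1/m$ (the quantization bins are merely relabeled), one may assume without loss of generality that $c_m \in [-\frac{1}{2m},\frac{1}{2m})$, and in particular $c_m \to 0$. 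By translation invariance of differential entropy, $\h{\widetilde{(X+c_m)}_m} = \h{\tilde{p}_m}$, where $\tilde{p}_m$ is the piecewise-constant density equal to $m\int_{J_{i,m}} p(y)\,\ud y$ on $J_{i,m} := [\frac{i-1/2}{m}-c_m,\frac{i+1/2}{m}-c_m)$. The task thus reduces to proving $\h{\tilde{p}_m}\to\h{p}$.

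A direct calculation rewrites the gap as a Kullback--Leibler divergence:
$$\h{\tilde{p}_m} - \h{p} = \int_\bR p(x)\log\frac{p(x)}{\tilde{p}_m(x)}\,\ud x = D(p\|\tilde{p}_m) \geq 0,$$
so it suffices to show $D(p\|\tilde{p}_m)\to 0$. At every continuity point of $p$ (hence Lebesgue-almost everywhere, since a piecewise continuous function has a null discontinuity set), the mean value theorem yields $\tilde{p}_m(x)\to p(x)$ as $m\to\infty$, because both the mesh $1/m$ and the shift $|c_m|$ vanish. Consequently the integrand converges to $0$ pointwise a.e.

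The remaining task is to pass the limit through the integral, which I would do by dominated convergence. The upper bound $\tilde{p}_m\leq L$ (inherited from $\mathrm{ess}\sup p \leq L$) gives $p\log(p/\tilde{p}_m) \geq p\log(p/L)$, which is integrable by the hypothesis $\int p|\log(1/p)|\,\ud x<\infty$. The main technical obstacle is obtaining an integrable upper bound on the positive part, since $\tilde{p}_m$ may undershoot $p$ near the discontinuities of $p$. I would split $\bR$ into a ``good'' set $\{x:\tilde{p}_m(x)\geq p(x)/2\}$, on which $|\log(p/\tilde{p}_m)|\leq\log 2$ and dominated convergence applies directly, and a ``bad'' set whose Lebesgue measure vanishes as $m\to\infty$ (being confined to a shrinking neighborhood of the discontinuities of $p$). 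On the bad set, the contribution is controlled using $\int p|\log p|<\infty$ together with the tail control afforded by $\H{\quant{X}{1}}<\infty$ (which bounds the mass of $p$ outside any large interval). Once this uniform integrability is secured, $D(p\|\tilde{p}_m)\to 0$ follows and the lemma is proved.
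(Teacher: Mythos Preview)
Your reduction to $c_m\in[-\tfrac{1}{2m},\tfrac{1}{2m})$ and the identity $\h{\tilde p_m}-\h{p}=D(p\Vert\tilde p_m)$ are correct; the latter is a nice observation (it follows because $\tilde p_m$ is constant on each bin and has the same bin-mass as $p$, so $\int p\log\tilde p_m=\int\tilde p_m\log\tilde p_m$). The pointwise a.e.\ convergence $\tilde p_m\to p$ and the integrable lower envelope $p\log(p/L)$ are also fine.

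The gap is in the last step. Your good/bad dichotomy does not close the argument. On the bad set $B_m=\{\tilde p_m<p/2\}$ you have no lower bound on $\tilde p_m$, so $p\log(p/\tilde p_m)$ is not dominated by anything integrable; knowing that $|B_m|\to 0$ (which itself is not obvious if $p$ has infinitely many discontinuities) does not by itself force $\int_{B_m}p\log(p/\tilde p_m)\to 0$. Moreover, the hypothesis $\H{\quant{X}{1}}<\infty$ does not control this bad set: it governs the \emph{tails} of $p$, whereas $B_m$ sits in the bulk near the discontinuities of $p$. Your sketch conflates these two regions and never explains what quantity $\H{\quant{X}{1}}<\infty$ actually bounds.

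The paper proceeds differently: it splits $\bR=[-l,l]\cup\{|x|>l\}$. On the compact part it uses uniform continuity of $p$ (piecewise) to get $\int_{-l}^{l}q_m\log\tfrac{1}{q_m}\to\int_{-l}^{l}p\log\tfrac{1}{p}$. On the tails it shows, \emph{uniformly in $m$}, that both $\int_{|x|>l}p\log\tfrac{1}{p}$ and $\int_{|x|>l}q_m\log\tfrac{1}{q_m}$ vanish as $l\to\infty$; the latter is where $\H{\quant{X}{1}}<\infty$ and the bound $p\le L$ are actually used. If you want to salvage the KL route, you will end up needing exactly this uniform tail estimate on $\int_{|x|>l}\tilde p_m\log\tilde p_m$ (equivalently $\int_{|x|>l}p\log\tilde p_m$), which is the substantive content of the paper's proof.
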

The proof can be found in Section \ref{subsec:ProofDifEntShft}.

\subsection{On Sum of Independent Random Variables} \label{subsec:AC+RV}
The following lemma is well-known and can be easily proved. Hence, we only mention it without proof.
\begin{lemma} \label{lmm:DC+DC}
	Let $X,Y$ be two independent discrete-continuous random variables where $p,q$ are the probability of $X,Y$ being discrete, respectively.
	Hence, we can write
	\begin{equation*}
		X=
		\begin{cases}
			X_D & p \\
			X_c & 1-p
		\end{cases},
		\qquad Y=
		\begin{cases}
			Y_D & q \\
			Y_c & 1-q
		\end{cases},
	\end{equation*}
	where $X_D, Y_D$ are discrete, and $X_c, Y_c$ are continuous random variables.
	Therefore, $Z=X+Y$ is also a discrete-continuous random variable such that
	\begin{equation*}
		Z=
		\begin{cases}
			X_D + Y_D & p q \\
			X_D + Y_c & p (1-q) \\
			X_c + Y_D & (1-p) q \\
			X_c + Y_c & (1-p) (1-q)
		\end{cases},
	\end{equation*}
	where the first case makes the discrete part while the other cases make the continuous part of $Z$.
\end{lemma}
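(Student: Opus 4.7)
My plan is to verify the claim by conditioning on the ``branch'' (discrete vs.\ continuous) taken by each of $X$ and $Y$. Since $X$ and $Y$ are independent, these four branches are jointly independent events with probabilities $pq$, $p(1-q)$, $(1-p)q$, $(1-p)(1-q)$. Writing $F_Z$ in terms of the law of total probability, we obtain
\begin{equation*}
\Pr{Z\in\mathcal{A}} = pq\Pr{X_D+Y_D\in\mathcal{A}} + p(1-q)\Pr{X_D+Y_c\in\mathcal{A}} + (1-p)q\Pr{X_c+Y_D\in\mathcal{A}} + (1-p)(1-q)\Pr{X_c+Y_c\in\mathcal{A}},
\end{equation*}
for every Borel set $\mathcal{A}\subset\mathbb{R}$. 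Therefore, it suffices to classify each of the four summands.

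First I would handle $X_D+Y_D$: the sum of two independent discrete random variables takes values in the sumset $\mathcal{D}_X+\mathcal{D}_Y$, which is a countable subset of $\mathbb{R}$; hence this term contributes only to the discrete part of $Z$. Next I would handle $X_c+Y_c$ by noting that the convolution of two densities in $\mathcal{AC}$ is itself in $\mathcal{AC}$ (the convolution integral is a bona fide density by Fubini). For the mixed terms $X_D+Y_c$ and $X_c+Y_D$, I would invoke that conditional on $X_D=x_k$, the translate $x_k+Y_c$ is continuous with density $p_{Y_c}(\cdot - x_k)$, so the marginal density becomes the convex combination $\sum_k P_{X_D}[x_k]\,p_{Y_c}(\cdot - x_k)$, which is again in $\mathcal{AC}$ (similarly for the symmetric case).

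Next I would verify that the continuous branches contribute no additional atomic mass to $Z$: an absolutely continuous distribution assigns zero mass to every countable set, so the only branch producing atomic mass is $X_D+Y_D$, and this contributes probability $pq$. This identifies the discrete support of $Z$ as a subset of $\mathcal{D}_X+\mathcal{D}_Y$ with $\Pr{Z\in\mathcal{D}_X+\mathcal{D}_Y}=pq$ and the complementary continuous probability equal to $1-pq$. Finally, combining the three continuous contributions with weights $p(1-q)$, $(1-p)q$, and $(1-p)(1-q)$ and normalizing by $1-pq$ gives the pdf of the continuous part $Z_c$, while $pq\cdot P_{X_D+Y_D}$ yields the discrete part $Z_D$. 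This matches the stated form in the lemma.

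The main (minor) obstacle is the verification that mixing an atomic measure with an absolutely continuous one via convolution preserves absolute continuity; once the convex-combination expression for the density is written down, this is immediate from the dominated convergence theorem applied to any Lebesgue-null Borel set. Everything else is bookkeeping with probabilities of the four independent branches, so the proof reduces to an application of the law of total probability plus this convolution observation.
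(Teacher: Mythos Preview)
Your argument is correct and is precisely the standard verification one would give: condition on the four independent branches, observe that the sum of two discrete variables is discrete while any branch containing a continuous summand yields an absolutely continuous law (a countable convex combination of translated densities, or a convolution of densities), and then read off the discrete/continuous decomposition of $Z$ with weights $pq$ and $1-pq$. The paper itself does not prove this lemma---it explicitly states that the result ``is well-known and can be easily proved'' and omits the argument---so there is nothing to compare against; your write-up supplies exactly the kind of routine justification the authors had in mind.
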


\begin{lemma} \label{lmm:h(X+Y)c}
	Let $X,Y$ be two independent discrete-continuous random variables where $X_D,Y_D$ are the discrete part, and $X_c,Y_c$ are the continuous part of $X$ and $Y$, respectively.
	Then, we have that
	\begin{equation*}
		\h{Z_c}
		\geq \min\{\h{X_c} , \h{Y_c}\},
	\end{equation*}
	where $Z=X+Y$.
\end{lemma}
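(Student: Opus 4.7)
The plan is to use Lemma \ref{lmm:DC+DC} to decompose $Z_c$ as a mixture of three continuous distributions, and then combine the concavity of differential entropy with the fact that convolution with an independent random variable cannot decrease the differential entropy of a continuous random variable.

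First I would write out the density of $Z_c$ explicitly. Let $p = \Pr\{X \in \mathcal{D}_X\}$ and $q = \Pr\{Y \in \mathcal{D}_Y\}$. By Lemma \ref{lmm:DC+DC}, the discrete part of $Z = X + Y$ occurs with probability $pq$ (from the case $X_D + Y_D$), and the continuous part has pdf
\begin{equation*}
	p_{Z_c} = w_1\, p_{X_D + Y_c} + w_2\, p_{X_c + Y_D} + w_3\, p_{X_c + Y_c},
\end{equation*}
where $w_1 = \tfrac{p(1-q)}{1-pq}$, $w_2 = \tfrac{(1-p)q}{1-pq}$, $w_3 = \tfrac{(1-p)(1-q)}{1-pq}$ are nonnegative weights summing to one. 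By concavity of the differential entropy functional, it follows that
\begin{equation*}
	\h{Z_c} \geq w_1\, \h{X_D + Y_c} + w_2\, \h{X_c + Y_D} + w_3\, \h{X_c + Y_c}.
\end{equation*}

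Next I would lower-bound each of the three terms by $\min\{\h{X_c}, \h{Y_c}\}$. The key tool is that for independent random variables $U, V$ with $U$ continuous, one has $\h{U+V} \geq \h{U}$. Indeed, conditioned on $V = v$, the random variable $U+V$ has the same differential entropy as $U$ by translation invariance, so $\h{U+V \mid V} = \h{U}$, and then the standard conditioning-reduces-entropy inequality gives $\h{U+V} \geq \h{U+V \mid V} = \h{U}$. (The independence of $U$ and $V$ is used to identify the conditional distribution as the translated distribution of $U$.) Applying this with $(U,V) = (Y_c, X_D)$, $(X_c, Y_D)$, and $(X_c, Y_c)$ respectively, we obtain
\begin{equation*}
	\h{X_D + Y_c} \geq \h{Y_c}, \quad \h{X_c + Y_D} \geq \h{X_c}, \quad \h{X_c + Y_c} \geq \max\{\h{X_c}, \h{Y_c}\}.
\end{equation*}

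Each of the three bounds is at least $\min\{\h{X_c}, \h{Y_c}\}$, and since $w_1 + w_2 + w_3 = 1$, the convex combination is also at least $\min\{\h{X_c}, \h{Y_c}\}$, yielding the claim. The only subtlety I anticipate is making sure the differential entropies involved are well-defined (or at least take values in $[-\infty, \infty)$ so the inequalities remain meaningful); this is where one should invoke the hypothesis that $\h{X_c}$ and $\h{Y_c}$ are finite, together with the observation from Lemma \ref{lmm:DC+DC} that the three convolved distributions are absolutely continuous, so that $\h{X_D + Y_c}$, $\h{X_c + Y_D}$, and $\h{X_c + Y_c}$ are well-defined (possibly $+\infty$, which only strengthens the bound).
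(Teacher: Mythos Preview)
Your proof is correct and essentially identical to the paper's. The only cosmetic difference is that the paper introduces an auxiliary mixture-indicator variable $U\in\{\mathrm{dc},\mathrm{cd},\mathrm{cc}\}$ and writes the first inequality as $\h{Z_c}\geq \h{Z_c\mid U}$, whereas you invoke concavity of differential entropy on the mixture directly; these are the same step, and the subsequent conditioning argument $\h{U+V}\geq \h{U+V\mid V}=\h{U}$ is identical in both.
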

The proof can be found in Section \ref{subsec:prf:lmm:h(X+Y)c}.

\begin{corollary} \label{cor:h(X+Y)c}
	Using induction, it can be proved that for \emph{i.i.d.} continuous-discrete random variables $X_1,\cdots,X_n$, we have that
	\begin{equation*}
		\h{S_{n,c}} \geq \h{X_{1,c}},
	\end{equation*}
	where $S_{n,c}$ and $X_{1,c}$ are the corresponding continuous random variables of $X_1+\cdots+X_n$ and $X_1$, respectively.
\end{corollary}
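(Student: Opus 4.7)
The plan is a direct induction on $n$, as the corollary statement itself hints. The base case $n=1$ is immediate, since $S_{1,c} = X_{1,c}$ and the inequality reduces to equality. All the work is concentrated in the inductive step, which chains together Lemma \ref{lmm:DC+DC} (closure of the discrete-continuous class under independent sums) and Lemma \ref{lmm:h(X+Y)c} (the min-bound on the differential entropy of the continuous part of a sum).

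For the inductive step, I would assume the statement holds for some $n-1 \geq 1$, so that $S_{n-1} := X_1 + \cdots + X_{n-1}$ has a continuous part $S_{n-1,c}$ with $\h{S_{n-1,c}} \geq \h{X_{1,c}}$. First, I need to verify that $S_{n-1}$ is itself a discrete-continuous random variable so that Lemma \ref{lmm:h(X+Y)c} is applicable at the next stage; this follows by iterating Lemma \ref{lmm:DC+DC}, which guarantees that the sum of two independent discrete-continuous random variables is again discrete-continuous (with a positive probability on both the discrete and continuous components). Independence of $S_{n-1}$ and $X_n$ is inherited from the independence of $X_1, \ldots, X_n$.

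Then, applying Lemma \ref{lmm:h(X+Y)c} with the roles $X \leftarrow S_{n-1}$ and $Y \leftarrow X_n$, I obtain
\begin{equation*}
    \h{S_{n,c}} \;\geq\; \min\bigl\{\h{S_{n-1,c}},\, \h{X_{n,c}}\bigr\}.
\end{equation*}
Because the $X_i$ are identically distributed, $\h{X_{n,c}} = \h{X_{1,c}}$, and by the inductive hypothesis $\h{S_{n-1,c}} \geq \h{X_{1,c}}$; hence the minimum on the right-hand side equals $\h{X_{1,c}}$, completing the induction.

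The main (and only) subtle point is ensuring that each partial sum indeed falls into the discrete-continuous class assumed by Lemma \ref{lmm:h(X+Y)c}. If either $p$ or $q$ in Lemma \ref{lmm:DC+DC} were to degenerate (i.e., the sum ended up being purely discrete or purely continuous), the statement would still be true but the invocation of Lemma \ref{lmm:h(X+Y)c} would need minor reinterpretation; since the $X_i$ are genuinely discrete-continuous with $0 < \Pr{X_i \in \mathcal{D}} < 1$, the combinatorial expansion in Lemma \ref{lmm:DC+DC} produces both a strictly positive discrete mass and a strictly positive continuous mass, so no degeneracy arises.
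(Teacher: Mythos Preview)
Your proposal is correct and follows exactly the approach the paper intends: the corollary is stated as an immediate consequence of Lemma~\ref{lmm:h(X+Y)c} via induction, and the paper gives no further detail beyond that. Your additional care in invoking Lemma~\ref{lmm:DC+DC} to ensure each partial sum remains genuinely discrete-continuous is a reasonable point that the paper leaves implicit.
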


\begin{lemma} \label{lmm:AC+RV} 
	Let $X$ be a continuous random variable defined in Definition \ref{def.ACRV} with probability density function $p_X(x)$ and $Y$ be an arbitrary random variable with probability measure $\mu_Y$, independent of $X$. Let  $Z=X+Y$ with probability measure $\mu_Z$.
	Then, $Z$ is also continuous random variable with pdf $p_Z(z)$ defined as following:
	\begin{equation*}
		p_Z(z) = \E{p_X \left(z - Y\right)}.
	\end{equation*}
	Moreover,
	\begin{equation*}
		p_Z(z) \leq \ell,
		\qquad \forall z\in\bR,
	\end{equation*}
	provided that
	\begin{equation*}
		p_X(x) \leq \ell,
		\qquad \forall x\in\bR.
	\end{equation*}
\end{lemma}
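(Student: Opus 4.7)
The plan is to derive the density of $Z$ by a standard Fubini-type computation using the independence of $X$ and $Y$, and then obtain the uniform bound by pulling the pointwise inequality on $p_X$ through the expectation. Since everything in sight is nonnegative, no integrability issues will arise beyond what is already guaranteed by $p_X$ being a probability density.

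First, I would compute $\Pr{Z\in\mathcal{A}}$ for an arbitrary Borel set $\mathcal{A}\in\mathcal{B}(\bR)$ by conditioning on $Y$. By independence,
\begin{equation*}
\Pr{Z\in\mathcal{A}} = \int_\bR \Pr{X+y\in\mathcal{A}}\ud\mu_Y(y)
= \int_\bR \int_{\mathcal{A}-y} p_X(x)\ud x\,\ud\mu_Y(y).
\end{equation*}
Performing the change of variables $x=z-y$ in the inner integral gives the inner integrand $p_X(z-y)\mathds{1}_{\mathcal{A}}(z)$. Because $p_X\geq 0$, Tonelli's theorem allows the order of integration to be swapped without any integrability hypothesis, yielding
\begin{equation*}
\Pr{Z\in\mathcal{A}} = \int_\mathcal{A}\left(\int_\bR p_X(z-y)\ud\mu_Y(y)\right)\ud z
= \int_\mathcal{A}\E{p_X(z-Y)}\ud z.
\end{equation*}
Since this holds for every Borel $\mathcal{A}$, the probability measure $\mu_Z$ is absolutely continuous with respect to the Lebesgue measure, so by Definition~\ref{def.ACRV} $Z$ is a continuous random variable with density $p_Z(z)=\E{p_X(z-Y)}$, as claimed.

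For the uniform bound, assume $p_X(x)\leq\ell$ for all $x\in\bR$. Then for every fixed $z\in\bR$ and every realization of $Y$ we have $p_X(z-Y)\leq\ell$ almost surely, and monotonicity of expectation gives $p_Z(z)=\E{p_X(z-Y)}\leq\ell$.

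The main (and only) subtlety is that the bound on $p_X$ is really an essential supremum, i.e., $p_X(x)\leq\ell$ holds only for Lebesgue-a.e.\ $x$, so strictly speaking one needs $p_X(z-Y)\leq\ell$ $\mu_Y$-almost surely for each fixed $z$. This is handled by noting that the exceptional set $\mathcal{N}=\{x:p_X(x)>\ell\}$ has zero Lebesgue measure, and one may always choose a version of $p_X$ for which $\mathcal{N}=\emptyset$; the density $p_Z$ is defined only up to Lebesgue-null sets anyway, so replacing $p_X$ by its pointwise truncation $\min(p_X,\ell)$ changes neither the distribution of $X$ nor the value of $p_Z$ almost everywhere. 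Apart from this mild bookkeeping, the proof is a direct application of Tonelli's theorem.
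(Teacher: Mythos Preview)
Your proof is correct and follows the same underlying idea as the paper---computing the law of $Z$ by conditioning on $Y$ and interchanging the order of integration---but your route is cleaner. The paper works with the CDF, writing $\Pr{Z\leq z}=\E{F_X(z-Y)}$, and then justifies the identity $\E{F_X(z-Y)}=\int_{-\infty}^z\E{p_X(t-Y)}\ud t$ by truncating to a finite interval $[-\ell,z]$, applying Fubini there, and separately showing that the boundary term $\E{F_X(-\ell-Y)}$ vanishes as $\ell\to\infty$. You bypass all of this by working with an arbitrary Borel set $\mathcal{A}$ and invoking Tonelli's theorem, which applies directly because the integrand is nonnegative and requires no truncation or limiting step. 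Your version also explicitly handles the boundedness claim and the essential-supremum subtlety, which the paper omits entirely (presumably leaving it as obvious once the density formula is established).
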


The theorem is proved in Section \ref{subsec:prf:lmm:AC+RV}.

\subsection{Maximum of $\zeta(n)$ for a Class of White Noise Processes}
In the following lemma we find maximum $\zeta(n)$, defined in Theorem \ref{thm:GenerelEntropyDim}  over a class of white noise processes. 

\begin{lemma} \label{lmm:zeta<EntP}
	Let $X(t)$ be a white L\'evy noise such that $X_0 := \int_0^1{X(t) \ud t}$ is a continuous random variable (as discussed in Theorem \ref{thm:InfDivRVType}).
	Furthermore, assume that $h(X_0)$ and $\H{\quant{X_0}{1}}$ are well-defined.
	Let $m'(n)$ and $\zeta(n)$ be any function satisfying the statement \eqref{eqn:H/k-lnm-z=0} in Theorem \ref{thm:GenerelEntropyDim} for the white noise $X(t)$.
	Define $m(n)$ an arbitrary function that $m(n)\geq m'(n)$ and $\lim_{n\to\infty}{m(n)/\sqrt{n}}=\infty$.
	Then, we have that
	\begin{equation*}
		\limsup_{n\to\infty}\left({\zeta(n) - \log\frac{1}{\sqrt{n}}}\right) \leq \h{X_0}.
	\end{equation*}
\end{lemma}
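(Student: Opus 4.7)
The plan is to reduce the claim to a single application of the Shannon entropy power inequality (EPI), after first pinning down $\zeta(n)$ asymptotically.

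First I would invoke the continuous bullet of Theorem \ref{thm:GenerelEntropyDim}: the canonical pair $\kappa^\star(n)=n$, $\zeta^\star(n)=\h{X_1^{(n)}}$ already satisfies \eqref{eqn:H/k-lnm-z=0C} (with some admissible $m^\star(n)$) under the assumptions of the present lemma. By the asymptotic uniqueness result (Theorem \ref{thm:Uniqueness}), any other admissible pair $(\kappa(n),\zeta(n))$ must satisfy $\kappa(n)=n$ eventually and $|\zeta(n)-\h{X_1^{(n)}}|\to 0$. Therefore it suffices to prove
\begin{equation*}
\limsup_{n\to\infty}\bigl(\h{X_1^{(n)}}+\tfrac{1}{2}\log n\bigr)\le \h{X_0}.
\end{equation*}

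Second, I would exploit the additivity $X_0=\int_0^1 X(t)\,\ud t=\sum_{i=1}^{n} X_i^{(n)}$, which is immediate from splitting the integral into $n$ equal pieces. The summands $X_i^{(n)}$ are i.i.d.: identical law follows from stationarity of the white L\'evy noise, and joint independence follows from the disjoint-support / independent-atom property stated in Lemma \ref{lmm:Seperatedindependent} (together with the standard extension from pairwise to joint independence for white noise atoms). Each $X_i^{(n)}$ is continuous with well-defined differential entropy (this is precisely one of the regularity conditions stipulated in the continuous bullet of Theorem \ref{thm:GenerelEntropyDim} that $\zeta(n)$ in the lemma's hypothesis presumes). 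I can therefore apply Shannon's EPI to the i.i.d. sum:
\begin{equation*}
\ue^{2\h{X_0}}=\ue^{2\h{X_1^{(n)}+\cdots+X_n^{(n)}}}\ge n\,\ue^{2\h{X_1^{(n)}}},
\end{equation*}
which rearranges to $\h{X_1^{(n)}}\le \h{X_0}-\tfrac{1}{2}\log n$.

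Combining the two steps,
\begin{equation*}
\zeta(n)+\tfrac{1}{2}\log n = \h{X_1^{(n)}}+\tfrac{1}{2}\log n + o(1) \le \h{X_0}+o(1),
\end{equation*}
and noting that $-\log(1/\sqrt n)=\tfrac{1}{2}\log n$ finishes the proof after taking $\limsup$.

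The main obstacle I anticipate is bookkeeping around the well-definedness of $\h{X_1^{(n)}}$ and the strict form of EPI used. If $\h{X_1^{(n)}}=-\infty$ the inequality is trivial, and if it is finite the standard Shannon EPI applies because $X_0$ inherits a density from its continuous summand. The only other subtle point is citing asymptotic uniqueness correctly: Theorem \ref{thm:Uniqueness} compares two admissible $\zeta$'s but through \eqref{eqn:thm:Uniqkn}--\eqref{eqn:thm:Uniqzn}, which together immediately give $\zeta(n)=\h{X_1^{(n)}}+o(1)$. Beyond these checks the argument is routine.
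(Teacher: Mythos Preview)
Your proposal is correct and follows essentially the same route as the paper: apply the Shannon EPI to the i.i.d. decomposition $X_0=\sum_{i=1}^n X_i^{(n)}$ to obtain $\h{X_1^{(n)}}\le \h{X_0}-\tfrac12\log n$, and then use the asymptotic uniqueness (Theorem \ref{thm:Uniqueness}) together with the canonical choice $\zeta^\star(n)=\h{X_1^{(n)}}$ from Theorem \ref{thm:GenerelEntropyDim} to transfer the bound to an arbitrary admissible $\zeta(n)$. The only difference is cosmetic ordering: the paper first proves the EPI bound and then invokes uniqueness, whereas you first reduce via uniqueness and then apply EPI.
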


The theorem is proved in Section \ref{subsec:prf:lmm:zeta<EntP}.

\section{Proofs} \label{sec.Proofs}
\subsection{Proof of Theorem \ref{thm:Uniqueness}} \label{subsec:prf:thm:Uniqueness} 
	\textbf{Proof of the first case}:

	\emph{Proof of \eqref{eqn:thm:Uniqkn}}: There is $n_0\in\mathbb{N}$ such that for any $n\geq n_0$ we have
	\begin{equation} \label{enq:thm:Uniq1}
		\sup_{m\geq m(n)}
		\left|\frac{\mathcal{H}_{m,n}(X)}{\kappa_i(n)} - \log m - \zeta_i(n) \right|
		\leq 1,
		\qquad i=1,2,
	\end{equation}
	where $m(n)$ can be any function larger than $\max\{m_1(n),m_2(n)\}$.
	We can write
	\begin{align*}
		&\left|\frac{\mathcal{H}_{m,n}(X)}{\kappa_1(n)} - \log m - \zeta_1(n) \right| \\
		&\qquad =\left|\frac{\kappa_2(n)}{\kappa_1(n)}
		\left[\frac{\mathcal{H}_{m,n}(X)}{\kappa_2(n)} - \log m -\zeta_2(n)\right]
		-\left(1-\frac{\kappa_2(n)}{\kappa_1(n)}\right)\log m 
		- \zeta_1(n) + \zeta_2(n)\frac{\kappa_2(n)}{\kappa_1(n)} \right| \\
		&\qquad \geq \left|1-\frac{\kappa_2(n)}{\kappa_1(n)}\right| \log m
		- \frac{\kappa_2(n)}{\kappa_1(n)}
		\left|\frac{\mathcal{H}_{m,n}(X)}{\kappa_2(n)} - \log m -\zeta_2(n)\right|
		-\left|\zeta_1(n) - \zeta_2(n)\frac{\kappa_2(n)}{\kappa_1(n)} \right|.
	\end{align*}
	Hence, utilizing \eqref{enq:thm:Uniq1} for $i=2$, we have that for all $m\geq m(n)$, we have that
	\begin{align*}
		&\left|\frac{\mathcal{H}_{m,n}(X)}{\kappa_1(n)} - \log m - \zeta_1(n) \right|
		\geq \left|1-\frac{\kappa_2(n)}{\kappa_1(n)}\right| \log m
		- \frac{\kappa_2(n)}{\kappa_1(n)}
		-\left|\zeta_1(n) - \zeta_2(n)\frac{\kappa_2(n)}{\kappa_1(n)} \right|.
	\end{align*}
If $\kappa_1(n)\neq\kappa_2(n)$, we can select $m>m(n)$ large enough such that the right-hand side of the above equation becomes larger than $1$.
	Therefore, \eqref{enq:thm:Uniq1} cannot be satisfied for $i=1$. 
	Thus $\kappa_1(n)=\kappa_2(n)$ for any $n\geq n_0$.
	
	\emph{Proof of \eqref{eqn:thm:Uniqzn}}:
	From \eqref{eqn:thm:Uniqkn} we obtain that for large enough $n$ we have that $\kappa_1(n)=\kappa_2(n)$.
	Hence, for large enough $n$, we can write
	\begin{align*}
		& \sup_{m\geq m(n)}
		\left|\frac{\mathcal{H}_{m,n}(X)}{\kappa_1(n)} - \log m - \zeta_1(n) \right|
		+\sup_{m\geq m(n)}
		\left|\frac{\mathcal{H}_{m,n}(X)}{\kappa_2(n)} - \log m - \zeta_2(n) \right| \\
	&\qquad= \sup_{m\geq m(n)}
		\left|\frac{\mathcal{H}_{m,n}(X)}{\kappa_1(n)} - \log m - \zeta_1(n) \right|
		+\sup_{m\geq m(n)}
		\left|-\frac{\mathcal{H}_{m,n}(X)}{\kappa_2(n)} + \log m + \zeta_2(n) \right| \\
		& \qquad \geq \sup_{m\geq m(n)}
		\left| \zeta_1(n) - \zeta_2(n) \right|
		=\left| \zeta_1(n) - \zeta_2(n) \right|,
	\end{align*}
	where $m(n):=\max\{m_1(n) , m_2(n)\}$.
	Therefore, according to the assumption, we obtain that
	\begin{align*}
		\lim_{n\to\infty}
		\left| \zeta_1(n) - \zeta_2(n) \right|
		\leq\; & \lim_{n\to\infty} \sup_{m\geq m(n)}
		\left|\frac{\mathcal{H}_{m,n}(X)}{\kappa_1(n)} - \log m - \zeta_1(n) \right| \\
		&+\lim_{n\to\infty} \sup_{m\geq m(n)}
		\left|\frac{\mathcal{H}_{m,n}(X)}{\kappa_2(n)} - \log m - \zeta_2(n) \right| \\
		=\;&0.
	\end{align*}
	Thus, the statement was proved.
	
	\textbf{Proof of the second case}:
	For this case, similarly, we can write
	\begin{align*}
		\lim_{n\to\infty}
		\left| \zeta_1(n) - \zeta_2(n) \right|
		\leq\; & \lim_{n\to\infty} \sup_{m\geq m(n)}
		\left|\mathcal{H}_{m,n}(X) - \zeta_1(n) \right| \\
		&+\lim_{n\to\infty} \sup_{m\geq m(n)}
		\left|\mathcal{H}_{m,n}(X) - \zeta_2(n) \right| \\
		=\;&0.
	\end{align*}
	Therefore the statement is proved.
\qed

\subsection {Proof of Theorem \ref{thm:GenerelEntropyDim}} \label{prf:thm:GenerelEntropyDim}	From \eqref{def.CompCrit} we obtain that
	\begin{equation*}
		\mathcal{H}_{m,n}(X)
		= n\H{\quant{X_1^{(n)}}{m}}.
	\end{equation*}
	Now, we are going to use Theorem \ref{thm.DCEntDim} in order to show that
	\begin{equation*}
		n\H{\quant{X_1^{(n)}}{m}}
		= n d_n \log m + n h_n + n e_{m,n},
	\end{equation*}
	where $d_n$ is the entropy dimension of $X_1^{(n)}$ and $e_{m,n}$ vanishes for every fixed $n$ when $m$ tends to $\infty$.
	To this end, first, we prove the conditions of Theorem \ref{thm.DCEntDim} which are the same for all types of discrete, continuous and discrete-continuous:
	\begin{enumerate}
		\item \label{eqn:X1nD,C,DC}
		$X_1^{(n)}$ is a discrete, continuous, or discrete-continuous random variable, and has the same type as $X_0$,
		\item \label{eqn:H[X1n]<inf}
		$\H{\quant{X_1^{(n)}}{1}}$ is finite.
	\end{enumerate}
	In the next step, we prove the exclusive conditions for each case.
	By choosing $\kappa(n)=1$ in the discrete or $\kappa(n)=n d_n$ in other two cases, and $\zeta(n)=n h_n / \kappa(n)$, we only need to find $m(n)$ such that
	\begin{equation*}
		\lim_{n\to\infty} \sup_{m: m\geq m(n)}{\left|\frac{e_{m,n}}{\kappa(n)}\right|}=0.
	\end{equation*}
Since $e_{m,n}$ vanishes as $m$ tends to $\infty$ for any fixed $n$, there is some $m(n)$ such that for any $m\geq m(n)$ we have $$\left|\frac{e_{m,n}}{\kappa(n)}\right|\leq \frac 1n.$$
This proves the existence of the function $m(n)$.

	It remains to prove items \ref{eqn:X1nD,C,DC} and \ref{eqn:H[X1n]<inf}, and then, find $\kappa(n)$ and $\zeta(n)$ for cases of $X_0$ being discrete, continuous, or discrete-continuous.

	\emph{Proof of item \ref{eqn:X1nD,C,DC}}:
Random variable $X_1^{(n)}$ is the time quantization with step size $1/n$ of a white L\'evy noise with parameters $\sigma$, $\mu$ and $V(a)$.
	It can be alternatively expressed as the time quantization with unit step size of a new white L\'evy noise with scaled parameters $\sigma/\sqrt{n}$, $\mu/n$ and $V(a)/n$.
	Theorem \ref{thm:InfDivRVType} gives conditions for $X_0$ (time quantization with unit step size) being discrete, continuous, or discrete-continuous random variable in terms of $\sigma$, $\mu$, and $V(a)$. These conditions for $\sigma$, $\mu$, and $V(a)$ are equivalent for the conditions for $\sigma/\sqrt{n}$, $\mu/n$, and $V(a)/n$.
	Therefore, random variable $X_1^{(n)}$ has the same type as $X_0$. 

	\emph{Proof of item \ref{eqn:H[X1n]<inf}}:
	From Definition \ref{def.WhiteNoise}, observe that $X_0$ can be written as the sum of $n$ iid random variables $X_i^{(n)}$ for $i\in\{1,\cdots,n\}$.
	Hence, we have that
	\begin{equation*}
		\quant{\sum_{i=1}^n{X_i^{(n)}}}{1}
		= \quant{X_0}{1}.
	\end{equation*}
	Observe that for any quantization of sum of variables we can write
	\begin{equation*}
		\quant{\sum_{i=1}^n{X_i^{(n)}}}{1}
		= \sum_{i=1}^n{\quant{X_i^{(n)}}{1}} + E_n,
	\end{equation*}
	where $E_n$ is a random variable taking values from $\{0,\cdots,n-1\}$.
	Observe that
	\begin{align*}
		\H{\quant{X_1^{(n)}}{1}}
		=\; & \H{\sum_{i=1}^n\quant{X_i^{(n)}}{1} \Big| \quant{X_2^{(n)}}{1}, \cdots, \quant{X_n^{(n)}}{1}} \\
		\leq\; & \H{\sum_{i=1}^n{\quant{X_i^{(n)}}{1}}} 
		= \H{\quant{X_0}{1} - E_n} \\
		\leq\; & \H{\quant{X_0}{1}, E_n}
		\leq \H{\quant{X_0}{1}} + \log n
	\end{align*}
	Therefore, $\H{\quant{X_1^{(n)}}{1}}<\infty$.
	
	Now, we aim to check the exclusive constraints and find $\kappa(n)$ and $\zeta(n)$ for each case.

	\textbf{Case 1: $X_0$ is continuous and at least one of the assumptions in part (2) of Theorem \ref{thm:InfDivRVType} is satisfied}:
	For the first claim, as it was proved in item \ref{eqn:X1nD,C,DC} at the beginning of the proof, observe that $X_1^{(n)}$ is continuous.
	Therefore, according to Theorem \ref{thm.DCEntDim}, $d_n$ is $1$ for every $n$, and
	\begin{equation*}
		\zeta(n) := h_n = \h{X_1^{(n)}},
	\end{equation*}
	provided that $\h{X_1^{(n)}}$ is well-defined, which is true based on the assumption.

	Based on Definitions \ref{def.WhiteNoise} and \ref{def.TimeQuant}, we can write
	\begin{equation*}
		X_1^{(n-1)}
		=X_1^{(n)}+E,
	\end{equation*}
	where $E$ and $X_1^{(n)}$ are independent and
	\begin{equation*}
		\ue^{\uj \omega E}
		= \exp\left(\omega \int_{\frac{1}{n}}^{\frac{1}{n-1}}{X(t) \ud t} \right).
	\end{equation*}
Now observe that for any two independent continuous random variables $X,Y$ we have that
	\begin{equation*}
		\h{X+Y}
		\geq \h{X+Y|Y}
		=\h{X}.
	\end{equation*}
	Therefore,
	\begin{equation*}
		\zeta(n)
		\leq \zeta(n-1).
	\end{equation*}
As a result, $\zeta(n)$ is a non-increasing function.

	\textbf{Case 2: $X_0$ is discrete-continuous}:
	From item \ref{eqn:X1nD,C,DC} from the beginning of the proof,  $X_1^{(n)}$ is discrete-continuous if $X_0$ is discrete-continuous.
	From Lemma \ref{lmm:finite v->Poisson} we have that $X_1^{(n)}$ is sum of a constant $\mu'/n:=(\mu/n)-(\lambda/n) \int_{[-1,1]\setminus\{0\}}{a \ud V(a)}$ and the integral over $[0, \frac{1}{n}]$ of an impulsive Poisson white noise with rate $\lambda$ and  discrete-continuous amplitude density $p_A(x)$ where
	\begin{align}
		\lambda := \int_{\bR\setminus\{0\}}{\ud V(a)} < \infty,
		\qquad A \sim p_A(x) := \frac{1}{\lambda}\tfrac{\ud}{\ud x}V(x). \label{eqn:def-lambda-use}
	\end{align}
	The reason that $A$ is discrete-continuous is that, according to Theorem \ref{thm:InfDivRVType}, when $X_0$ is discrete-continuous, the function $V(a)$ must be bounded and discrete-continuous \emph{i.e.} $V_{cs}(a)\equiv 0$.
	Note that, according to Remark \ref{rmk:X0DC_AcAD}, the discrete and the continuous parts of $A$ are $A_D$ and $A_c$ with distributions $p_{A_D}$ and $p_{A_c}$, respectively.

	The distribution of the integral of an impulsive Poisson white noise over $[0,\frac{1}{n}]$ is given in  Lemma \ref{lmm.PoissonQuantDist}.
	Consider a Poisson random variable $Q_1$ with rate $\lambda/n$.
	Then, utilizing Lemma \ref{lmm.PoissonQuantDist}, we can define the following conditional distribution:
	\begin{equation} \label{eqn:PoissonTMP}
		p_{X_1^{(n)}|Q_1}(x|k)
		= p_A^{*k}(x)
		:=
		\begin{cases}
			\big( \overbrace{p_A*\cdots*p_A}^k \big)(x) & k\in\bN \\
			\delta(x) & k=0
		\end{cases}.
	\end{equation}
	According to Theorem \ref{thm.DCEntDim}, we have that
	\begin{equation*}
		d_n = 1 - \Pr{X_1^{(n)} \text{ is discrete}}.
	\end{equation*}
	Therefore, by the definition of $Q$ we can write
	\begin{align*}
		\Pr{X_1^{(n)} \text{ is discrete}}
		=& \sum_{k=0}^\infty{p_{Q_1}(k) \Pr{X_1^{(n)} \text{ is discrete} \Big| Q_1=k}} \\
		=& \ue^{-\frac{\lambda}{n}} \times 1
		+ \sum_{k=1}^\infty{p_{Q_1}(k) \Pr{\sum_{i=1}^k{A_i} \text{ is discrete}}},
	\end{align*}
	where $A_i$ are iid random variable with probability density $p_A(x)$.
	Utilizing the result of Lemma \ref{lmm:DC+DC} that
	\begin{equation*}
		\Pr{X+Y \text{ is discrete}}
		=\Pr{X \text{ is discrete}}\Pr{Y \text{ is discrete}},
	\end{equation*}	
	where $X,Y$ are independent discrete-continuous random variables, we have that
	\begin{align}
		d_n = & 1 - \Pr{X_1^{(n)} \text{ is discrete}} \nonumber\\
		=& 1- \ue^{-\frac{\lambda}{n}}
		- \ue^{-\frac{\lambda}{n}}
		\sum_{k=2}^\infty {\frac{(\lambda/n)^k}{k!}\Pr{p_{A} \text{ is discrete}}^k} \nonumber\\
		=& 1- \ue^{-\frac{\lambda}{n}}
		-\ue^{-\frac{\lambda}{n}} (\ue^{\frac{\lambda}{n}\Pr{A\text{ is discrete}}}-1)\nonumber \\
		=& 1-\exp\left[-\frac{\lambda}{n}\left(1-\Pr{A \text{ is discrete}}\right)\right].\label{last-equation-added-am}
	\end{align}
	Therefore, the problem of $\kappa(n)$ is settled.
	
	In order to find $\zeta(n)$, we apply Theorem \ref{thm.DCEntDim} to $X_1^{(n)}$.
	Let $X_c^{(n)}$ and $X_D^{(n)}$ be the continuous and the discrete part of $X_1^{(n)}$, respectively.
	Then, to apply Theorem \ref{thm.DCEntDim}, we need to show that $|\h{X_c^{(n)}}|$ and $\H{X_D^{(n)}}$ are well-defined.
	Then, Theorem \ref{thm.DCEntDim} implies that 
	\begin{equation*}
		\zeta(n) = \h{X_c^{(n)}}
		+ \frac{1-d_n}{d_n} \H{X_D^{(n)}}
		+ \frac{d_n\log\frac{1}{d_n} + (1-d_n) \log\frac{1}{1-d_n}}{d_n}.
	\end{equation*}
	Since $d_n$ given in \eqref{last-equation-added-am} vanishes as $n$ goes to infinity, to show the claim of the theorem for this part, it suffices to prove that 
	\begin{align}
		&|\h{X_c^{(n)}}| \leq c_3 < \infty, \label{eqn:mainthm:hXcn<inf} \\
		&\frac{\H{X_D^{(n)}}}{d_n} \leq c_4 \log n, \label{eqn:mainthm:HXDn<inf} \\
		c_5\log n
		\leq& \log\frac{1}{d_n}+\frac{1}{d_n}\log\frac{1}{1-d_n}
		\leq c_6 \log n. \label{eqn:mainthm:H2d/d<inf}
	\end{align}
	for some constants $0 \leq c_3 , c_4 < \infty$ and $0 < c_5 \leq c_6 <\infty$ not depending on $n$.
	Observe that \eqref{eqn:mainthm:hXcn<inf}  and \eqref{eqn:mainthm:HXDn<inf} imply that $|\h{X_c^{(n)}}|$ and $\H{X_D^{(n)}}$ are well-defined.
	Thus, the conditions of Theorem \ref{thm.DCEntDim} are satisfied if we can show \eqref{eqn:mainthm:hXcn<inf}-\eqref{eqn:mainthm:H2d/d<inf}.
	It remains to show \eqref{eqn:mainthm:hXcn<inf}-\eqref{eqn:mainthm:H2d/d<inf}.

	\emph{Proof of \eqref{eqn:mainthm:hXcn<inf}}:
	From  Definition \ref{def.WhiteNoise}, for $X_0 := \int_0^1{X(t) \ud t}$, we have that $X_0=\sum_{i=1}^n{X_i^{(n)}}$, where $X_i^{(n)}$ are \emph{i.i.d.}.
	We denote the discrete and continuous part of $X_0$ and $X_1^{(n)}$ with $X_{0,D}, X_{0,c}$ and $X_{1,D}^{(n)}, X_{1,c}^{(n)}$, respectively.
	Therefore, according to Corollary \ref{cor:h(X+Y)c} we have that
	\begin{equation*}
		\h{X_{0,c}}
		\geq \h{X_{1,c}^{(n)}}.
	\end{equation*}
	Hence, if $\h{X_{0,c}}<\infty$, then $\h{X_{1,c}^{(n)}}<\infty$ uniformly on $n$.
	Note that based on the assumption, $\h{X_{1,c}^{(n)}}$ well-defined. 
	
	Now, it only remains to prove $\h{X_{1,c}^{(n)}}>-\infty$.
	To this end, similar to what we did in \eqref{eqn:PoissonTMP}.
	Let
	\begin{align}
		S_{k}=\sum_{i=1}^k{A_i} \label{def-S-k-am}
	\end{align}
	where $A_i$ are iid random variable with probability density $p_A(x)$.
	Let $(S_{k})_c$ be the continuous part of $S_k$. Utilizing the definitions given in \eqref{eqn:def-lambda-use}, Lemma \ref{lmm.PoissonQuantDist} and Lemma \ref{lmm:DC+DC}, one can define random variable $Q_2$ as follows:
	\begin{equation*}
		p_{Q_2}(k)
		= \frac{1}{\ue^{\lambda/n}-\ue^{\Pr{A \text{ is discrete}}\lambda /n}}\frac{(\lambda/n)^k}{k!}(1-\Pr{A \text{ is discrete}}^k),
		\qquad \forall k\in\bN,
	\end{equation*}
	\begin{equation*}
		p_{X_{1,c}^{(n)}|Q_2}(x|k)
		= p_{(S_{k})_c}(x),
		\qquad \forall k\in\bN.
	\end{equation*}
	Hence, we can write
	\begin{equation*}
		\h{X_{1,c}^{(n)}}
		\geq \h{X_{1,c}^{(n)} \big| Q_2}
		= \sum_{k=1}^\infty
		{p_{Q_2}(k) \h{(S_{k})_c}}.
	\end{equation*}
	Now, from Corollary \ref{cor:h(X+Y)c}, we obtain that
	\begin{equation*}
		\h{X_{1,c}^{(n)}}
		\geq \sum_{k=1}^\infty
		{p_{Q_2}(k) \h{A_c}}
		= \h{A_c},
	\end{equation*}
	where $A_c$ is the continuous part of $A$. 
	Therefore, if $\h{A_c}>-\infty$, the statement is proved.
	
	\emph{Proof of \eqref{eqn:mainthm:HXDn<inf}}:
	Similar to what we did above, utilizing the definitions given in \eqref{eqn:def-lambda-use}, Lemma \ref{lmm.PoissonQuantDist} and Lemma \ref{lmm:DC+DC}, one can define random variable $Q_3$ as following:
	\begin{equation*}
		p_{Q_3}(k)
		=\frac{1}{\ue^{-\frac{\lambda}{n}}+\ue^{\frac{\lambda}{n}\Pr{A \text{ is discrete}}}-1}
		\begin{cases}
			\frac{\left(\frac{\lambda}{n}\Pr{A \text{ is discrete}}\right)^k}{k!} & k\in\bN \\
			\ue^{-\frac{\lambda}{n}} & k=0
		\end{cases},
	\end{equation*}
	\begin{equation*}
		p_{X_{1,D}^{(n)}|Q_3}(x|k)
		=
		\begin{cases}
			p_{(S_{k})_D}(x) & k\in\bN \\
			\delta(x) & k=0
		\end{cases}
	\end{equation*}
	where $(S_{k})_D$ is the discrete part of $S_k$ defined in \eqref{def-S-k-am} for $A_i$ begin iid random variable with probability density $p_A(x)$.  If we let $(A_i)_D$ to be the discrete part of $A_i$, Lemma \ref{lmm.PoissonQuantDist} shows that
	\begin{align}
		(S_{k})_D=\sum_{i=1}^k (A_i)_D. \label{s_kd-ai-d}
	\end{align}

	We can write that
	\begin{align*}
		\H{X_D^{(n)}}
		\leq & \H{X_D^{(n)}, Q_3} \\
		= & \H{X_D^{(n)} \Big| Q_3} + \H{Q_3}.
	\end{align*}
	Now, in order to prove \eqref{eqn:mainthm:HXDn<inf}, it suffices to show that
	\begin{align*}
		& n\H{X_D^{(n)} \Big| Q_3} \leq c_7 < \infty, \\
		& n\H{Q_3} \leq c_8 \log n,
	\end{align*}
	for some constants $c_7$ and $c_8$. 

	For the first inequality, since
	\begin{equation*}
		\H{X+Y}
		\leq\H{X,Y}
		\leq\H{X}+\H{Y},
	\end{equation*}
	using \eqref{s_kd-ai-d} we have
	\begin{align*}
		\H{X_{1,D}^{(n)} | Q}&=\sum_{k=1}^\infty
		{p_{Q_3}(k) \H{\sum_{i=1}^k (A_i)_D}}
\\&
		\leq \sum_{k=1}^\infty
		{p_{Q_3}(k) k\H{A_D}}
\\& =\mathbb{E}[Q_3]\H{A_D}.
	\end{align*}
	It can be verified that $n\E{Q_3}$ converges to $\lambda (1-\Pr{A \text{ is discrete}})$ as $n$ tends to infinity. As a result, $n \H{X_{1,D}^{(n)} | Q}$ is bounded uniformly on $n$.

	For the second inequality, note that among all the discrete distributions defined on $\{0,1,\cdots\}$ with a specified mean $\mu$, we have that\cite[Theorem 13.5.4]{Cover06}:
	\begin{equation*}
		\H{X} \leq \log(1+\mu) + \mu\log\left(1+\frac{1}{\mu}\right).
	\end{equation*}
	Therefore,
	\begin{equation*}
		\H{Q_3} \leq \log(1+\E{Q_3}) + \E{Q_3}\log\left(1+\frac{1}{\E{Q_3}}\right).
	\end{equation*}
	It  suffices to prove that
	\begin{equation} \label{eqn:EQ<1/n}
		\frac{c_9}{n}
		\leq \E{Q_3}
		\leq \frac{c_{10}}{n},
	\end{equation}
	for some positive constants $c_9$ and $c_{10}$. The reason is that the above equation shows existence of $n_0$ such that for $n>n_0$,
	\begin{equation*}
		1+\frac{1}{\E{Q_3}}
		\leq \left(\frac{n}{c_9}\right)^2.
	\end{equation*}
	Then, we have that
	\begin{align*}
		n\H{Q_3}
		&\leq n\E{Q_3} + n\E{Q_3} \log\left(1+\frac{1}{\E{Q_3}}\right)
\\&\leq c_{10} + 2 c_{10} \log(n) - 2 c_{10} \log(c_9).
	\end{align*}
It only remains to prove \eqref{eqn:EQ<1/n}.
	To this end, we calculate $\E{Q_3}$ as following:
	\begin{align*}
		\E{Q_3}
		= & \frac{1}{\ue^{-\frac{\lambda}{n}}+\ue^{\frac{\lambda}{n}\Pr{A \text{ is discrete}}}-1}
		\sum_{k=1}^\infty
		{k \frac{\left(\frac{\lambda}{n}\Pr{A \text{ is discrete}}\right)^k}{k!}} \\
		= & \frac{\frac{\lambda}{n}\Pr{A \text{ is discrete}}}{\ue^{-\frac{\lambda}{n}}+\ue^{\frac{\lambda}{n}\Pr{A \text{ is discrete}}}-1}
		\sum_{k=0}^\infty
		{\frac{\left(\frac{\lambda}{n}\Pr{A \text{ is discrete}}\right)^k}{k!}} \\
		= & \frac{\Pr{A \text{ is discrete}}\frac{\lambda}{n} \ue^{\frac{\lambda}{n}\Pr{A \text{ is discrete}}}}
		{\ue^{-\frac{\lambda}{n}}+1-\ue^{-\Pr{A \text{ is discrete}}\frac{\lambda}{n}}}
	\end{align*}
	As a result, for large enough $n$ there exists global constant $0<c_9 \leq c_{10} <\infty$ such that \eqref{eqn:EQ<1/n} is satisfied.

	\emph{Proof of \eqref{eqn:mainthm:H2d/d<inf}}:
	It can be proved that
	\begin{equation*}
		\lim_{n\to\infty}
		{\log\frac{1}{d_n}+\frac{1}{d_n}\log\frac{1}{1-d_n}
		-\log\frac{n}{\lambda(1-\Pr{A \text{ is discrete}})}}
		= 1.
	\end{equation*}
	Therefore, \eqref{eqn:mainthm:H2d/d<inf} is proved.

	\textbf{Case 3: $X_0$ is discrete}:
	As it was proved in item \ref{eqn:X1nD,C,DC}, observe that $X_1^{(n)}$ is discrete if $X_0$ is discrete.
	Therefore, from Theorem \ref{thm.DCEntDim}, $d_n$ is $0$ for every $n$, and there is no term of $\log m$ in $\mathcal{H}_{m,n}(X)$. However, regardless of the results of Theorem \ref{thm.DCEntDim}, we choose $\kappa(n)=1$, and,
	\begin{equation*}
		\zeta(n) = n \H{X_1^{(n)}}.
	\end{equation*}
	From Theorem \ref{thm:InfDivRVType}, we obtain that if $X_0$ is discrete, then $V(a)=V_{d}(a)$ and is bounded.
	Hence, according to Lemma \ref{lmm:finite v->Poisson}, $X_1^{(n)}$ is sum of the  integral of a Poisson impulsive white noise, plus a constant.	Therefore, the calculation of finding $\zeta(n)$ is completely similar to \eqref{eqn:mainthm:HXDn<inf} in the discrete-continuous case.
	Thus, the theorem is proved.
\qed

\subsection{Proof of Theorem \ref{thm.AlphaCompCrit}} \label{subsec:ProofAlphaCompCrit}
	From Lemma \ref{lmm:StableXnDist} we obtain that
	\begin{equation*}
		X_1^{(n)} \stackrel{d}{=} \frac{X_0-b_n}{\sqrt[\alpha]{n}},
	\end{equation*}
	where $b_n$ is defined in Lemma \ref{lmm:StableXnDist}.
	Hence,
	\begin{equation} \label{eqn:IndepRV}
		\H{\quant{X_1^{(n)}}{m}}
		=\H{\quant{\frac{X_0-b_n}{\sqrt[\alpha]{n}}}{m}}
	\end{equation}
	Therefore, from Lemma \ref{lmm.H([aX])} we obtain that
	\begin{equation} \label{eq.AlphaScaling}
		\H{\quant{X_1^{(n)}}{m}}
		=\H{\quant{X_0-b_n}{\frac{m}{\sqrt[\alpha]{n}}}}.
	\end{equation}
	Lemma \ref{lmm.AlphaStableRenyiCondition} shows that the distribution of $X_0$ satisfies the properties of Lemma \ref{lmm.DifEntShft}.
	Thus, from Lemma \ref{lmm.DifEntShft} we obtain that when $m\geq m(n)$ and $\frac{m(n)}{\sqrt[\alpha]{n}}$ tends to $\infty$, it can be shown that
	\begin{equation} \label{eq.EntDim}
		\lim_{n\to\infty}\sup_{m: m\geq m(n)}
		{\left|\H{\quant{X_0-b_n}{\frac{m}{\sqrt[\alpha]{n}}}}
		- \log \frac{m}{\sqrt[\alpha]{n}}
		- \h{X_0} \right|}
		= 0.
	\end{equation}
	Statement of the theorem follows from \eqref{eqn:IndepRV}-\eqref{eq.EntDim}.
\qed

\subsection{Proof of Theorem \ref{thm.PoissonCompCrit}} \label{subsec:ProofPoissonCompCrit}
From Lemma \ref{lmm:finite v->Poisson} we obtain that, for an impulsive Poisson white noise $X(t)$, $X_0:=\int_0^1{X(t) \ud t}$ is a discrete-continuous random variable; as a result, utilizing Theorem \ref{thm:InfDivRVType}, $X_1^{(n)}$ is also discrete-continuous.
In order to find the probability of $X_1^{(n)}$ being continouous, observe that, in \eqref{eq.pYn} in Lemma \ref{lmm.PoissonQuantDist}, $p_{A_n}$ is an absolutely continuous distribution due to \eqref{eq:pAn}, the fact that $p_A$ is absolutely continuous, and Lemma \ref{lmm:DC+DC}.
Hence, from Theorem \ref{thm:GenerelEntropyDim}, we obtain that $\Pr{X_1^{(n)} \text{ is discrete}=\ue^{-\frac{\lambda}{n}}}$, thereby
\begin{equation} \label{eqn:k(n) Poisson}
	\kappa(n) = n \left(1-\ue^{-\frac{\lambda}{n}}\right).
\end{equation}

In order to find $\zeta(n)$ and $m(n)$ we need to write $\mathcal{H}_{m,n}(X)$ as following.
Let $P_{X_1^{(n)};m}$ be the pmf of $\quant{X_1^{(n)}}{m}$ as in Definition \ref{def.AmpQuant}.
Therefore,
\if@twocolumn
	\begin{align*}
	n \H{\quant{X_1^{(n)}}{m}}
	&= n \sum_{i\in\mathbb{Z}}{P_{X_1^{(n)};m}[i]\log\frac{1}{P_{X_1^{(n)};m}[i]}} \\
	&= n P_{X_1^{(n)};m}[0]\log\frac{1}{P_{X_1^{(n)};m}[0]} \\
	&\quad+n\sum_{i\in\mathbb{Z}\setminus\lbrace0\rbrace}
	{P_{X_1^{(n)};m}[i]\log\frac{1}{P_{X_1^{(n)};m}[i]}}.
	\end{align*}
\else
	\begin{align*}
	n \H{\quant{X_1^{(n)}}{m}}
	&= n \sum_{i\in\mathbb{Z}}{P_{X_1^{(n)};m}[i]\log\frac{1}{P_{X_1^{(n)};m}[i]}} \\
	&= n P_{X_1^{(n)};m}[0]\log\frac{1}{P_{X_1^{(n)};m}[0]} 
	+n\sum_{i\in\mathbb{Z}\setminus\lbrace0\rbrace}
	{P_{X_1^{(n)};m}[i]\log\frac{1}{P_{X_1^{(n)};m}[i]}}.
	\end{align*}
\fi
Proving the following limits will imply the first statement of the theorem:
\begin{equation} \label{eq.kPoissonQuantH1}
	\lim_{m,n\to\infty}
	{\frac{n}{\kappa(n)} P_{X_1^{(n)};m}[0]\log \frac{1}{P_{X_1^{(n)};m}[0]}}
	=1,
\end{equation}
\begin{align}
	&\lim_{n\to \infty} \sup_{m:m\geq m(n)}
	{\left|\frac{n}{\kappa(n)}\sum_{i\in\mathbb{Z}\setminus\lbrace 0\rbrace}
	{P_{X_1^{(n)};m}[i]\log\frac{1}{P_{X_1^{(n)};m}[i]}}-\log(mn)
	-\h{A}+\log\lambda\right|}
	=0, \label{eq.kPoissonQuantH2}
\end{align}
where $ \h{A}$ is the differential entropy of random variable $A$.
Proving the above equalities, will imply the statement of the theorem.

\emph{Proof of \eqref{eq.kPoissonQuantH1}:}
Observe that from \eqref{eqn:k(n) Poisson} we have that
\begin{equation} \label{eqn:PoissonH2.1.1}
	\lim_{n\to\infty}{\kappa(n)}
	=\lim_{n\to\infty}{n\left(1-\ue^{-\frac{\lambda}{n}}\right)}
	= \lambda.
\end{equation}
Therefore, it is enough to only prove
\begin{equation} \label{eq.PoissonQuantH1}
	\lim_{m,n\to \infty}
	{n P_{X_1^{(n)};m}[0]\log \frac{1}{P_{X_1^{(n)};m}[0]}}
	=\lambda.
\end{equation}
From Lemma \ref{lmm.PoissonQuantDist}, we obtain that
\begin{equation}
	P_{X_1^{(n)};m}[0]=\ue^{-\frac{\lambda}{n}}+(1-\ue^{-\frac{\lambda}{n}}) P_{A_n;m}[0],\label{eq.PX1[0]}
\end{equation}
for some random variable $A_n$ with features mentioned in \eqref{eq:pAn}-\eqref{eqn:EAn->EA}.
For every $m$ we have
\begin{equation*}
	\int_{-\frac{1}{2m}}^{\frac{1}{2m}}{|p_{A_n}(x) - p_A(x)| \ud x}
	\leq \int_{\mathbb{R}}{|p_{A_n}(x) - p_A(x)| \ud x}.
\end{equation*}
Hence, 
\if@twocolumn
	\begin{align*}
	\int_{-\frac{1}{2m}}^{\frac{1}{2m}}{p_{A_n}(x) \ud x}
	\leq& \int_{-\frac{1}{2m}}^{\frac{1}{2m}} \hspace{-1mm} p_A(x) \ud x
	+\hspace{-1mm}  \int_{\mathbb{R}}{|p_{A_n}(x) - p_A(x)| \ud x}\\
	\leq& \frac{M}{m}+2\frac{\ue^{\frac{\lambda}{n}}-\frac{\lambda}{n}-1}{\ue^{\frac{\lambda}{n}}-1}, \EQnum\label{eqn:int pAn[0]}
	\end{align*}
\else
	\begin{align*}
	\int_{-\frac{1}{2m}}^{\frac{1}{2m}}{p_{A_n}(x) \ud x}
	\leq \int_{-\frac{1}{2m}}^{\frac{1}{2m}}{p_A(x) \ud x}
	+\int_{\mathbb{R}}{|p_{A_n}(x) - p_A(x)| \ud x} 
	\leq \frac{\ell}{m}+2\frac{\ue^{\frac{\lambda}{n}}-\frac{\lambda}{n}-1}{\ue^{\frac{\lambda}{n}}-1}, \EQnum\label{eqn:int pAn[0]}
	\end{align*}
\fi
where \eqref{eqn:int pAn[0]} is true due to \eqref{eq.VarConv}, and the fact that $p_A(x) \leq M$ almost everywhere.
Thus, we obtain that
\begin{equation*}
	\lim_{m,n\to\infty}
	{\int_{-\frac{1}{2m}}^{\frac{1}{2m}}{p_{A_n}(x)\ud x}}
	=0.
\end{equation*}
Therefore, from the definition of $P_{A_n;m}[0]$ defined in Definition \ref{def.AmpQuant}, we obtain that 
\begin{equation} \label{eq.PA_n[0]=0}
	\lim_{m,n\to\infty}{P_{A_n;m}[0]} = 0.
\end{equation}
Therefore, from \eqref{eq.PX1[0]}, we can write that
\begin{equation} \label{eq.PX1[0]=1}
	\lim_{m,n\to \infty}{P_{X_1^{(n)};m}[0]} = 1.
\end{equation}
Hence, if we prove that
\begin{equation} \label{eq.PoissonQuantH1.2}
	\lim_{m,n\to\infty}
	{n \log \frac{1}{P_{X_1^{(n)};m}[0]}}=\lambda,
\end{equation}
we can conclude from \eqref{eq.PX1[0]=1} that
\if@twocolumn
	\begin{align*}
	&\lim_{m,n\to \infty}
	{n P_{X_1^{(n)};m}[0]\log \frac{1}{P_{X_1^{(n)};m}[0]}} \\
	&\qquad= \lim_{m,n\to \infty}
	{n \log \frac{1}{P_{X_1^{(n)};m}[0]}}=\lambda.
	\end{align*}
\else
\begin{align*}
	&\lim_{m,n\to \infty}
	{n P_{X_1^{(n)};m}[0]\log \frac{1}{P_{X_1^{(n)};m}[0]}}
	= \lim_{m,n\to \infty}
	{n \log \frac{1}{P_{X_1^{(n)};m}[0]}}=\lambda.
\end{align*}
\fi
This will complete the proof of \eqref{eq.PoissonQuantH1}.
Thus, it only remains to prove \eqref{eq.PoissonQuantH1.2}.
Again, by substituting the value of $P_{X_1;m}[0]$ from \eqref{eq.PX1[0]}, we have
\if@twocolumn
	\begin{align*}
	&\lim_{m,n\to \infty}
	{n \log \frac{1}{P_{X_1^{(n)};m}[0]}} \\
	&\qquad=-\lim_{m,n\to \infty}
	{n\log{\left(\ue^{-\frac{\lambda}{n}}+(1-\ue^{-\frac{\lambda}{n}}) P_{A_n;m}[0] \right)}} \\
	&\qquad=-\lim_{m,n\to \infty}
	{n\log{\left(1+(\ue^{-\frac{\lambda}{n}}-1)\left(1-P_{A_n;m}[0]\right) \right)}} \\
	&\qquad=-\lim_{m,n\to \infty}
	{n(\ue^{-\frac{\lambda}{n}}-1)\left(1-P_{A_n;m}[0]\right)} \EQnum \label{PoissonQuantH1.2.1} \\
	&\qquad=-\lim_{m,n\to \infty}
	{n(\ue^{-\frac{\lambda}{n}}-1)} = \lambda. \EQnum \label{PoissonQuantH1.2.2}
	\end{align*}
\else
	\begin{align*}
	\lim_{m,n\to \infty}
	{n \log \frac{1}{P_{X_1^{(n)};m}[0]}}
	=&-\lim_{m,n\to \infty}
	{n\log{\left(\ue^{-\frac{\lambda}{n}}+(1-\ue^{-\frac{\lambda}{n}}) P_{A_n;m}[0] \right)}} \\
	=&-\lim_{m,n\to \infty}
	{n\log{\left(1+(\ue^{-\frac{\lambda}{n}}-1)\left(1-P_{A_n;m}[0]\right) \right)}} \\
	=&-\lim_{m,n\to \infty}
	{n(\ue^{-\frac{\lambda}{n}}-1)\left(1-P_{A_n;m}[0]\right)} \EQnum \label{PoissonQuantH1.2.1} \\
	=&-\lim_{m,n\to \infty}
	{n(\ue^{-\frac{\lambda}{n}}-1)} = \lambda. \EQnum \label{PoissonQuantH1.2.2}
	\end{align*}
\fi
where \eqref{PoissonQuantH1.2.1} follows from Taylor series of $\log (1+x)$ near $0$ for $x=(\ue^{-{\lambda}/{n}}-1)\left(1-P_{A_n;m}[0]\right)$; observe that $\ue^{-{\lambda}/{n}}-1$ is close to zero for large values of $n$, and \eqref{PoissonQuantH1.2.2} is obtained from \eqref{eq.PA_n[0]=0}.

\emph{Proof of \eqref{eq.kPoissonQuantH2}:}
From Lemma \ref{lmm.PoissonQuantDist} we obtain that
\begin{equation*}
	P_{X_1^{(n)};m}[i] = \left(1-\ue^{-\frac{\lambda}{n}}\right)P_{A_n;m}[i],
	\qquad \forall i \in\mathbb{Z}\setminus\lbrace 0 \rbrace.
\end{equation*}
By substituting the value of $P_{X_1^{(n)};m}[i]$ in terms of $P_{A_n;m}[i]$, we have
\if@twocolumn
	\begin{align*}
	&n \sum_{i\in\mathbb{Z}\setminus\lbrace0\rbrace}
	{P_{X_1^{(n)};m}[i]\log\frac{1}{P_{X_1^{(n)};m}[i]}} \\
	&\qquad=n \left(1-\ue^{-\frac{\lambda}{n}}\right) \left(1-P_{A_n;m}[0] \right)
	\log\frac{1}{1-\ue^{-\frac{\lambda}{n}}} \nonumber\\
	&\qquad\quad-n \left(1-\ue^{-\frac{\lambda}{n}}\right){P_{A_n;m}[0] \log\frac{1}{P_{A_n;m}[0]}}\nonumber\\
	&\qquad\quad+n \left(1-\ue^{-\frac{\lambda}{n}}\right)
	\sum_{i\in\mathbb{Z}}{P_{A_n;m}[i] \log\frac{1}{P_{A_n;m}[i]}}.
	\end{align*}
\else
	\begin{align*}
	n \sum_{i\in\mathbb{Z}\setminus\lbrace0\rbrace}
	{P_{X_1^{(n)};m}[i]\log\frac{1}{P_{X_1^{(n)};m}[i]}}
	=&n \left(1-\ue^{-\frac{\lambda}{n}}\right) \left(1-P_{A_n;m}[0] \right)
	\log\frac{1}{1-\ue^{-\frac{\lambda}{n}}} \nonumber\\
	&-n \left(1-\ue^{-\frac{\lambda}{n}}\right){P_{A_n;m}[0] \log\frac{1}{P_{A_n;m}[0]}}\nonumber\\
	&+n \left(1-\ue^{-\frac{\lambda}{n}}\right)
	\sum_{i\in\mathbb{Z}}{P_{A_n;m}[i] \log\frac{1}{P_{A_n;m}[i]}}.
	\end{align*}
\fi
Therefore, in order to prove \eqref{eq.kPoissonQuantH2}, utilizing \eqref{eqn:k(n) Poisson}, and it suffices to show that
\begin{equation} \label{eqn:kPoissonH2.1}
	\lim_{m,n\to \infty}
	{P_{A_n;m}[0] \log{P_{A_n;m}[0]}} = 0,
\end{equation}
\begin{equation} \label{eqn:kPoissonH2.2}
	\lim_{n\to\infty}\sup_{m:m\geq m(n)}
	{\left|\left(1-P_{A_n;m}[0] \right)
	\log\frac{1}{1-\ue^{-\frac{\lambda}{n}}}-\log n
	+\log\lambda\right|}
	= 0,
\end{equation}
\begin{equation} \label{eqn:PoissonH2.3.3}
	\lim_{m,n\to\infty}
	{\sum_{i\in\mathbb{Z}}{P_{A_n;m}[i] \log\frac{1}{P_{A_n;m}[i]}}-\log m}
	= \h{A}.
\end{equation}

\emph{Proof of \eqref{eqn:kPoissonH2.1}}:
It is obtained from \eqref{eq.PA_n[0]=0}.

\emph{Proof of \eqref{eqn:kPoissonH2.2}}:
Let us add and subtract the term $\left(1-P_{A_n;m}[0] \right)\log n$ inside of the absolute value of the left hand side of \eqref{eqn:kPoissonH2.2}.
Hence, we have
\begin{equation*}
	\left(1-P_{A_n;m}[0] \right) \log\tfrac{1}{1-\ue^{-\frac{\lambda}{n}}}-\log n
	=\left(1-P_{A_n;m}[0] \right)\log\tfrac{1}{n\left(1-\ue^{-\frac{\lambda}{n}}\right)} 
	-P_{A_n;m}[0] \log n,
\end{equation*}
From \eqref{eq.PA_n[0]=0}, and \eqref{eqn:PoissonH2.1.1}, we can write
\begin{equation*}
	\lim_{m,n\to\infty}
	{\left(1-P_{A_n;m}[0] \right)
	\log\tfrac{1}{n\left(1-\ue^{-\frac{\lambda}{n}}\right)}}
	= -\log \lambda.
\end{equation*}
Hence, we only need to show that
\begin{equation} \label{eqn:kPoissonH2.2.1}
	\lim_{n\to\infty}
	\sup_{m:m\geq m(n)}{P_{A_n;m}[0]\log n}
	= 0.
\end{equation}

For \eqref{eqn:kPoissonH2.2.1}, from \eqref{eqn:int pAn[0]}, we have that
\begin{equation*}
	0\leq P_{A_n;m}[0] \log n 
	\leq \left(\frac{\ell}{m}+2\frac{\ue^\frac{\lambda}{n}-\frac{\lambda}{n}-1}{\ue^\frac{\lambda}{n}-1} \right) \log n.
\end{equation*}
Thus, \eqref{eqn:kPoissonH2.2.1} is proved due to the choice of $m(n)$, which causes $(\log n)/{m}$ tends to $0$, and the fact that 
\begin{equation} \label{eqn:tmp321}
	\lim_{n\to\infty}
	{\frac{\ue^\frac{\lambda}{n}-\frac{\lambda}{n}-1}{\ue^\frac{\lambda}{n}-1}\log n} = 0.
\end{equation}

\emph{Proof of \eqref{eqn:PoissonH2.3.3}:}
Remember that in Definition \ref{def.AmpQuant}, for every arbitrary random variable $X$, a random variable $\widetilde{X}_m$ with an absolutely continuous distribution $q_{X;m}(x)$ was defined. Thus, corresponding to $A_n$, we can define random variables $\widetilde{A}_{nm}$ with an absolutely continuous distribution $q_{A_n;m}(x)$.
Observe that, from Lemma \ref{lmm:H([X])-ln m=h(q)},
\if@twocolumn
	\begin{align*}
	&\sum_{i\in\mathbb{Z}}{P_{A_n;m}[i] \log\frac{1}{P_{A_n;m}[i]}} \\
	&\qquad=\sum_{i\in\mathbb{Z}}
	{\frac{1}{m}q_{A_n;m}\left(\tfrac{i}{m}\right)
	\log\frac{m}{q_{A_n;m}\left(\frac{i}{m}\right)}}\\
	&\qquad=\int_\mathbb{R}{q_{A_n;m}(x)\log\frac{m}{q_{A_n;m}(x)} \ud x}\\
	&\qquad=\log m + \int_\mathbb{R}{q_{A_n;m}(x)\log\frac{1}{q_{A_n;m}(x)} \ud x} \\
	&\qquad=\log m + \h{\widetilde{A}_{nm}}, \EQnum\label{eqn:PoissonH2.3.4}
	\end{align*}
\else
	\begin{equation} \label{eqn:PoissonH2.3.4}
	\sum_{i\in\mathbb{Z}}{P_{A_n;m}[i] \log\frac{1}{P_{A_n;m}[i]}} 
	=\log m + \h{\widetilde{A}_{nm}},
	\end{equation}
\fi
where $\widetilde{A}_{nm}$ is a continuous random variable with pdf $q_{A_n;m}(x)$.
Similarly, for random variable $A$, using  Definition \ref{def.AmpQuant}, we can define continuous random variable $\widetilde{A}_{m}$ with pdf $q_{A;m}(x)$.
Again, from Lemma \ref{lmm:H([X])-ln m=h(q)}
\begin{align*}
	\H{\quant{A}{m}}:=&\sum_{i\in\mathbb{Z}}{P_{A;m}[i] \log\frac{1}{P_{A;m}[i]}}
	=\log m + \h{\widetilde{A}_{m}}\EQnum \label{eqn:PoissonH2.3.2}.
\end{align*}
From \eqref{eqn:PoissonH2.3.2} and Remark \ref{rem.CEntDim}, we obtain
\begin{equation} \label{eqn:hA=hAt}
	\h{A}=\lim_{m\to\infty}{\h{\widetilde{A}_m}}
\end{equation}
Hence, from \eqref{eqn:PoissonH2.3.4} and \eqref{eqn:hA=hAt}, we obtain that in order to prove \eqref{eqn:PoissonH2.3.3}, it suffices to show that
\begin{equation} \label{eqn:to-show-conv-ent}
	\lim_{m,n\to\infty}{\h{\widetilde{A}_{nm}}}
	=\lim_{m\to\infty}{\h{\widetilde{A}_m}}.
\end{equation}
To show this, we utilize Theorem \ref{thm.EntConv}. This theorem reduces convergence in differential entropy to convergence in total variation distance for a restricted class of distributions. In other words, to show \eqref{eqn:to-show-conv-ent}, it suffices to show 
\begin{equation} \label{eq.Poisson|qAn-qA|=0}
	\lim_{m,n\to\infty}
	\int_{\mathbb{R}}{\left| q_{A_n;m}(x)-q_{A;m}(x) \right|\ud x} =0,
\end{equation}
as long as we can show that $q_{A_n;m}(x)$ and $q_{A;m}(x)$ for all $m,n\in\mathbb{N}$ belong to the class of distributions given in Definition \ref{def.(a,m,v)-AC}. Remember that in the statement of the theorem, in equation \eqref{eqn:defp_AAC}, we had assumed that 
\begin{equation} \label{eqn:defp_AAC2}
	p_A \in (\alpha, \ell, v)\text{--}\mathcal{AC},
\end{equation}
for some positive values for $\alpha, \ell, v$.
We show that for all $m,n\in\mathbb{N}$
\begin{equation*} \label{eq.PoissonH2.5}
	q_{A_n;m}(x),q_{A;m}(x) \in (\alpha, \ell, v')\text{--}\mathcal{AC},
\end{equation*}
where $v'=2v+3$.
In other words, for all $m,n\in\mathbb{N}$ we have
\begin{equation} \label{eqn:PoissonH2.5.3}
	q_{A_n;m}(x),q_{A;m}(x) \in\mathcal{AC},
\end{equation}
\begin{equation} \label{eqn:PoissonH2.5.1}
	q_{A_n;m}(x),q_{A;m}(x)\leq \ell, \qquad (\text{a.e}),
\end{equation}
\begin{equation} \label{eqn:PoissonH2.5.2}
	\E{\left|\widetilde{A}_{nm}\right|^\alpha}, \E{\left|\widetilde{A}_m\right|^\alpha}
	\leq 2v+3.
\end{equation}
As a result, it remains to show \eqref{eq.Poisson|qAn-qA|=0}, \eqref{eqn:PoissonH2.5.3}, \eqref{eqn:PoissonH2.5.1} and \eqref{eqn:PoissonH2.5.2}.

\emph{Proof of \eqref{eq.Poisson|qAn-qA|=0}}:
From \eqref{eq.QuantNorm1} in Lemma \ref{lmm.QuantNorm}, for all $m,n\in\mathbb{N}$ we have
\begin{equation*}
	\int_{\mathbb{R}}{\left| q_{A_n;m}(x)-q_{A;m}(x) \right| \ud x}
	\leq \int_{\mathbb{R}}{\left| p_{A_n}(x)-p_A(x) \right| \ud x}.
\end{equation*}
Thus,
\if@twocolumn
	\begin{align}
	&\lim_{m,n\to\infty}
	\int_{\mathbb{R}}{\left| q_{A_n;m}(x)-q_{A;m}(x) \right| \ud x}  \nonumber\\
	&\qquad\leq \lim_{n\to\infty}
	\int_{\mathbb{R}}{\left| p_{A_n}(x)-p_A(x) \right| \ud x} =0. \label{eq.PoissonQuantH2.5.4}
	\end{align}
\else
	\begin{align}
	&\lim_{m,n\to\infty}
	\int_{\mathbb{R}}{\left| q_{A_n;m}(x)-q_{A;m}(x) \right| \ud x}
	\leq \lim_{n\to\infty}
	\int_{\mathbb{R}}{\left| p_{A_n}(x)-p_A(x) \right| \ud x} =0. \label{eq.PoissonQuantH2.5.4}
	\end{align}
\fi
where \eqref{eq.PoissonQuantH2.5.4} follows from Lemma \ref{lmm.PoissonQuantDist}.
Hence, \eqref{eq.Poisson|qAn-qA|=0} is proved.

\emph{Proof of \eqref{eqn:PoissonH2.5.3}:}
Since the pdfs $q_{A_n;m}$ and $q_{A;m}$ are combination of step functions, hence, their cdf are absolutely continuous.

\emph{Proof of \eqref{eqn:PoissonH2.5.1}}:
Since there exists some $\ell$ such that $p_A(x)\leq \ell$ for almost all $x\in\bR$, we obtain that
\if@twocolumn
	\begin{align*}
	q_{A;m}(x)=& \, m P_{A;m}[i] = m\int_\frac{i-\frac{1}{2}}{m}^\frac{i+\frac{1}{2}}{m}{p_A(x) \ud x}\\
	\leq& m\int_\frac{i-\frac{1}{2}}{m}^\frac{i+\frac{1}{2}}{m}{\ell \ud x}
	=\ell.
	\end{align*}
\else
	\begin{align*}
	q_{A;m}(x)
	=\,m P_{A;m}[i]
	= m\int_\frac{i-\frac{1}{2}}{m}^\frac{i+\frac{1}{2}}{m}{p_A(x) \ud x}
	\leq m\int_\frac{i-\frac{1}{2}}{m}^\frac{i+\frac{1}{2}}{m}{\ell \ud x}
	=\ell.
	\end{align*}
\fi
It remains to show that $q_{A_n;m}(x)$ is bounded too.
Similar to the above equation, it suffices to show that $p_{A_n}(x)\leq \ell$.
From \eqref{eq:pAn} in Lemma \ref{lmm.PoissonQuantDist}, we have that
\begin{align*}
	p_{A_n}(x)
	=&\frac{\ue^{-\frac{\lambda}{n}}}{1-\ue^{-\frac{\lambda}{n}}}
	\sum_{k=1}^\infty
	{\frac{\left( \frac{\lambda}{n} \right)^k}{k!}\big(\overbrace{p_A*\cdots*p_A}^k\big) (x)}\\
	\leq&\frac{\ue^{-\frac{\lambda}{n}}}{1-\ue^{-\frac{\lambda}{n}}}
	\sum_{k=1}^\infty
	{\frac{\left( \frac{\lambda}{n} \right)^k}{k!}\ell}
	=\ell, \EQnum \label{eqn:pAnUpBnd}
\end{align*}
where \eqref{eqn:pAnUpBnd} follows from the fact that $p_A$ is bounded, and and Lemma \ref{lmm:AC+RV}.

\emph{Proof of \eqref{eqn:PoissonH2.5.2}}:
From \eqref{eq.QuantNorm2} in Lemma \ref{lmm.QuantNorm}, we obtain that there exists $M_1\in\mathbb{N}$ such that for all $m>M_1$ and $n\in\mathbb{N}$ we have
\begin{equation} \label{eqn:EAm->EA}
	\E{\left|\widetilde{A}_m\right|^\alpha}
	\leq 2\E{|A|^\alpha}+1,
\end{equation}
\begin{equation} \label{eqn:EAnm->EAn}
	\E{\left|\widetilde{A}_{nm}\right|^\alpha}
	\leq 2\E{|A_n|^\alpha}+1.
\end{equation}
From \eqref{eqn:EAm->EA} we obtain that
\begin{equation}
	\E{\left|\widetilde{A}_m\right|^\alpha}
	\leq 2v+1< 2v+3.
\end{equation}
From \eqref{eqn:EAn->EA} in Lemma \ref{lmm.PoissonQuantDist} we obtain that there exists $N_1\in\mathbb{N}$ such that for all $n>N_1$ we have
\begin{equation} \label{eqn:PoissonEAn->EA}
	\E{\left|\widetilde{A}_n\right|^\alpha}
	\leq \E{|A|^\alpha}+1.
\end{equation}
Therefore, due to \eqref{eqn:EAnm->EAn} and \eqref{eqn:PoissonEAn->EA}, we obtain that
\begin{align*}
	m>M_1,n>N_1\Longrightarrow\E{\left|\widetilde{A}_{nm}\right|^\alpha}
	\leq 2v+3,
\end{align*}
Thus, \eqref{eqn:PoissonH2.5.2} is proved.
\qed

\subsection{Proof of Theorem \ref{thm:LinCom}} \label{subsec:pfr:thm:LinCom}
	From \eqref{def.CompCrit} and using the fact that $Z(t)=X(t)+Y(t)$, we obtain that
	\begin{equation*}
		\mathcal{H}_{m,n}(Z)
		= n\H{\quant{Z_1^{(n)}}{m}}
		= n\H{\quant{X_1^{(n)}+Y_1^{(n)}}{m}}.
	\end{equation*}
	Therefore, we aim to prove that
		\begin{equation} \label{eqn:am-addeda1}
		\lim_{m,n,\frac{m}{\sqrt[\alpha]{n}}\to\infty}{
		\left( \H{\quant{X_1^{(n)}+Y_1^{(n)}}{m}}
		-\log \frac{m}{\sqrt[\alpha]{n}} \right)}
		= \h{X_0},
	\end{equation}
	In the proof of Theorem \ref{thm.AlphaCompCrit}, we compute the quantization entropy of a stable white noise process by showing that
	\begin{equation} \label{eqn:am-addeda2}
		\lim_{m,n,\frac{m}{\sqrt[\alpha]{n}}\to\infty}{
		\left( \H{\quant{X_1^{(n)}}{m}}
		-\log \frac{m}{\sqrt[\alpha]{n}} \right)}
		= \h{X_0},
	\end{equation}
	Here, we essentially want to prove that we can ignore $Y_1^{(n)}$ in the sum $X_1^{(n)}+Y_1^{(n)}$ when we are computing the quantization entropy.
	To show the theorem, we only need to prove that
	\begin{equation} \label{eqn:am-addeda3}
		\lim_{m,n,\frac{m}{\sqrt[\alpha]{n}}\to\infty}{
		\left( \H{\quant{X_1^{(n)}+Y_1^{(n)}}{m}}
		-\H{\quant{X_1^{(n)}}{m}}	 \right)}=0.
	\end{equation}

	From Lemma \ref{lmm:StableXnDist} we obtain that
	\begin{equation*}
		X_1^{(n)} \stackrel{d}{=} \frac{X_0-b_n}{\sqrt[\alpha]{n}},
	\end{equation*}
	where $b_n$ is defined in Lemma \ref{lmm:StableXnDist}.
	Hence,
	\begin{align}
		\H{\quant{X_1^{(n)}+Y_1^{(n)}}{m}}\nonumber
		=& \H{\quant{\frac{X_0-b_n}{\sqrt[\alpha]{n}}+Y_1^{(n)}}{m}} \nonumber\\
		=& \H{\quant{X_0-b_n+\sqrt[\alpha]{n}Y_1^{(n)}}{\frac{m}{\sqrt[\alpha]{n}}}}, \label{a-n=-1}
	\end{align}
	where the last equality follows from Lemma \ref{lmm.H([aX])}. Similarly,
	\begin{equation} \label{a-n=-2}
		\H{\quant{X_1^{(n)}}{m}}
		= \H{\quant{X_0-b_n}{\frac{m}{\sqrt[\alpha]{n}}}},
	\end{equation}
	For some technical reason that will be needed in the proof, we want to show that $-b_n$ in the \eqref{a-n=-1} and \eqref{a-n=-2} can be replaced by a number $r_n<\frac{\sqrt[\alpha]{n}}{m}$ without changing the value of the entropies.  
	Write $-b_n$ as
	\begin{equation*}
		b_n=r_n+k\frac{\sqrt[\alpha]{n}}{m}
	\end{equation*}
	for some $k\in\mathbb{Z}$ and 
	\begin{equation} \label{eqn:dn}
		0 \leq r_n < \frac{\sqrt[\alpha]{n}}{m}.
	\end{equation}
	Then, 
	\begin{align*}
		\H{\quant{X_1^{(n)}+Y_1^{(n)}}{m}}
		=& \H{k+\quant{X_0+r_n+\sqrt[\alpha]{n}Y_1^{(n)}}{\frac{m}{\sqrt[\alpha]{n}}}}
		=& \H{\quant{X_0+r_n+\sqrt[\alpha]{n}Y_1^{(n)}}{\frac{m}{\sqrt[\alpha]{n}}}}.
	\end{align*}
	Similarly, we have
	\begin{equation*}
		\H{\quant{X_1^{(n)}}{m}}
		= \H{\quant{X_0+r_n}{\frac{m}{\sqrt[\alpha]{n}}}}.
	\end{equation*}

	Let $V_n := X_0+r_n$ and $W_n := V_n+\sqrt[\alpha]{n}Y_1^{(n)}$.
	Then, to prove \eqref{eqn:am-addeda3}, we only need to show that
	\begin{equation} \label{eqn:am-addeda4}
		\lim_{m,n,\frac{m}{\sqrt[\alpha]{n}}\to\infty}
		\left(\H{\quant{V_n}{\frac{m}{\sqrt[\alpha]{n}}}} - \H{\quant{W_n}{\frac{m}{\sqrt[\alpha]{n}}}} \right)
		=0
	\end{equation}

	From Lemma \ref{lmm.AlphaStableRenyiCondition}, and Lemma \ref{lmm:AC+RV} it can be obtained that $V_n, W_n$ are both continuous random variables with pdfs $q_{V_n}, q_{W_n}$, respectively. 

	Define $\widetilde{V}_{n, m/\sqrt[\alpha]{n}} \sim q_{V_n; m/\sqrt[\alpha]{n}}$ and $\widetilde{W}_{n, m/\sqrt[\alpha]{n}} \sim q_{W_n; m/\sqrt[\alpha]{n}}$ according to Definition \ref{def.AmpQuant} for random variables $V_n$ and $W_n$, respectively.
	From Lemma \ref{lmm:H([X])-ln m=h(q)}, we have that
	\begin{equation} \label{eqn:Hq-ln=hq}
		\H{\quant{V_n}{\frac{m}{\sqrt[\alpha]{n}}}} - \log{\frac{m}{\sqrt[\alpha]{n}}}
		=\h{q_{V_n; m/\sqrt[\alpha]{n}}},
		\qquad \H{\quant{W_n}{\frac{m}{\sqrt[\alpha]{n}}}} - \log{\frac{m}{\sqrt[\alpha]{n}}}
		=\h{q_{W_n; m/\sqrt[\alpha]{n}}}.
	\end{equation}
	Therefore, 
	\begin{equation} \label{eqn:am-addeda55}
		\H{\quant{V_n}{\frac{m}{\sqrt[\alpha]{n}}}} -  \H{\quant{W_n}{\frac{m}{\sqrt[\alpha]{n}}}}
		=\h{q_{V_n; m/\sqrt[\alpha]{n}}} -  \h{q_{W_n; m/\sqrt[\alpha]{n}}}.
	\end{equation}
	To show \eqref{eqn:am-addeda4}, we need to show that the difference $\h{q_{V_n; m/\sqrt[\alpha]{n}}} - \h{q_{W_n; m/\sqrt[\alpha]{n}}}$ vanishes in the limit.
	To show this, we utilize Theorem \ref{thm.EntConv}: we prove existence of constants $0<\gamma,\ell,v<\infty$ such that for any $m$ and $n$ satisfying $m/\sqrt[\alpha]{n}\geq 1$ we have
	\begin{equation} \label{eqn:|qVn-qWn|->0}
		\int_{\bR}{\left| q_{V_n; m/\sqrt[\alpha]{n}}(x) - q_{W_n; m/\sqrt[\alpha]{n}}(x) \right| \ud x}
		\leq 2\left( 1-\ue^{-\frac{\lambda}{n}} \right),
	\end{equation}
	\begin{equation} \label{eqn:Vn,Wn-amvAC}
		q_{V_n; m/\sqrt[\alpha]{n}} \in \left( \gamma, \ell , v \right) \text{--}\mathcal{AC},
		\qquad q_{W_n; m/\sqrt[\alpha]{n}} \in \left( \gamma, \ell , v \right) \text{--}\mathcal{AC}.
	\end{equation}
	Then, Theorem \ref{thm.EntConv} yields
	\begin{equation} \label{eqn:hw-hv}
		\left|\h{q_{W_n; m/\sqrt[\alpha]{n}}}-\h{q_{V_n; m/\sqrt[\alpha]{n}}}\right|
		\leq c_1 \Delta + c_2 \Delta \log\tfrac{1}{\Delta},
	\end{equation}
	for some constants $c_1$ and $c_2$ (not depending on $m$ and $n$) and $\Delta =  2\left( 1-\ue^{-\frac{\lambda}{n}} \right)$.
	The differential entropy difference in \eqref{eqn:hw-hv} vanishes as $\Delta$ vanishes when $n$ goes to infinity.
	This completes the proof.  It only remains to prove \eqref{eqn:|qVn-qWn|->0} and \eqref{eqn:Vn,Wn-amvAC}.

	\emph{Proof of \eqref{eqn:|qVn-qWn|->0}}:
	Lemma \ref{lmm.QuantNorm} shows that it is enough to prove that
	\begin{equation*}
		\int_{\bR}{\left| p_{V_n}(x) - p_{W_n}(x) \right| \ud x}
		\leq 2\left( 1-\ue^{-\frac{\lambda}{n}} \right).
	\end{equation*}
	From \eqref{eq.pYn} in Lemma \ref{lmm.PoissonQuantDist} we obtain that $\sqrt[\alpha]{n}Y_1^{(n)}=W_n-V_n$ has the following pdf:
	\begin{equation*}
		\sqrt[\alpha]{n} Y_1^{(n)} \sim
		\ue^{-\frac{\lambda}{n}}\delta(x)
		+\left( 1-\ue^{-\frac{\lambda}{n}} \right) p_{A'_n}(x),
	\end{equation*}
	where
	\begin{equation*}
		p_{A'_n}(x)
		= \frac{1}{\sqrt[\alpha]{n}} p_{A_n} \left(\frac{x}{\sqrt[\alpha]{n}} \right)
	\end{equation*}
	is absolutely continuous.
	Therefore, $p_{W_n}$ is
	\begin{align*}
		p_{W_n}(x) =&
		\left( p_{V_n} * p_{\sqrt[\alpha]{n}Y_1^{(n)}} \right) (x) \\
		=& \ue^{-\frac{\lambda}{n}} p_{V_n} (x)
		+\left( 1-\ue^{-\frac{\lambda}{n}} \right) \left( p_{V_n} * p_{A'_n} \right) (x).
	\end{align*}
	Then, in order to find the total variation, we can write
	\begin{equation*}
		\left |p_{V_n}(x)-p_{W_n}(x) \right|
		\leq \left(1-\ue^{-\frac{\lambda}{n}}\right) p_{V_n}(x)
		+ \left(1-\ue^{-\frac{\lambda}{n}}\right) \left( p_{V_n} * p_{A'_n} \right) (x)
	\end{equation*}
	Hence, we can write that
	\begin{align*}
		&\int_{\bR}{\left| p_{V_n}(x) - p_{W_n}(x) \right| \ud x} \\
		&\qquad\leq\left(1-\ue^{-\frac{\lambda}{n}}\right) \int_{\bR}{p_{V_n}(x) \ud x}
		+\left( 1-\ue^{-\frac{\lambda}{n}} \right) \int_{\bR}{\left( p_{V_n} * p_{A'_n} \right) (x) \ud x} \\
		&\qquad =2\left( 1-\ue^{-\frac{\lambda}{n}} \right).
	\end{align*}

	\emph{Proof of \eqref{eqn:Vn,Wn-amvAC}}:
	From Definition \ref{def.AmpQuant}, it is clear that $\widetilde{V}_{n,m/\sqrt[\alpha]{n}}, \widetilde{W}_{n,m/\sqrt[\alpha]{n}}$ have density and are continuous.
	It remains to prove the other requirements:

	\begin{itemize}
		\item We show that there exists $\ell<\infty$ that $q_{V_n;m/\sqrt[\alpha]{n}}(x), q_{W_n;m/\sqrt[\alpha]{n}}(x) \leq \ell$ for all $m,n$ and $x\in\bR$.
		
		Lemma \ref{lmm.AlphaStableRenyiCondition} shows that $X_0$ has a density which is bounded by some $\ell<\infty$.
		The same bound applies to the density of $V_n := X_0+r_n$, which is a shifted version of $X_0$.
		Utilizing Lemma \ref{lmm:AC+RV}, $W_n$ also has density bounded by $\ell<\infty$.
		Thus, we obtain that for some $\ell<\infty$ (not depending on $n$)
		\begin{equation*}
			q_{V_n}(x), q_{W_n}(x) \leq \ell < \infty,
			\qquad \forall x\in\bR.
		\end{equation*}
		From the definition of $q_{V_n;m/\sqrt[\alpha]{n}}(x), q_{W_n;m/\sqrt[\alpha]{n}}(x)$ in Definition \ref{def.AmpQuant}, the densities of $q_{V_n;m/\sqrt[\alpha]{n}}(x)$ and $ q_{W_n;m/\sqrt[\alpha]{n}}(x)$ are the average of the densities of $q_{V_n}(x)$ and $q_{W_n}(x)$ over the quantization interval.
		Therefore, they are also bounded from above by $\ell$:
		\begin{equation*}
			q_{V_n;m/\sqrt[\alpha]{n}}(x), q_{W_n;m/\sqrt[\alpha]{n}}(x) \leq \ell < \infty,
			\qquad \forall x\in\bR.
		\end{equation*}

		\item There exists $v<\infty$ and $\gamma>0$ such that $\E{\left|\widetilde{V}_{n;m/\sqrt[\alpha]{n}}\right|^\gamma}, \E{\left|\widetilde{W}_{n;m/\sqrt[\alpha]{n}}\right|^\gamma} \leq v$ for all $m,n$ satisfying $m/\sqrt[\alpha]{n}\geq 1$:

		Utilizing Lemma \ref{lmm.QuantNorm} and the fact that $m/\sqrt[\alpha]{n}\geq 1$, we have that
		\begin{align*} 
			\E{\left|\widetilde{V}_{n;m/\sqrt[\alpha]{n}}\right|^\gamma}
			&\leq\left(\tfrac{2}{\sqrt{m/\sqrt[\alpha]{n}}}\right)^\gamma+\ue^\frac{\gamma}{\sqrt{m/\sqrt[\alpha]{n}}}\E{|V_n|^\gamma}\\
			&\leq 2^\gamma+\ue^\gamma\E{|V_n|^\gamma}
		\end{align*}
		A similar equation holds for $\E{\left|\widetilde{W}_{n;m/\sqrt[\alpha]{n}}\right|^\gamma}$.
		Therefore, we will be done if we can find  $v'<\infty$ and $\gamma>0$ such that for all $m,n$ satisfying $m/\sqrt[\alpha]{n}\geq 1$, we have
		\begin{equation*}
			\E{\left|V_n\right|^\gamma},\E{\left|W_n\right|^\gamma} < v'.
		\end{equation*}
		To this end, we can write
		\begin{equation*}
			|V_n|
			=|X_0+r_n|
			\leq |X_0| + r_n
			\leq 2 \max{\{|X_0|, r_n\}}.
		\end{equation*}
		Since $r_n<\frac{\sqrt[\alpha]{n}}{m}\leq 1$, we obtain
		\begin{align*}
			|V_n|^\gamma&
			\leq 2^\gamma \max{\{|X_0|^\gamma,r_n^\gamma\}}
			\leq 2^\gamma \left( |X_0|^\gamma + r_n^\gamma \right)
			\leq 2^\gamma \left( |X_0|^\gamma + 1\right).
		\end{align*}
		Note that \cite[Property 1.2.16]{Samor94}
		\begin{equation*}
			\E{|X_0|^\gamma}<\infty,
			\qquad\forall \gamma<\alpha.
		\end{equation*}
		Therefore, the assertion is proved for $V_n$.
		Following a similar argument for $W_n$, we obtain 
		\begin{equation} \label{eqn:|Wn|^gamma}
			|W_n|^\gamma
			\leq 3^\gamma \left( |X_0|^\gamma + r_n^\gamma + n^\frac{\gamma}{\alpha}|Y_1^{(n)}|^\gamma\right)
			\leq 3^\gamma \left( |X_0|^\gamma + 1 + n^\frac{\gamma}{\alpha}|Y_1^{(n)}|^\gamma\right).
		\end{equation}
		Expected value of $|X_0|^\gamma$ is finite for any $\gamma<\alpha$.
		Thus, we only need to find some $\gamma<\alpha$ such that $\E{|Y_1^{(n)}|^\gamma}$ is finite. 
		According to \eqref{eq.pYn} in Lemma \ref{lmm.PoissonQuantDist}, we have
		\begin{align*}
			\E{|Y_1^{(n)}|^\gamma}
			=& \Pr{Y_1^{(n)}=0} \E{|Y_1^{(n)}|^\gamma|Y_1^{(n)}=0} \\
			&+ \Pr{Y_1^{(n)}\neq 0} \E{|Y_n|^\gamma | Y_1^{(n)} \neq 0} \\
			=& 0 + \left(1-\ue^{-\frac{\lambda}{n}}\right) \E{|A_n|^\gamma}
		\end{align*}
		Therefore,
		\begin{align*}
			n^{\frac{\gamma}{\alpha}}\E{|Y_1^{(n)}|^\gamma}
			=& n^{\frac{\gamma}{\alpha}}
			\left(1-\ue^{-\frac{\lambda}{n}}\right) \E{|A_n|^\gamma}
		\end{align*}
		From \eqref{eqn:EAn->EA} in Lemma \ref{lmm.PoissonQuantDist}, one can find some $\tilde{v}<\infty$ such that $\E{|A_n|^\gamma}<\tilde{v}$ for all $n$ provided that $\E{|A|^\gamma} < \infty$. 
		Furthermore, $\sup_{n\in\mathbb{N}}n^{\frac{\gamma}{\alpha}}\left(1-\ue^{-\frac{\lambda}{n}}\right)<\infty$ for any $\gamma<\alpha$.
		Therefore, we may choose any $0<\gamma<\alpha$.
\end{itemize}
\qed

\subsection{Proof of Theorem \ref{thm:GeneralComparison}} \label{subsec:prf:thm:GeneralComparison}
	\textbf{Proof of item \ref{itm:D-DC}}:
	We can write that
	\begin{align*}
		\frac{\mathcal{H}_{m,n}(X)}{\mathcal{H}_{m,n}(Y)}
		=\frac{\log n}{\kappa_Y(n) (\zeta_Y(n)+\log m)}
		\frac{\frac{\mathcal{H}_{m,n}(X)}{\log n}}{\frac{\mathcal{H}_{m,n}(Y)}{\kappa_Y(n) (\zeta_Y(n)+\log m)}}.
	\end{align*}
	According to Theorem \ref{thm:GenerelEntropyDim}, there exists $c_1,c_2>0$ such that
	\begin{equation*}
		\zeta_Y(n) \geq c_1 \log n,
		\qquad\kappa_Y(n) \geq c_2,
	\end{equation*}
	where the second inequality is obtained from the fact that $\kappa_Y(n)=n\left(1-\exp\left[-\frac{\lambda_Y}{n}\left(1-\alpha_Y\right)\right]\right) $ tends to the constant $\lambda_Y(1-\alpha_Y)$ as $n$ tends to infinity. 
	Therefore,
	\begin{align*}
		\frac{\mathcal{H}_{m,n}(X)}{\mathcal{H}_{m,n}(Y)}
		\leq \frac{\log n}{c_2 (c_1 \log n+\log m)}
		\frac{\frac{\mathcal{H}_{m,n}(X)}{\log n}}{\frac{\mathcal{H}_{m,n}(Y)}{\kappa_Y(n) (\zeta_Y(n)+\log m)}}.
	\end{align*}
	Hence, from the assumption of the theorem, we know that $\log m/\log n$ tends to $\infty$ as $n$ tends to infinity. 
	As a result, it only suffices to prove that
	\begin{align}
		&\lim_{n\to\infty}\sup_{m\geq m'(n)}
		{\frac{\mathcal{H}_{m,n}(X)}{\log n}} < \infty, \label{eqLa3}\\
		&\lim_{n\to\infty} \inf_{m\geq m'(n)}
		{\frac{\mathcal{H}_{m,n}(Y)}{\kappa_Y(n)(\zeta_Y(n) + \log m)}} = 1.\label{eqLa4}
	\end{align}
	Equation \eqref{eqLa3} is immediate from  Theorem \ref{thm:GenerelEntropyDim} since for a discrete $X_0$ we have
	\begin{equation}
		\lim_{{n\to\infty}}\sup_{m:~m\geq m(n)}\left|
		{{\mathcal{H}_{m,n}(X)} -\zeta_X(n)}\right|
		=0.
	\end{equation}
	Furthermore, $\zeta_X(n)$ has a logarithmic growth from the second part of Theorem \ref{thm:GenerelEntropyDim}.  
	To show \eqref{eqLa4}, observe that  from Theorem \ref{thm:GenerelEntropyDim}, we have
	\begin{equation*}
		\lim_{{n\to\infty}}\sup_{m>m'(n)}\bigg|
		{\frac{\mathcal{H}_{m,n}(Y)}{\kappa_Y(n)}-\log m -\zeta_Y(n)\bigg| = 0},
	\end{equation*}
	which means that for any $\epsilon>0$, there exists $n_1$ such that
	\begin{equation*}
		\left|\frac{\mathcal{H}_{m,n}(Y)}{\kappa_Y(n)}-\log m -\zeta_Y(n)\right| \leq \epsilon,
		\qquad\forall n>n_1,\forall m>m'(n).
	\end{equation*}
	In addition, note that $\zeta_Y(n)\geq c_1 \log n$; 
	therefore $\zeta_Y(n) + \log m$ converges to infinity as $n$ converges to infinity.
	Thus, for any $\epsilon>0$ there exists $n_2$ such that
	\begin{equation*}
		\zeta_Y(n) + \log m > \frac{1}{\epsilon},
		\qquad\forall n>n_2,\forall m>m'(n).
	\end{equation*}
	Hence, by taking $n=\max\{n_1,n_2\}$, we have that
	\begin{equation*}
		\left|{\frac{\mathcal{H}_{m,n}(Y)}{\kappa_Y(n)(\zeta_Y(n) + \log m)}}
		-\frac{\log m}{\zeta_Y(n) + \log m} -\frac{\zeta_Y(n)}{\zeta_Y(n) + \log m}\right|
		\leq \epsilon^2,
		\qquad\forall n>n_1,\forall m>m'(n).
	\end{equation*}
	Since $c_1 \log n \leq \zeta_Y(n) \leq c_2 \log n$, and $\log m/ \log n$ tends to infinity if $m>m'(n)$,  we have
	\begin{equation*}
		\lim_{n\to\infty}\sup_{m>m'(n)}\frac{\log m}{\zeta_Y(n) + \log m}
		=\lim_{n\to\infty}\inf_{m>m'(n)}\frac{\log m}{\zeta_Y(n) + \log m}
		=1.
	\end{equation*}
	Likewise,
	\begin{equation*}
		\lim_{n\to\infty}\sup_{m>m'(n)}\frac{\zeta_Y(n)}{\zeta_Y(n) + \log m}
=		\lim_{n\to\infty}\inf_{m>m'(n)}\frac{\zeta_Y(n)}{\zeta_Y(n) + \log m}
		=0.
	\end{equation*}
	Therefore, 
	\begin{equation*}
		\lim_{n\to\infty} \sup_{m\geq m'(n)}
		{\frac{\mathcal{H}_{m,n}(Y)}{\kappa_Y(n)(\zeta_Y(n) + \log m)}}
		=\lim_{n\to\infty} \inf_{m\geq m'(n)}
		{\frac{\mathcal{H}_{m,n}(Y)}{\kappa_Y(n)(\zeta_Y(n) + \log m)}}
		= 1
	\end{equation*}
	Hence, \eqref{eqLa4} is proved.

	\textbf{Proof of item \ref{itm:DC-C}}:
	We can write that
	\begin{equation*}
		\frac{\mathcal{H}_{m,n}(X)}{\mathcal{H}_{m,n}(Y)}
		=\frac{\frac{\mathcal{H}_{m,n}(X)}{n \log m}}
		{\frac{\mathcal{H}_{m,n}(Y)}{n\log m}}.
	\end{equation*}
It suffices to prove that
	\begin{align}
		&\lim_{{n\to\infty}}\sup_{m>m'(n)}
		{\frac{\mathcal{H}_{m,n}(X)}{n \log m}} = 0, \label{eqn:Hdc/nlogm=0}\\
		&\lim_{{n\to\infty}}\inf_{m>m'(n)}
		{\frac{\mathcal{H}_{m,n}(Y)}{n\log m}} = 1. \label{eqn:Hc/nlogm=1}
	\end{align}

	\emph{Proof of \eqref{eqn:Hdc/nlogm=0}}:
	Note that from Theorem \ref{thm:GenerelEntropyDim}, we obtain that for any $\epsilon>0$, there exists $n_1$ such that
	\begin{equation*}
		\left|\frac{\mathcal{H}_{m,n}(X)}{\kappa_X(n)}-\log m -\zeta_X(n)\right| \leq \epsilon,
		\qquad\forall n>n_1,\forall m>m'(n).
	\end{equation*}
	In addition,  $1/(n\log m)\leq \epsilon$ for any $m\geq 2$ and $n\geq \epsilon^{-1}$.  Therefore, for any $\epsilon>0$ we have
	\begin{equation*}
		\left|\frac{\mathcal{H}_{m,n}(X)}{\kappa_X(n) n \log m}
		-\frac{1}{n} -\frac{\zeta_X(n)}{n \log m}\right| \leq \epsilon^2,
		\qquad\forall n>n_2,\forall m>m'(n),
	\end{equation*}
where $n_2=\max(n_1,  \epsilon^{-1})$. 
	Utilizing the fact that from Theorem \ref{thm:GenerelEntropyDim}, $\zeta_X(n) \leq c_2 \log n$, we have
	\begin{equation*}
		\lim_{n\to\infty}\sup_{m>m'(n)}\frac{\zeta_X(n)}{n \log m}
		\leq \lim_{n\to\infty}\sup_{m>m'(n)}\frac{c_2 \log n}{n \log m}
\leq \lim_{n\to\infty}\frac{c_2 \log n}{n}
		=0.
	\end{equation*}
As a result,
	\begin{equation*}
		\lim_{{n\to\infty}}\sup_{m>m'(n)}
		{\frac{\mathcal{H}_{m,n}(X)}{\kappa_X(n)n \log m}} = 0.
	\end{equation*}
 The value of $\kappa_X(n)$ given in Theorem \ref{thm:GenerelEntropyDim} tends to a constant as $n$ tends to infinity. This completes the proof for \eqref{eqn:Hdc/nlogm=0}.

	\emph{Proof of \eqref{eqn:Hc/nlogm=1}}: From the theorem's assumption, the ratio of $\log m'(n)$ and $\zeta_Y(n)$ goes to infinity. We know from Theorem \ref{thm:GenerelEntropyDim} that $\zeta_Y(n)$ is non-decreasing. As a result, \begin{equation}\lim_{n\rightarrow\infty}\log m'(n)=\infty.\label{eq;nreq32}\end{equation} 
Utilizing Theorem \ref{thm:GenerelEntropyDim} and \eqref{eq;nreq32}, we obtain that for any $\epsilon>0$, there exists $n_1$ such that
	\begin{equation*}
		\left|\frac{\mathcal{H}_{m,n}(Y)}{n}
		-\log m -\zeta_Y(n) \right| \leq \epsilon,
		\qquad\forall n>n_1,\forall m>m'(n).
	\end{equation*}
and 
	\begin{equation*}
		\frac{1}{\log(m)} \leq \epsilon,
		\qquad\forall n>n_1,\forall m>m'(n).
	\end{equation*}
As a result,
	\begin{equation*}
		\left|\frac{\mathcal{H}_{m,n}(Y)}{n \log m}
		-1 -\frac{\zeta_Y(n)}{\log m} \right| \leq \epsilon^2,
		\qquad\forall n>n_1,\forall m>m'(n).
	\end{equation*}
We have that $0\leq \zeta_Y(n)/\log m\leq \zeta_Y(n)/\log m'(n)$ for any $m>m'(n)$. Furthermore, $\zeta_Y(n)/\log m'(n)$
 vanishes as $n$ tends to infinity according to the theorem's assumption. Thus, 
	\begin{equation*}
		\lim_{n\to\infty} \sup_{m\geq m'(n)}
		{\frac{\mathcal{H}_{m,n}(Y)}{n \log m}}
		=\lim_{n\to\infty} \inf_{m\geq m'(n)}
		{\frac{\mathcal{H}_{m,n}(Y)}{n \log m}}
		= 1.
	\end{equation*}
	Hence, the statement is proved.
	
	\textbf{Proof of item \ref{itm:C-CG}}:
	We can write
	\begin{align*}
		\frac{\mathcal{H}_{m,n}(X)}{\mathcal{H}_{m,n}(Y)} - 1 
		=& \frac{\left[\frac{\mathcal{H}_{m,n}(X)}{n} - \log m - \zeta_X(n) \right]
		-\left[ \frac{\mathcal{H}_{m,n}(Y)}{n} - \log\frac{m}{\sqrt{n}}-\h{Y_0}\right]}
		{\frac{\mathcal{H}_{m,n}(Y)}{n}} \\
		&+\frac{\zeta_X(n)-\h{X_0}-\log\frac{1}{\sqrt n}}{\frac{\mathcal{H}_{m,n}(Y)}{n}}
		+\frac{\h{X_0}}{{\frac{\mathcal{H}_{m,n}(Y)}{n}}}
		-\frac{\h{Y_0}}{{\frac{\mathcal{H}_{m,n}(Y)}{n}}}.
	\end{align*}
	In order to prove the theorem, it suffices to show that 
	\begin{align}
		&\lim_{n\to\infty}\sup_{m>m(n)}
		{\left|\frac{\mathcal{H}_{m,n}(X)}{n}-\log m-\zeta_X(n)\right|}
		= 0, \label{eqn:HG,HX/n-lnm<inf1}\\
		&\lim_{n\to\infty}\sup_{m>m(n)}
		{\left|\frac{\mathcal{H}_{m,n}(Y)}{n}-\log\frac{m}{\sqrt{n}}-\h{Y_0}\right|}
		= 0, \label{eqn:HG,HX/n-lnm<inf2}
	\end{align}
	\begin{equation} \label{HG/n=inf}
		\lim_{n\to\infty}\inf_{m>m(n)}{\frac{\mathcal{H}_{m,n}(Y)}{n}}
		= \infty,
	\end{equation}
	\begin{equation} \label{eqn:zeta+alnn>0}
		\limsup_{n\to\infty}{\zeta_X(n) -\h{X_0} - \log\frac{1}{\sqrt n}}
		\leq 0.
	\end{equation}
	
	From Theorem \ref{thm:GenerelEntropyDim} and Theorem \ref{thm.AlphaCompCrit}, the limits of \eqref{eqn:HG,HX/n-lnm<inf1} and \eqref{eqn:HG,HX/n-lnm<inf2} are proved, respectively.

	To prove \eqref{HG/n=inf}, observe that from Theorem \ref{thm.AlphaCompCrit}
	\begin{align*}
		\frac{\mathcal{H}_{m,n}(Y)}{n}
		=\left[\frac{\mathcal{H}_{m,n}(Y)}{n}-\log\frac{m}{\sqrt{n}}\right]
		+\log\frac{m}{\sqrt{n}},
	\end{align*}
	where the first expression converges to $\h{Y_0}$ according to Theorem \ref{thm.AlphaCompCrit}. The second one tends to infinity because of the assumption of the theorem.
	
	Finally, \eqref{eqn:zeta+alnn>0} is the direct consequence of Lemma \ref{lmm:zeta<EntP}.

	Hence, the theorem is proved.
\qed

\subsection{Proof of Theorem \ref{thm:Comparison}} \label{subsec:Comparison:proof}
\textbf{Case 1: Stable}

\emph{Proof of \eqref{eqn:ComparStable/}}:
According to Theorem \ref{thm.AlphaCompCrit}, observe that for $i=1,2$
\begin{equation*}
	\lim_{n\to\infty} \sup_{m \geq m(n)}{\log\frac{m}{\sqrt[\alpha_i]{n}}}
	 = +\infty.
\end{equation*}
Therefore, we can write
\begin{equation*}
	\lim_{n\to\infty}\sup_{m \geq m(n)}
	{\left|\frac{\mathcal{H}_{m,n}(X_i)}{n\log\frac{m}{\sqrt[\alpha_i]{n}}}-1\right|}
	= 0.
\end{equation*}

Note that
\begin{equation*}
	\frac{\mathcal{H}_{m,n}(X_1)}{\mathcal{H}_{m,n}(X_2)}
	=\frac{\frac{\mathcal{H}_{m,n}(X_1)}{n\log({m}/{\sqrt[\alpha_1]{n}})}}
	{\frac{\mathcal{H}_{m,n}(X_2)}{n\log({m}/{\sqrt[\alpha_2]{n}})}}
	\frac{\log({m}/{\sqrt[\alpha_1]{n}})}{\log({m}/{\sqrt[\alpha_2]{n}})}.
\end{equation*}
Since the limit of the second fraction depends on how $m$ changes as $n$ tends to $\infty$, it does not have a limit.
However, due to $\alpha_1<\alpha_2$ we can write
\begin{equation*}
	\frac{\log({m}/{\sqrt[\alpha_1]{n}})}{\log({m}/{\sqrt[\alpha_2]{n}})}
	< 1.
\end{equation*}
As a result, the statement is proved.

\emph{Proof of \eqref{eqn:ComparStable-}}:
We can write
\if@twocolumn
	\begin{align*}
	&\mathcal{H}_{m_k,n_k}(X_1)-\mathcal{H}_{m_k,n_k}(X_2)\\
	&~=n_k\Bigg[\frac{\mathcal{H}_{m_k,n_k}(X_1)-n_k\log\frac{m_k}{\sqrt[\alpha_1]{n_k}}}{n_k} \nonumber\\
	&\qquad\quad-\frac{\mathcal{H}_{m_k,n_k}(X_2)-n_k\log\frac{m_k}{\sqrt[\alpha_2]{n_k}}}{n_k}
	+\log\frac{\sqrt[\alpha_2]{n_k}}{\sqrt[\alpha_1]{n_k}} \Bigg].
	\end{align*}
\else
	\begin{equation*}
	\mathcal{H}_{m,n}(X_1)-\mathcal{H}_{m,n}(X_2)
	=n \Bigg[\frac{\mathcal{H}_{m,n}(X_1)}{n} - \log\frac{m}{\sqrt[\alpha_1]{n}}
	-\frac{\mathcal{H}_{m,n}(X_2)}{n} +\log\frac{m}{\sqrt[\alpha_2]{n}} \Bigg]
	+n \log\frac{\sqrt[\alpha_2]{n}}{\sqrt[\alpha_1]{n}}.
	\end{equation*}
\fi
From Theorem \ref{thm.AlphaCompCrit} we obtain that there exists $n_0$ such that for $n>n_0$ we have
\begin{equation*}
	\frac{\mathcal{H}_{m,n}(X_1)}{n} - \log\frac{m}{\sqrt[\alpha_1]{n}}
	-\frac{\mathcal{H}_{m,n}(X_2)}{n} +\log\frac{m}{\sqrt[\alpha_2]{n}}
	\geq C,
\end{equation*}
where $C:=\h{X_0^{(\alpha_1)}}-\h{X_0^{(\alpha_2)}}-1$.
Therefore, for $n>n_0$, we have that
\begin{equation*}
	\mathcal{H}_{m,n}(X_1)-\mathcal{H}_{m,n}(X_2)
	\geq n\left(C+\log\frac{\sqrt[\alpha_2]{n}}{\sqrt[\alpha_1]{n}}\right).
\end{equation*}
Hence, $\mathcal{H}_{m,n}(X_1)-\mathcal{H}_{m,n}(X_2)\to+\infty$.

\textbf{Case 2: Impulsive Poisson}

\emph{Proof of \eqref{eqn:ComparPoisson/}}:
Observe that for $i=1,2$ we have
\begin{equation*}
	\lim_{n\to\infty} \sup_{m \geq m(n)}{n\left(1-\ue^{-{\lambda_i}/{n}}\right) \log(mn)}
	 = +\infty.
\end{equation*}
Hence, from Theorem \ref{thm.PoissonCompCrit}, we can write
\begin{equation*}
	\lim_{n\to\infty}\sup_{m \geq m(n)}
	{\left|\frac{\mathcal{H}_{m,n}(X_i)}{n\left(1-\ue^{-{\lambda_i}/{n}}\right)\log(mn)}-1\right|}
	= 0.
\end{equation*}

Note that
\begin{equation*}
	\frac{\mathcal{H}_{m,n}(X_1)}{\mathcal{H}_{m,n}(X_2)}
	=\frac{\frac{\mathcal{H}_{m,n}(X_1)}{n\left(1-\ue^{-{\lambda_1}/{n}}\right)\log(mn)}}{\frac{\mathcal{H}_{m,n}(X_2)}{n\left(1-\ue^{-{\lambda_2}/{n}}\right)\log(mn)}}
	\frac{1-\ue^{-\frac{\lambda_1}{n}}}{1-\ue^{-\frac{\lambda_2}{n}}}.
\end{equation*}

Hence, we obtain that
\begin{equation*}
	\lim_{n\to\infty}\sup_{m \geq m(n)}
	{\frac{\mathcal{H}_{m,n}(X_1)}{\mathcal{H}_{m,n}(X_2)}}
	=\frac{\lambda_1}{\lambda_2} < 1.
\end{equation*}

\emph{Proof of \eqref{eqn:ComparPoisson-}}:
From Theorem \ref{thm.PoissonCompCrit} and the fact that
\begin{equation*}
	\lim_{n\to\infty}{n \left(1-\ue^{-\frac{\lambda}{n}}\right)}
	= \lambda,
\end{equation*}
we can write for $i=1,2$ that
\begin{align*}
	&\lim_{n\to\infty}
	{\sup_{m\geq m(n)}
	{\mathcal{H}_{m,n}(X_i)
	- n \left(1-\ue^{-\frac{\lambda_i}{n}}\right) \log(mn)}} \\
	&\qquad=\lim_{n\to\infty}
	{\inf_{m\geq m(n)}
	{\mathcal{H}_{m,n}(X_i)
	- n \left(1-\ue^{-\frac{\lambda_i}{n}}\right) \log(mn)}} \\
	&\qquad= \lambda_i (\h{A_i} + \log \lambda_i - 1).
\end{align*}
Then, we can write that
\begin{align*}
	\mathcal{H}_{m,n}(X_1)-\mathcal{H}_{m,n}(X_2)
	=&\left[\mathcal{H}_{m,n}(X_1)-n\left(1-\ue^{-\frac{\lambda_1}{n}}\right) \log(m n)\right] \\
	&-\left[\mathcal{H}_{m,n}(X_2)-n\left(1-\ue^{-\frac{\lambda_2}{n}}\right)\log(m n)\right] \\
	&+n\left(\ue^{-\frac{\lambda_2}{n}}-\ue^{-\frac{\lambda_1}{n}}\right)\log(m n).
\end{align*}
Therefore, the first two parts are converging to a constant.
For the last one, note that there exists $n_0$ such that for $n>n_0$ we have that while the last one converges to $-\infty$.
Therefore, the theorem is proved.
\qed

\subsection{Proof of Lemma \ref{lmm:finite v->Poisson}} \label{subsec:prf:lmm:finite v->Poisson}
	From Definition \ref{def.WhiteNoise} and Theorem \ref{thm.LevyKhintchin} we obtain that for any function $\varphi(t)\in\mathcal{S}(\bR)$
	\begin{align*}
		\E{\ue^{\uj \InProd{X}{\varphi}}}
		=& \exp\left(
		j \mu \int_\bR{\varphi(t) \ud t}
		+ \int_{\bR\setminus\{0\}}{\int\left(\ue^{\uj\varphi(t) a}-1-\uj\varphi(t) a \mathds{1}_{(-1,1)}(a) \ud t \right) v(a) \ud a} \right) \\
		=& \exp\left(
		j \mu' \int_\bR{\varphi(t) \ud t}
		+ \lambda \int_{\bR\setminus\{0\}}\int{\left(\ue^{\uj\varphi(t) a}-1\right) p_A(a) \ud t \ud a} \right)
		 \\
		=& \exp\left(
		j \mu' \int_\bR{\varphi(t) \ud t}\right)\times  \exp\left(
		 \lambda \int_{\bR\setminus\{0\}}\int{\left(\ue^{\uj\varphi(t) a}-1\right) p_A(a) \ud t \ud a} \right).
	\end{align*}
	Hence, from Definition \ref{def.Poisson}, the assertion is proved.
\qed

\subsection{Proof of Lemma \ref{lmm.PoissonQuantDist}} \label{subsec:ProofPoissonQuantDist}
\emph{Proof of \eqref{eq.pYn}}:
First, we find the characteristic function of $Y_n$ in terms of the characteristic function of $A$.
From the definition of $Y_n$ in \eqref{eq.Yn}, we have that
\begin{equation*}
	\widehat{p}_{Y_n}(\omega)
	= \E{\ue^{\uj\omega Y_n}}
	= \E{\ue^{\uj\omega\InProd{X_n}{\phi_{1,n}}}},
\end{equation*}
where $\phi_{1,n}$ defined in Definition \ref{def.TimeQuant}.
Due to the definition of white noise in Definition \ref{def.WhiteNoise}, we can write
\begin{equation*}
	\widehat{p}_{Y_n}(\omega)=\E{\ue^{\uj\InProd{X_n}{\omega\phi_{1,n}}}}
	= \ue^{\int_0^\frac{1}{n}{{f(\omega)}\ud t}}
	= \ue^{\frac{1}{n}f(\omega)}.
\end{equation*}
Because of the definition of Poisson process in Definition \ref{def.Poisson}, we have that
\begin{equation*}
	f(\omega)
	= \lambda\int_\bR{(\ue^{\uj\omega x}-1)p_A(x)\ud x}
	= \lambda\widehat{p}_A(\omega)-\lambda,
\end{equation*}
where $\widehat{p}_A(\omega)$ is the characteristic function of $A$. Hence, we conclude that
\if@twocolumn
\begin{align*}
	\widehat{p}_{Y_n}(\omega)
	=&\ue^{-\frac{\lambda}{n}}\ue^{\frac{\lambda}{n} \widehat{p}_A(\omega)}\\
	=&\ue^{-\frac{\lambda}{n}}\left(1+\sum_{k=1}^\infty
	{\frac{\left( \frac{\lambda}{n}\widehat{p}_A(\omega) \right)^k}{k!}}\right)\\
	=&\ue^{-\frac{\lambda}{n}}+\left(1-\ue^{-\frac{\lambda}{n}}\right)\widehat{p}_{A_n}(\omega), \EQnum\label{eqn:pYw pYnw}
\end{align*}
\else
\begin{align}
	\widehat{p}_{Y_n}(\omega)
	=&\ue^{-\frac{\lambda}{n}}\ue^{\frac{\lambda}{n} \widehat{p}_A(\omega)}
	=\ue^{-\frac{\lambda}{n}}\left(1+\sum_{k=1}^\infty
	{\frac{\left( \frac{\lambda}{n}\widehat{p}_A(\omega) \right)^k}{k!}}\right) \nonumber\\
	=&\ue^{-\frac{\lambda}{n}}+\left(1-\ue^{-\frac{\lambda}{n}}\right)\widehat{p}_{A_n}(\omega), \label{eqn:pYw pYnw}
\end{align}
\fi
where
\begin{equation} \label{eq.HpYn}
	\widehat{p}_{A_n}(\omega):=
	\frac{1}{\ue^{\frac{\lambda}{n}}-1}\sum_{k=1}^\infty
	{\frac{\left( \frac{\lambda}{n}\widehat{p}_A(\omega) \right)^k}{k!}}.
\end{equation}
Therefore, by taking the inverse Fourier transform of \eqref{eqn:pYw pYnw}, \eqref{eq.pYn} is proved.

\emph{Proof of \eqref{eq:pAn}}:
It is achieved by taking the inverse Fourier transform from \eqref{eq.HpYn}.

\emph{Proof of \eqref{eq.VarConv}:}
We can write
\if@twocolumn
	\begin{align*}
	&p_{A_n}(x) =
	\frac{1}{\ue^{\frac{\lambda}{n}}-1}
	\sum_{k=1}^\infty
	{\frac{\left( \frac{\lambda}{n} \right)^k}{k!}\big(\overbrace{p_A*\cdots*p_A}^k \big)(x)} \\
	&~~~ =\frac{\frac{\lambda}{n}}{\ue^{\frac{\lambda}{n}}-1} p_A(x)
	+\frac{1}{\ue^{\frac{\lambda}{n}}-1}
	\sum_{k=2}^\infty
	{\frac{\left( \frac{\lambda}{n} \right)^k}{k!}\big(\overbrace{p_A*\cdots*p_A}^k \big)(x)}. 	
	 \end{align*}
\else
	\begin{align*}
	p_{A_n}(x) 
	&= \frac{1}{\ue^{\frac{\lambda}{n}}-1}
	\sum_{k=1}^\infty
	{\frac{\left( \frac{\lambda}{n} \right)^k}{k!}\big(\overbrace{p_A*\cdots*p_A}^k \big)(x)} \\
	&=\frac{\frac{\lambda}{n}}{\ue^{\frac{\lambda}{n}}-1} p_A(x)
	+\frac{1}{\ue^{\frac{\lambda}{n}}-1}
	\sum_{k=2}^\infty
	{\frac{\left( \frac{\lambda}{n} \right)^k}{k!}\big(\overbrace{p_A*\cdots*p_A}^k \big)(x)}. 	
	 \end{align*}
\fi
Hence, we can write:
\if@twocolumn
	\begin{align*}
	&| p_{A_n}(x)-p_A(x) | \\
	&\qquad\leq\left| \tfrac{\frac{\lambda}{n}}{\ue^{\frac{\lambda}{n}}-1} -1\right| p_A(x) \nonumber\\
	&\qquad\quad+\frac{1}{\ue^{\frac{\lambda}{n}}-1}
	\sum_{k=2}^\infty
	{\frac{\left( \frac{\lambda}{n} \right)^k}{k!} \big(\overbrace{p_A*\cdots*p_A}^k\big) (x)}.
	\end{align*}
\else
	\begin{align*}
	| p_{A_n}(x)-p_A(x) |
	\leq&\left| \tfrac{\frac{\lambda}{n}}{\ue^{\frac{\lambda}{n}}-1} -1\right| p_A(x)
	+\frac{1}{\ue^{\frac{\lambda}{n}}-1}
	\sum_{k=2}^\infty
	{\frac{\left( \frac{\lambda}{n} \right)^k}{k!} \big(\overbrace{p_A*\cdots*p_A}^k\big) (x)}.
	\end{align*}
\fi
Since $p_A$ is a probability density, $\overbrace{p_A*\cdots*p_A}^k$ is also a probability density; hence,
\begin{equation*}
	\int_{\mathbb{R}}{\big(\overbrace{p_A*\cdots*p_A}^k\big)(x) \ud x} = 1.
\end{equation*}
Therefore,
\if@twocolumn
	\begin{align*}
	\int_\mathbb{R}{| p_{A_n}(x)-p_A(x) |\ud x}
	\leq& \left| \tfrac{\frac{\lambda}{n}}{\ue^{\frac{\lambda}{n}}-1} \hspace{-1mm}-\hspace{-1mm} 1\right|
	+\frac{1}{\ue^{\frac{\lambda}{n}} \hspace{-1mm}-\hspace{-1mm} 1}
	\sum_{k=2}^\infty
	{\frac{\left( \frac{\lambda}{n} \right)^k}{k!}} \\
	=& 2\frac{\ue^{\frac{\lambda}{n}} -\frac{\lambda}{n}-1}{\ue^{\frac{\lambda}{n}}-1},
	\end{align*}
\else
	\begin{align*}
	\int_\mathbb{R}{| p_{A_n}(x)-p_A(x) |\ud x}
	\leq \left| \tfrac{\frac{\lambda}{n}}{\ue^{\frac{\lambda}{n}}-1} -1\right|
	+\frac{1}{\ue^{\frac{\lambda}{n}}-1}
	\sum_{k=2}^\infty
	{\frac{\left( \frac{\lambda}{n} \right)^k}{k!}} 
	= 2\frac{\ue^{\frac{\lambda}{n}}-\frac{\lambda}{n}-1}{\ue^{\frac{\lambda}{n}}-1},
	\end{align*}
\fi
which vanishes as $n$ tends to $\infty$.

\emph{Proof of \eqref{eqn:EAn->EA}}:
Let the sequence of independent and identically distributed random variables $\left\lbrace A^{(i)}\right\rbrace_{i=1}^\infty$ have pdf $p_A$, and let the discrete random variable $B$ be independent to $\left\lbrace A^{(i)}\right\rbrace_{i=1}^\infty$ and have the following pmf:
\begin{equation*}
	P_B[k]:=\Pr{B=k}
	= \frac{1}{\ue^{\frac{\lambda}{n}}-1}\frac{\left( \frac{\lambda}{n} \right)^k}{k!},
	\qquad \forall k\in\mathbb{N}:=\{1,2,\cdots\}.
\end{equation*}
From \eqref{eq:pAn}, we obtain that
\begin{equation*}
	A_n \stackrel{d}{=} \sum_{i=1}^B{A^{(i)}}.
\end{equation*}
Now, to find $\E{\left|A_n\right|^\alpha}$, we can write
\if@twocolumn
	\begin{align*}
	&\E{\left|A_n\right|^\alpha}
	=\sum_{k=1}^\infty{P_B[k]\E{\Big| \sum_{i=1}^k{A^{(i)}} \Big|^\alpha}}\\
	&~~~ =\tfrac{\frac{\lambda}{n}}{\ue^\frac{\lambda}{n}-1}\E{|A|^\alpha}
	+\tfrac{1}{\ue^\frac{\lambda}{n}-1}
	\sum_{k=2}^\infty
	{\tfrac{\left( \frac{\lambda}{n} \right)^k}{k!}\E{\Big|\sum_{i=1}^k{A^{(i)}}\Big|^\alpha}}.
	\end{align*}
\else
	\begin{align*}
	\E{\left|A_n\right|^\alpha}
	=\sum_{k=1}^\infty{P_B[k]\E{\Big| \sum_{i=1}^k{A^{(i)}} \Big|^\alpha}}
	=\tfrac{\frac{\lambda}{n}}{\ue^\frac{\lambda}{n}-1}\E{|A|^\alpha}
	+\tfrac{1}{\ue^\frac{\lambda}{n}-1}
	\sum_{k=2}^\infty
	{\tfrac{\left( \frac{\lambda}{n} \right)^k}{k!}\E{\Big|\sum_{i=1}^k{A^{(i)}}\Big|^\alpha}},
	\end{align*}
\fi
provided that the limit exists.
From
\begin{equation*}
	\lim_{n\to\infty}\tfrac{\frac{\lambda}{n}}{\ue^\frac{\lambda}{n}-1}=1,
\end{equation*}
we have to prove that
\begin{equation}
	\lim_{n\to\infty}{\tfrac{1}{\ue^\frac{\lambda}{n}-1}
	\sum_{k=2}^\infty{\tfrac{\left( \frac{\lambda}{n} \right)^k}{k!}\E{\Big|\sum_{i=1}^k{A^{(i)}}\Big|^\alpha}}}
	=0. \label{eq:pAn1}
\end{equation}
In order to prove \eqref{eq:pAn1}, it suffices to show that 
\begin{equation} \label{eq:pAn2}
	\E{\Big|\sum_{i=1}^k{A^{(i)}}\Big|^\alpha} \text{ exists},
	\qquad\E{\Big|\sum_{i=1}^k{A^{(i)}}\Big|^\alpha}\leq c^k,
\end{equation}
where $c=2^\alpha(1+\E{|A|^\alpha})$ is a constant.
This is because
\if@twocolumn
	\begin{align*}
	\tfrac{1}{\ue^\frac{\lambda}{n}-1}
	\sum_{k=2}^\infty{\tfrac{\left( \frac{\lambda}{n} \right)^k}{k!}
	\E{\Big|\sum_{i=1}^k{A^{(i)}}\Big|^\alpha}} 
	&\leq\tfrac{1}{\ue^\frac{\lambda}{n}-1}
	\sum_{k=2}^\infty{\tfrac{\left( \frac{\lambda}{n} \right)^k}{k!}c^k}\\
	&=\tfrac{\ue^{\frac{\lambda}{n}c}-\frac{\lambda}{n}c-1}{\ue^\frac{\lambda}{n}-1},
	\end{align*}
\else
	\begin{align*}
	\tfrac{1}{\ue^\frac{\lambda}{n}-1}
	\sum_{k=2}^\infty{\tfrac{\left( \frac{\lambda}{n} \right)^k}{k!}
	\E{\Big|\sum_{i=1}^k{A^{(i)}}\Big|^\alpha}}
	\leq&\tfrac{1}{\ue^\frac{\lambda}{n}-1}
	\sum_{k=2}^\infty{\tfrac{\left( \frac{\lambda}{n} \right)^k}{k!}c^k}
	=\tfrac{\ue^{\frac{\lambda}{n}c}-\frac{\lambda}{n}c-1}{\ue^\frac{\lambda}{n}-1},
	\end{align*}
\fi
vanishes as $n$ tends to infinity.
Therefore, it only remains to prove \eqref{eq:pAn2}.
We can bound $\big|A^{(1)}+\cdots+A^{(k)}\big|$ as follows:
\if@twocolumn
\begin{align*}
	\Big|\sum_{i=1}^k{A^{(i)}}\Big|
	&\leq \sum_{i=1}^k{\left|A^{(i)}\right|}
	\leq 1+\sum_{i=1}^k{\left|A^{(i)}\right|}\\
	& \leq \prod_{i=1}^k{\left(1+\left|A^{(i)}\right|\right)}
	\leq \prod_{i=1}^k\left({2\max{\left\lbrace1,\left|A^{(i)}\right|\right\rbrace}}\right).
\end{align*}
\else
\begin{align*}
	\Big|\sum_{i=1}^k{A^{(i)}}\Big|
	\leq \sum_{i=1}^k{\left|A^{(i)}\right|}
	\leq 1+\sum_{i=1}^k{\left|A^{(i)}\right|}
	\leq \prod_{i=1}^k{\left(1+\left|A^{(i)}\right|\right)}
	\leq \prod_{i=1}^k\left({2\max{\left\lbrace1,\left|A^{(i)}\right|\right\rbrace}}\right).
\end{align*}
\fi
By finding the expected value of the $\alpha$ power of both sides, we have
\begin{align}
	\E{\Big|\sum_{i=1}^k{A^{(i)}}\Big|^\alpha}
	\leq& 2^{\alpha k}\, \E{\prod_{i=1}^k{\max{\left\lbrace1,\left|A^{(i)}\right|^\alpha\right\rbrace}}} \label{eq:pAn3.1}\\
	=&2^{\alpha k}\, \prod_{i=1}^k\E{\max{\left\lbrace1,\left|A^{(i)}\right|^\alpha\right\rbrace}} \label{eq:pAn3.2}\\
	=&\left(2^\alpha \, \E{\max{\left\lbrace1,\left|A\right|^\alpha\right\rbrace}}\right)^k \label{eq:pAn3.3},
\end{align}
where \eqref{eq:pAn3.1} is true because $\max\lbrace1,|x|\rbrace^\alpha=\max\lbrace1,|x|^\alpha\rbrace$ for all $x\in\mathbb{R}$, \eqref{eq:pAn3.2} is true because $A^{(1)},\cdots,A^{(n)}$ are independent, and \eqref{eq:pAn3.3} is true because $A^{(1)},\cdots,A^{(n)}$ are identically distributed.
In order to find an upper bound for \eqref{eq:pAn3.3}, we can write
\begin{align*}
	\E{\max{\left\lbrace1,\left|A\right|^\alpha\right\rbrace}}
	\leq  ~\E{1+\left|A\right|^\alpha} 
	= 1+\E{\left|A\right|^\alpha}
\end{align*}
Hence, \eqref{eq:pAn2}, and the theorem are proved.
\qed

\subsection{Proof of Lemma \ref{lmm.AlphaStableRenyiCondition}} \label{subsec:ProofX0features}
\emph{Proof of \eqref{eqn:pX0AC}, \eqref{eqn:pX0<L}, and \eqref{eqn:pX0PWC}}:
If we prove that $X_0$ is a stable random variable, then \eqref{eqn:pX0AC}-\eqref{eqn:pX0PWC} are proved. \cite{Samor94}
In order to do so, we find the characteristic function of $X_0$:
\begin{align*}
	\widehat{p}_{X_0}(\omega):=&\E{\ue^{\uj\omega\InProd{X}{\phi}}}
	=\ue^{\int_0^1{f(\omega)\ud t}}\EQnum\label{eq.X0 1.1}
	=\ue^{f(\omega)},
\end{align*}
where, $f(\omega)$ is a valid L\'evy exponent, defined in \eqref{eq.StableLevyExp}, and \eqref{eq.X0 1.1} is true because of the definition of stable white noise in Definition \ref{def.StableProcess}.
Hence, from Definition \ref{def.StableRV}, we obtain that $X_0$ is a stable random variable.

\emph{Proof of \eqref{eqn:hpX0<inf}}:
Since the functions $x\mapsto p_{X_0}(x)$ and $p\mapsto p|\log(1/p)|$ are continuous, the function $x\mapsto p_{X_0}(x) \left|\log(1/p_{X_0}(x))\right|$ is also continuous.
Thus, for any arbitrary $x_0>0$, we have
\begin{equation*}
	\int_{-x_0}^{+x_0}{p_{X_0}(x) \left|\log \tfrac{1}{p_{X_0}(x)}\right| \ud x} < \infty.
\end{equation*}
Therefore, in order to prove \eqref{eqn:hpX0<inf}, we need to prove the boundedness of the tail of the integral.
For sufficiently large $x_0$, we have
\begin{equation} \label{eq.prf.lmm.AlphaStableBound}
	p_{X_0}(x) \leq\frac{c}{|x|^{\alpha+1}} \qquad \forall x>x_0,
\end{equation}
for some positive constant $c$ depending on parameters $(\alpha,\beta,\sigma,\mu)$ \cite{Samor94}.
Since $p\mapsto p|\log({1}/{p})|$ is increasing for $p\in [0, 1/\ue]$, it suffices to show that for sufficiently large $x_0$, the following integral is bounded:
\begin{equation} \label{eq.prf.lmm.AlphaStableRenyiCondition}
	\int_{|x|>x_0}{\frac{c}{|x|^{\alpha+1}}\log{\tfrac{|x|^{\alpha+1}}{c}}\ud x} <\infty.
\end{equation}
By changing variable $y={x}/{c^{{1}/(\alpha+1)}}$, it suffices to show that 
\begin{equation}
	\int_{|y|>y_0}\frac{1}{|y|^{\alpha+1}} \log{|y|^{\alpha+1}}\ud y <\infty,
\end{equation}
where $y_0={x_0}/{c^{{1}/(\alpha+1)}}$, which holds.

\emph{Proof of \eqref{eqn:HpX0<inf}}:
To show that $\H{\quant{X_0}{1}} < \infty$, it suffices to look at the tail of the probability sequence of $\quant{X_0}{1}$.
From \eqref{eq.prf.lmm.AlphaStableBound}, for sufficiently large $x_0$, we have that for all $|m|>x_0$
\begin{align}
	\int_{m-\frac{1}{2}}^{m+\frac{1}{2}}{p_{X_0}(x)\ud x}
	\leq& \int_{m-\frac{1}{2}}^{m+\frac{1}{2}}{\frac{a}{|x|^{\alpha+1}} \ud x} 
	< \tfrac{a}{\left(|m|-\frac{1}{2}\right)^{\alpha+1}}. \label{eq.prf.lmm.AlphaStableBound2}
\end{align}
Hence, it suffices to show that
\if@twocolumn
\begin{align*}
	&\sum_{|m|-\frac{1}{2}>x_0}{\tfrac{a}{\left(|m|-\frac{1}{2}\right)^{\alpha+1}}\log \tfrac{\left(|m|-\frac{1}{2}\right)^{\alpha+1}}{a}} \\
	&\qquad=2\sum_{m>x_0+\frac{1}{2}}{\tfrac{a}{\left(m-\frac{1}{2}\right)^{\alpha+1}}\log \tfrac{\left(m-\frac{1}{2}\right)^{\alpha+1}}{a}} 
	<\infty.
\end{align*}
\else
	\begin{align*}
	&\sum_{|m|-\frac{1}{2}>x_0}{\tfrac{a}{\left(|m|-\frac{1}{2}\right)^{\alpha+1}}\log \tfrac{\left(|m|-\frac{1}{2}\right)^{\alpha+1}}{a}} =2\sum_{m>x_0+\frac{1}{2}}{\tfrac{a}{\left(m-\frac{1}{2}\right)^{\alpha+1}}\log \tfrac{\left(m-\frac{1}{2}\right)^{\alpha+1}}{a}} 
	<\infty.
	\end{align*}
\fi
due that ${a}/{x^{\alpha+1}}\log({x^{\alpha+1}}/{a})$ is a decreasing function for sufficiently large $x$, we obtain that
\if@twocolumn
	\begin{align*}
	&\sum_{m\geq x_0+1} \tfrac{a}{\left(m-\frac{1}{2}\right)^{\alpha+1}}\log \tfrac{\left(m-\frac{1}{2}\right)^{\alpha+1}}{a} \\
	&\qquad<\int_{x>x_0}{\tfrac{a}{x^{\alpha+1}}\log{\tfrac{x^{\alpha+1}}{a}}\ud x} 
	<\infty,
	\end{align*}
\else
	\begin{align*}
	&\sum_{m\geq x_0+1} \tfrac{a}{\left(m-\frac{1}{2}\right)^{\alpha+1}}\log \tfrac{\left(m-\frac{1}{2}\right)^{\alpha+1}}{a} 
	<\int_{x>x_0}{\tfrac{a}{x^{\alpha+1}}\log{\tfrac{x^{\alpha+1}}{a}}\ud x} 
	<\infty,
	\end{align*}
\fi
where the last inequality is true because of \eqref{eq.prf.lmm.AlphaStableRenyiCondition}.
\qed

\subsection{Proof of Lemma \ref{lmm:StableXnDist}} \label{subsec:prf:lmm:StableXnDist}
	Using Definition \ref{def.StableProcess}, the characteristic function of $X_1^{(n)}$ is as follows:
	\begin{equation*}
		\widehat{p}_{X_1^{(n)}}(\omega)
		=\left\lbrace
		\begin{array}{ll}
			\ue^{-\sigma^\alpha|\omega|^\alpha\left(\frac{1}{n}-\uj\frac{1}{n}\beta \sgn{\omega} \tan\frac{\pi\alpha}{2}\right)+\uj\mu\frac{\omega}{n}} & \alpha\neq 1\\
			\ue^{-\sigma|\omega|\left(\frac{1}{n}+\uj\frac{1}{n}\beta \frac{2}{\pi}\sgn\omega\ln{|\omega|}\right)+\uj\mu\frac{\omega}{n}} & \alpha=1
		\end{array}\right..
	\end{equation*}
	From the definition of $X_0$, we can write
	\begin{equation*}
		\widehat{p}_{X_0}(\omega)
		=\left\lbrace
		\begin{array}{ll}
			\ue^{-\sigma^\alpha|\omega|^\alpha\left(1-\uj\beta \sgn{\omega}\tan\frac{\pi\alpha}{2}\right)+\uj\mu\omega} & \alpha\neq 1\\
			\ue^{-\sigma|\omega|\left(1+\uj\beta \frac{2}{\pi}\sgn\omega\ln{|\omega|}\right)+\uj\mu\omega} & \alpha=1
		\end{array}\right..
	\end{equation*}
	Thus, we obtain that
	\begin{equation*}
		\widehat{p}_{X_1^{(n)}}(\omega)
		=\widehat{p}_{X_0}\left(\frac{\omega}{\sqrt[\alpha]{n}}\right)
		\ue^{-\uj\omega c_n}
	\end{equation*}
	where
	\begin{equation*}
		c_n =\left\lbrace
		\begin{array}{ll}
			\mu\left(\frac{1}{\sqrt[\alpha]{n}}-\frac{1}{n}\right) &\alpha\neq 1\\
			\frac{2}{\pi}\sigma\beta\frac{\ln n}{n} &\alpha=1
		\end{array}
		\right.,
	\end{equation*}
	Therefore, $X_1^{(n)}$ can be written with respect to $X_0$ as follows:
	\begin{equation*}
		X_1^{(n)} \stackrel{d}{=} \frac{X_0-b_n}{\sqrt[\alpha]{n}},
	\end{equation*}
	where $b_n=c_n\sqrt[\alpha]{n}$.
\qed

\subsection{Proof of Lemma \ref{lmm.QuantNorm}} \label{subsec:ProofQuantNorm}
\emph{Proof of \eqref{eq.QuantNorm1}}:
From the definition of $q_{X;m}$ and $q_{Y;m}$, defined in Definition \ref{def.AmpQuant}, it follows that they are constant in intervals $\left[ (i-{1}/{2})/{m}, (i+{1}/{2})/{m}\right]$ for all $i\in\mathbb{Z}$.
Therefore,
\if@twocolumn
	\begin{align*}
	&\int_{\frac{i-\frac{1}{2}}{m}}^{\frac{i+\frac{1}{2}}{m}}
	{\left| q_{Y;m}(x)-q_{X;m}(x) \right| \ud x} \\
	&\qquad=\frac{1}{m}\left| q_{Y;m}\left(\tfrac{i}{m}\right)-q_{X;m}\left(\tfrac{i}{m}\right) \right| \\
	&\qquad=\left| \int_{\frac{i-\frac{1}{2}}{m}}^{\frac{i+\frac{1}{2}}{m}}{p_Y(x) \ud x}
	-\int_{\frac{i-\frac{1}{2}}{m}}^{\frac{i+\frac{1}{2}}{m}}{p_X(x) \ud x} \right| \EQnum\label{eq.QuantNorm1.1}\\
	&\qquad\leq\int_{\frac{i-\frac{1}{2}}{m}}^{\frac{i+\frac{1}{2}}{m}}
	{\left| p_Y(x)-p_X(x) \right| \ud x},
	\end{align*}
\else
	\begin{align*}
	\int_{\frac{i-\frac{1}{2}}{m}}^{\frac{i+\frac{1}{2}}{m}}
	{\left| q_{Y;m}(x)-q_{X;m}(x) \right| \ud x}
	=&\frac{1}{m}\left| q_{Y;m}\left(\tfrac{i}{m}\right)-q_{X;m}\left(\tfrac{i}{m}\right) \right| \\
	=&\left| \int_{\frac{i-\frac{1}{2}}{m}}^{\frac{i+\frac{1}{2}}{m}}{p_Y(x) \ud x}
	-\int_{\frac{i-\frac{1}{2}}{m}}^{\frac{i+\frac{1}{2}}{m}}{p_X(x) \ud x} \right| \EQnum\label{eq.QuantNorm1.1}\\
	\leq&\int_{\frac{i-\frac{1}{2}}{m}}^{\frac{i+\frac{1}{2}}{m}}
	{\left| p_Y(x)-p_X(x) \right| \ud x},
	\end{align*}
\fi
where \eqref{eq.QuantNorm1.1} is from the definition of $q$.
Hence by summation over $i\in\mathbb{Z}$, we have
\begin{equation*}
	\int_{\mathbb{R}}{\left| q_{Y;m}(x)-q_{X;m}(x) \right| \ud x}
	\leq \int_\mathbb{R}{\left| p_Y(x)-p_X(x) \right| \ud x}.
\end{equation*}
Thus, \eqref{eq.QuantNorm1} is proved.

\emph{Proof of \eqref{eq.QuantNorm2} and \eqref{eq.QuantNorm3}}:
We claim that, it suffices to show that
\begin{equation}
	\big|\widetilde{X}_m - X\big| \leq \frac{1}{m} \label{eq.QuantNorm2.1}.
\end{equation}
Because if so, we can write
\begin{align*}
	&\big| |X| - \tfrac{1}{m} \big|^\alpha
	\leq \big|\widetilde{X}_m\big|^\alpha
	\leq \big( |X| + \tfrac{1}{m} \big)^\alpha,
\end{align*}
and as a result
\begin{align*}
	 & \E{\left| |X| - \tfrac{1}{m} \right|^\alpha}
	\leq \E{\big|\widetilde{X}_m\big|^\alpha}
	\leq \E{\left( |X| + \tfrac{1}{m} \right)^\alpha}.
\end{align*}
Therefore, it suffices to show that
\begin{equation} \label{eqn:QuantNormUBnd}
	\E{\left( |X| + \tfrac{1}{m} \right)^\alpha}
	\leq\left(\tfrac{2}{\sqrt{m}}\right)^\alpha+\ue^\frac{\alpha}{\sqrt{m}}\E{|X|^\alpha},
\end{equation}
\begin{equation} \label{eqn:QuantNormLBnd}
	\E{\left| |X| - \tfrac{1}{m} \right|^\alpha}
	\geq\Pr{|X|>\tfrac{1}{\sqrt{m}}}\ue^{\frac{-2\alpha}{\sqrt{m}}}
	\E{|X|^\alpha}.
\end{equation}

\emph{Proof of \eqref{eqn:QuantNormUBnd}}:
By conditioning whether $|X|>1/\sqrt{m}$ or not, we can write
\if@twocolumn
	\begin{align*}
	&\E{\left( |X| + \tfrac{1}{m} \right)^\alpha}\\
	&\quad=\Pr{|X|>\tfrac{1}{\sqrt{m}}}\E{\left( |X| + \tfrac{1}{m} \right)^\alpha\left||X|>\tfrac{1}{\sqrt{m}} \right.}\\
	&\qquad+\Pr{|X|\leq\tfrac{1}{\sqrt{m}}}\E{\left(|X|+\tfrac{1}{m}\right)^\alpha\left||X|\leq\tfrac{1}{\sqrt{m}} \right.}\\
	&\quad\leq \Pr{|X|>\tfrac{1}{\sqrt{m}}}\E{\left( |X| + \tfrac{1}{m} \right)^\alpha\left||X|>\tfrac{1}{\sqrt{m}} \right.}\nonumber
\\&\qquad
	+\left(\tfrac{2}{\sqrt{m}}\right)^\alpha, \EQnum\label{eqn:QuantNormU1}
	\end{align*}
\else
	\begin{align*}
	\E{\left( |X| + \tfrac{1}{m} \right)^\alpha}
	=&\Pr{|X|>\tfrac{1}{\sqrt{m}}}\E{\left( |X| + \tfrac{1}{m} \right)^\alpha\left||X|>\tfrac{1}{\sqrt{m}} \right.}\\
	&+\Pr{|X|\leq\tfrac{1}{\sqrt{m}}}\E{\left(|X|+\tfrac{1}{m}\right)^\alpha\left||X|\leq\tfrac{1}{\sqrt{m}} \right.}\\
	\leq& \Pr{|X|>\tfrac{1}{\sqrt{m}}}\E{\left( |X| + \tfrac{1}{m} \right)^\alpha\left||X|>\tfrac{1}{\sqrt{m}} \right.}
	+\left(\tfrac{2}{\sqrt{m}}\right)^\alpha, \EQnum\label{eqn:QuantNormU1}
	\end{align*}
\fi
where \eqref{eqn:QuantNormU1} is true since
\begin{equation*}
	\Pr{|X|\leq\tfrac{1}{\sqrt{m}}} \leq 1,
	\qquad\tfrac{1}{m}+\tfrac{1}{\sqrt{m}} \leq \tfrac{2}{\sqrt{m}}.
\end{equation*}
Note that if $\Pr{|X|>\tfrac{1}{\sqrt{m}}}=0$, the conditional expected value $\mathbb{E}\big[\big( |X| + \tfrac{1}{m} \big)^\alpha\big| |X|>\tfrac{1}{\sqrt{m}} \big]$ is not well-defined, but in this case we take the product $\Pr{|X|>\tfrac{1}{\sqrt{m}}} \mathbb{E}\big[\big( |X| + \tfrac{1}{m} \big)^\alpha\big| |X|>\tfrac{1}{\sqrt{m}} \big]$ to be zero, without any need for specifying an exact value for $\mathbb{E}\big[\big( |X| + \tfrac{1}{m} \big)^\alpha\big| |X|>\tfrac{1}{\sqrt{m}} \big]$.  

From \eqref{eqn:QuantNormU1}, it only remains to prove that
\begin{align*}
	\Pr{|X| \hspace{-0.8mm}>\hspace{-0.8mm} \tfrac{1}{\sqrt{m}}}
	\E{\left( |X| + \tfrac{1}{m} \right)^\alpha\left||X| \hspace{-0.8mm}>\hspace{-0.8mm} \tfrac{1}{\sqrt{m}} \right.}
	\leq \ue^\frac{\alpha}{\sqrt{m}}\E{|X|^\alpha}.
\end{align*}
In order to do that, we can write
\begin{equation*}
	\left( |X| + \tfrac{1}{m} \right)^\alpha
	=|X|^\alpha \ue^{\alpha\log\left(1+\frac{1}{m|X|}\right)}.
\end{equation*}
Since $|X|>1/\sqrt{m}$, we can write
\if@twocolumn
\begin{align*}
	|X|^\alpha \ue^{\alpha\log\left(1+\frac{1}{m|X|}\right)}
	\leq& |X|^\alpha \ue^{\alpha\log\left(1+\frac{1}{\sqrt{m}}\right)} \\
	\leq& |X|^\alpha \ue^{\alpha\frac{1}{\sqrt{m}}},
\end{align*}
\else
\begin{align*}
	|X|^\alpha \ue^{\alpha\log\left(1+\frac{1}{m|X|}\right)}
	\leq |X|^\alpha \ue^{\alpha\log\left(1+\frac{1}{\sqrt{m}}\right)} 
	\leq |X|^\alpha \ue^{\alpha\frac{1}{\sqrt{m}}},
\end{align*}
\fi
where the last inequality is true because $\log(1+x)\leq x$. As a result
\if@twocolumn
	\begin{align*}
	&\Pr{|X|>\tfrac{1}{\sqrt{m}}}\E{\left( |X| + \tfrac{1}{m} \right)^\alpha\left||X|>\tfrac{1}{\sqrt{m}} \right.}\\
	&\qquad\leq \Pr{|X|>\tfrac{1}{\sqrt{m}}}\E{|X|^\alpha \ue^{\alpha\frac{1}{\sqrt{m}}}\left||X|>\tfrac{1}{\sqrt{m}} \right.}\\
	&\qquad\leq \ue^\frac{\alpha}{\sqrt{m}}\E{|X|^\alpha}.
	\end{align*}
\else
	\begin{align*}
	\Pr{|X|>\tfrac{1}{\sqrt{m}}}\E{\left( |X| + \tfrac{1}{m} \right)^\alpha\left||X|>\tfrac{1}{\sqrt{m}} \right.}
	\leq& \Pr{|X|>\tfrac{1}{\sqrt{m}}}\E{|X|^\alpha \ue^{\alpha\frac{1}{\sqrt{m}}}\left||X|>\tfrac{1}{\sqrt{m}} \right.}\\
	\leq& \ue^\frac{\alpha}{\sqrt{m}}\E{|X|^\alpha}.
	\end{align*}
\fi

\emph{Proof of \eqref{eqn:QuantNormLBnd}}:
Similar to the previous proof, by conditioning whether $|X|>1/\sqrt{m}$ or not, we can write
\if@twocolumn
	\begin{align*}
	&\E{\left| |X|  \hspace{-0.8mm}-\hspace{-0.8mm} \tfrac{1}{m} \right|^\alpha}
	=\Pr{|X| \hspace{-0.8mm}>\hspace{-0.8mm} \tfrac{1}{\sqrt{m}}}\E{\left| |X| \hspace{-0.8mm}-\hspace{-0.8mm} \tfrac{1}{m} \right|^\alpha\left||X| \hspace{-0.8mm}>\hspace{-0.8mm} \tfrac{1}{\sqrt{m}} \right.}\\
	&~~~\phantom{\geq} +\Pr{|X|\leq\tfrac{1}{\sqrt{m}}}\E{\left||X|-\tfrac{1}{m}\right|^\alpha\left||X|\leq\tfrac{1}{\sqrt{m}} \right.}\\
	&~~~\geq \phantom{+} \Pr{|X|>\tfrac{1}{\sqrt{m}}}
	\E{\left| |X|-\tfrac{1}{m} \right|^\alpha\left||X|>\tfrac{1}{\sqrt{m}} \right.}, \EQnum\label{eqn:QuantNormL1}
	\end{align*}
\else
	\begin{align*}
	\E{\left| |X| - \tfrac{1}{m} \right|^\alpha}
	 =& \Pr{|X|>\tfrac{1}{\sqrt{m}}}\E{\left| |X|-\tfrac{1}{m} \right|^\alpha\left||X|>\tfrac{1}{\sqrt{m}} \right.}\\
	&+\Pr{|X|\leq\tfrac{1}{\sqrt{m}}}\E{\left||X|-\tfrac{1}{m}\right|^\alpha\left||X|\leq\tfrac{1}{\sqrt{m}} \right.}\\
	\geq&\Pr{|X|>\tfrac{1}{\sqrt{m}}}
	\E{\left| |X|-\tfrac{1}{m} \right|^\alpha\left||X|>\tfrac{1}{\sqrt{m}} \right.}, \EQnum\label{eqn:QuantNormL1}
	\end{align*}
\fi
If $\Pr{|X|>1/\sqrt{m}}=0$, \eqref{eqn:QuantNormLBnd} is clearly correct. 
Thus, assume that $\Pr{|X|>1/\sqrt{m}}>0$, meaning that $\E{| |X|-{1}/{m} |^\alpha \big| |X|>{1}/{\sqrt{m}} }$ is well-defined. We need to show that
\begin{equation*}
	\E{\left| |X| - \tfrac{1}{m} \right|^\alpha\left||X|>\tfrac{1}{\sqrt{m}} \right.}
	\geq \ue^{-\frac{\alpha}{\sqrt{m}}}\E{|X|^\alpha}.
\end{equation*}
Observe that when $|x|>1/\sqrt{m}$ and $m\geq4$, we can write
\if@twocolumn
\begin{align*}
	\left| |x| - \tfrac{1}{m} \right|^\alpha
	=&|x|^\alpha \ue^{\alpha\log\left|1-\frac{1}{m|x|}\right|}
	\geq |x|^\alpha \ue^{\alpha\log\left(1-\frac{1}{\sqrt{m}}\right)} \\
	\geq& |x|^\alpha \ue^{-\alpha\frac{2}{\sqrt{m}}},
\end{align*}
\else
\begin{align*}
	\left| |x| - \tfrac{1}{m} \right|^\alpha
	=|x|^\alpha \ue^{\alpha\log\left|1-\frac{1}{m|x|}\right|}
	\geq |x|^\alpha \ue^{\alpha\log\left(1-\frac{1}{\sqrt{m}}\right)} 
	\geq |x|^\alpha \ue^{-\alpha\frac{2}{\sqrt{m}}}, 
\end{align*}
\fi
where the last inequality is true because $\log(1\hspace{-0.5mm}-\hspace{-0.5mm}x)\geq \hspace{-0.5mm}-\hspace{-0.5mm}2x$ for $0\leq x\leq1/2$. As a result,
\begin{equation*}
	\E{\left| |X| - \tfrac{1}{m} \right|^\alpha\left||X|>\tfrac{1}{\sqrt{m}} \right.}
	\geq \ue^{-\frac{\alpha}{\sqrt{m}}}\E{|X|^\alpha \big| |X|>\tfrac{1}{\sqrt{m}}}.
\end{equation*}
Now, we need to show that
\begin{equation} \label{eqn:EX>c>EX}
	\E{|X|^\alpha \Big| |X|>\tfrac{1}{\sqrt{m}}}
	\geq\E{|X|^\alpha}.
\end{equation}
Without loss of generality, we may assume that $\Pr{|X|>1/\sqrt{m}}<1$. Now, we have
\if@twocolumn
	\begin{align*}
	&\E{|X|^\alpha} 
	= \Pr{|X|>\tfrac{1}{\sqrt{m}}}
	\E{|X|^\alpha \Big| |X|>\tfrac{1}{\sqrt{m}}} \EQnum\label{eqn:EX<maxEX>c,EX<c}\\
	&\qquad\quad+ \Pr{|X|\leq\tfrac{1}{\sqrt{m}}}
	\E{|X|^\alpha \Big| |X|\leq\tfrac{1}{\sqrt{m}}}  \\
	&\quad\leq \max\left\{
	\E{|X|^\alpha \Big| |X|>\tfrac{1}{\sqrt{m}}},
	\E{|X|^\alpha \Big| |X|\leq\tfrac{1}{\sqrt{m}}}
	\right\}.
	\end{align*}
\else
	\begin{align*}
	\E{|X|^\alpha} 
	=& \Pr{|X|>\tfrac{1}{\sqrt{m}}}
	\E{|X|^\alpha \Big| |X|>\tfrac{1}{\sqrt{m}}} \EQnum\label{eqn:EX<maxEX>c,EX<c}
	+ \Pr{|X|\leq\tfrac{1}{\sqrt{m}}}
	\E{|X|^\alpha \Big| |X|\leq\tfrac{1}{\sqrt{m}}}  \\
	\leq& \max\left\{
	\E{|X|^\alpha \Big| |X|>\tfrac{1}{\sqrt{m}}},
	\E{|X|^\alpha \Big| |X|\leq\tfrac{1}{\sqrt{m}}}
	\right\}.
	\end{align*}
\fi
Since
\begin{equation*}
	\E{|X|^\alpha \Big| |X|\leq\tfrac{1}{\sqrt{m}}}
	\leq \left(\tfrac{1}{\sqrt{m}}\right)^\alpha
	<\E{|X|^\alpha \Big| |X|>\tfrac{1}{\sqrt{m}}},
\end{equation*}
we obtain that
\if@twocolumn
	\begin{align*}
	&\max\left\{
	\E{|X|^\alpha \Big| |X|>\tfrac{1}{\sqrt{m}}},
	\E{|X|^\alpha \Big| |X|\leq\tfrac{1}{\sqrt{m}}}
	\right\}\\
	&\qquad=\E{|X|^\alpha \Big| |X|>\tfrac{1}{\sqrt{m}}}.
	\end{align*}
\else
\begin{align*}
	&\max\left\{
	\E{|X|^\alpha \Big| |X|>\tfrac{1}{\sqrt{m}}},
	\E{|X|^\alpha \Big| |X|\leq\tfrac{1}{\sqrt{m}}}
	\right\}
	=\E{|X|^\alpha \Big| |X|>\tfrac{1}{\sqrt{m}}}.
\end{align*}
\fi
Therefore, \eqref{eqn:EX>c>EX} is obtained from \eqref{eqn:EX<maxEX>c,EX<c}.
Hence, \eqref{eqn:QuantNormLBnd} is proved.

\emph{Proof of \eqref{eq.QuantNorm2.1}}:
Note that random variable $\widetilde{X}_m$ has the same distribution as the following random variable:
\begin{equation*}
	\widetilde{X}_m = \quant{X}{m} + U_m,
\end{equation*}
where $U_m$ and $X$ are independent, and
\begin{equation*}
	p_{U_m}(x) = \left\lbrace
	\begin{array}{ll}
		m & |x|\leq\frac{1}{2m} \\
		0 & |x|>\frac{1}{2m}
	\end{array}\right..
\end{equation*}
Thus,
\if@twocolumn
	\begin{align*}
	\big|\widetilde{X}_m - X\big|
	\leq& \big| \widetilde{X}_m-\quant{X}{m} \big|
	+ \left| \quant{X}{m}-X \right| \\
	\leq& \left| U_m \right|+\tfrac{1}{2m}\leq\tfrac{1}{m},
	\end{align*}
\else
	\begin{align*}
	\big|\widetilde{X}_m - X\big|
	\leq& \big| \widetilde{X}_m-\quant{X}{m} \big|
	+ \left| \quant{X}{m}-X \right| 
	\leq \left| U_m \right|+\tfrac{1}{2m}\leq\tfrac{1}{m},
	\end{align*}
\fi
where the last inequality is true because $\left| \quant{X}{m}-X \right|$ and $|U_m|$ are always less than ${1}/{2m}$.
Therefore, the lemma is proved.
\qed

\subsection{Proof of Lemma \ref{lmm.H([aX])}} \label{subsec:ProofH([aX]}
From the definition of entropy, we know that the entropy of a random variable does not depend on the value of the random variable, rather it only depends on the distribution of the random variable.
Therefore, if one finds a correspondence between the values of $\quant{aX}{m/a}$ and the values of $\quant{X}{m}$, while they have the same probability, then the entropy of them will be the same.
Thus, we define the following correspondence between the values of $\quant{aX}{m/a}$, which are from the set $\left\lbrace k\frac{a}{m}\left| k\in\mathbb{Z}\right.\right\rbrace$, and the values of $\quant{X}{m}$, which are from the set $\left\lbrace k\frac{1}{m}\left| k\in\mathbb{Z}\right.\right\rbrace$.
\begin{equation*}
	k\tfrac{a}{m}\in\left\lbrace k\tfrac{a}{m}\left| k\in\mathbb{Z}\right.\right\rbrace
	\leftrightarrow k\tfrac{1}{m}\in\left\lbrace k\tfrac{1}{m}\left| k\in\mathbb{Z}\right.\right\rbrace
\end{equation*}
Now, we show that the correspondent values have the same probability.
\if@twocolumn
	\begin{align*}
	\Pr{\quant{aX}{\frac{m}{a}}=k\tfrac{a}{m}} 
	=&\Pr{aX\in\left[\left(k-\tfrac{1}{2}\right)\tfrac{a}{m},\left(k+\tfrac{1}{2}\right)\tfrac{a}{m} \right)}\\
	=&\Pr{X\in\left[\left(k-\tfrac{1}{2}\right)\tfrac{1}{m},\left(k+\tfrac{1}{2}\right)\tfrac{1}{m}\right)}\\
	=&\Pr{\quant{X}{m}=k\tfrac{1}{m}}.
	\end{align*}
\else
	\begin{align*}
	\Pr{\quant{aX}{\frac{m}{a}}=k\tfrac{a}{m}} 
	&=\Pr{aX\in\left[\left(k-\tfrac{1}{2}\right)\tfrac{a}{m},\left(k+\tfrac{1}{2}\right)\tfrac{a}{m} \right)}\\
	&=\Pr{X\in\left[\left(k-\tfrac{1}{2}\right)\tfrac{1}{m},\left(k+\tfrac{1}{2}\right)\tfrac{1}{m}\right)}
	=\Pr{\quant{X}{m}=k\tfrac{1}{m}}.
	\end{align*}
\fi
Therefore, the lemma is proved.
\qed

\subsection{Proof of Lemma \ref{lmm.DifEntShft}} \label{subsec:ProofDifEntShft}
Observe that for any given $m,c_m\in\mathbb{R}$, there exist unique $k_m\in\mathbb{Z}$, and $d_m\in\mathbb{R}$ such that
\begin{equation*}
	c_m=\tfrac{k_m}{m}+d_m,\qquad d_m\in \left[0,\tfrac{1}{m}\right).
\end{equation*}
Therefore, because of the definition of amplitude quantization in Definition \ref{def.AmpQuant}, we can write
\begin{equation*}
	\quant{X+c_m}{m}=\quant{X+d_m}{m}+\tfrac{k_m}{m}.
\end{equation*}
Since entropy is invariant with respect to constant shift, we have
\begin{equation*}
	\H{\quant{X+c_m}{m}}=\H{\quant{X+d_m}{m}}.
\end{equation*}

As a result, we only need to prove the theorem for $\{d_m\}$, instead of $\{c_m\}$.
Take the pdf $q_m(x)$ as follows:
\if@twocolumn
\begin{align}
	q_m(x) =& m\Pr{X+d_m\in\left[\tfrac{i-\frac{1}{2}}{m},\tfrac{i+\frac{1}{2}}{m}\right)} \nonumber\\
	=& m\int_{\frac{i-\frac{1}{2}}{m}-d_m}^{\frac{i+\frac{1}{2}}{m}-d_m}{p(x) \ud x}, \label{eq:qm-Pm}
\end{align}
\else
\begin{align}
	q_m(x) = m\Pr{X+d_m\in\left[\tfrac{i-\frac{1}{2}}{m},\tfrac{i+\frac{1}{2}}{m}\right)} 
	= m\int_{\frac{i-\frac{1}{2}}{m}-d_m}^{\frac{i+\frac{1}{2}}{m}-d_m}{p(x) \ud x}, \label{eq:qm-Pm}
\end{align}
\fi
Therefore, from Lemma \ref{lmm:H([X])-ln m=h(q)}, we obtain that
\if@twocolumn
	\begin{align*}
	\int_\bR{q_m(x)\log\frac{1}{q_m(x)} \ud x}
	=&\sum_{i\in\bZ}{\tfrac{ q_m(i/m) }{m} \log\tfrac{1}{\frac{1}{m} q_m\left(\frac{i}{m}\right)}} - \log m\\
	=&\H{\quant{X+d_m}{m}}-\log m,
	\end{align*}
\else
	\begin{align*}
	\int_\bR{q_m(x)\log\frac{1}{q_m(x)} \ud x}
	=\H{\quant{X+d_m}{m}}-\log m,
	\end{align*}
\fi
where the last equation is true because of \eqref{eq:qm-Pm}.
So, if we take $\widetilde{X}_m$ a continuous random variable with pdf $q_m$, we only need to prove
\begin{equation*}
	\lim_{m\to\infty}\h{\widetilde{A}_m} = \h{X}.
\end{equation*}
In order to do this, we write $\left|\h{\widetilde{A}_m}-\h{X}\right|$ as follows
\if@twocolumn
\begin{align*}
	\left|\h{\widetilde{A}_m} \hspace{-0.8mm}-\hspace{-0.8mm} \h{X}\right|
	\leq&\left|\int_{-l}^{l}{q_m(x)\log\tfrac{1}{q_m(x)} \hspace{-0.7mm}-\hspace{-0.7mm} p(x)\log\tfrac{1}{p(x)} \ud x}\right|\\
	&+ \left|\int_{|x|>l}{q_m(x)\log\tfrac{1}{q_m(x)}\ud x}\right|\\
	&+ \left|\int_{|x|>l}{p(x)\log\tfrac{1}{p(x)} \ud x}\right|,
\end{align*}
\else
\begin{align*}
	\left|\h{\widetilde{A}_m} \hspace{-0.8mm}-\hspace{-0.8mm} \h{X}\right|
	\leq & \left|\int_{-l}^{l}{q_m(x)\log\tfrac{1}{q_m(x)} \hspace{-0.7mm}-\hspace{-0.7mm} p(x)\log\tfrac{1}{p(x)} \ud x}\right| + \left|\int_{|x|>l}{q_m(x)\log\tfrac{1}{q_m(x)}\ud x}\right|\\
	&+ \left|\int_{|x|>l}{p(x)\log\tfrac{1}{p(x)} \ud x}\right|,
\end{align*}
\fi
where $l>0$ is arbitrary.
Thus, it suffices to prove that
\begin{align}
	&\lim_{m\to\infty}\int_{-l}^{l}{q_m(x)\log\tfrac{1}{q_m(x)} \ud x}
	=\int_{-l}^{l}{p(x)\log\tfrac{1}{p(x)} \ud x},\label{eq:Bnd1}
\end{align}
\begin{equation}
	\lim_{l\to\infty}{\left|\int_{|x|>l}{p(x)\log\tfrac{1}{p(x)} \ud x}\right|}=0,\label{eq:Bnd2}
\end{equation}
\begin{align}
	&\lim_{l\to\infty}{\left|\int_{|x|>l}{q_m(x)\log\tfrac{1}{q_m(x)}\ud x}\right|}=0,
	\label{eq:Bnd3}
\end{align}
hold  for all $l>0$ and uniformly on $m$.

\emph{Proof of \eqref{eq:Bnd1}}:
From \eqref{eq:qm-Pm}, we obtain that since $p(x)$ is a piecewise continuous, the mean value theorem yields that there exists $x^*_m\in\left[(i-\tfrac{1}{2})/m-d_m,(i-\tfrac{1}{2})/m-d_m\right)$, such that $q_m(x) = p(x^*_m)$.
Since $p(x)$ is piecewise continuous, and $[-l,l]$ is a compact set, $p(x)$ is uniformly continuous over $[-l,l]$.
Therefore, for any $\epsilon'$, there exists $M\in\bR$ such that for all $x\in[-l,l]$, we have that
\begin{equation*}
	m>M
	~\Rightarrow ~ |q_m(x)-p(x)|<\epsilon'.
\end{equation*}
Furthermore, since the function $x\mapsto x\log x$ is continuous and $p(x)$ is uniformly continuous, we have that for all $x\in[-l,l]$
\begin{equation*}
	\lim_{m\to\infty}{q_m(x)\log \tfrac{1}{q_m(x)}}=p(x)\log \tfrac{1}{p(x)},
\end{equation*}
uniformly on $x$.
Thus \eqref{eq:Bnd1} is proved.

\emph{Proof of \eqref{eq:Bnd2}}:
We can write
\begin{align}
	&\left|\int_{|x|>l}{p(x)\log\tfrac{1}{p(x)} \ud x}\right|
	\leq \int_{|x|>l}{p(x)\left|\log\tfrac{1}{p(x)}\right| \ud x}. \label{eq:Bnd2.1}
\end{align}
By assuming the lemma, we know
\begin{align}
	\int_\bR \hspace{-0.5mm} p(x)\left|\log\tfrac{1}{p(x)}\right| \hspace{-0.8mm} \ud x \hspace{-0.8mm}<\hspace{-0.8mm} \infty
	\Rightarrow\lim_{l\to\infty}\int_{|x|>l} \hspace{-1mm} p(x)\left|\log\tfrac{1}{p(x)}\right| \hspace{-0.8mm} \ud x=0.\label{eq:LhpX=0}
\end{align}
Hence, \eqref{eq:Bnd2.1} implies \eqref{eq:Bnd2}.

\emph{Proof of \eqref{eq:Bnd3}}:
It suffices to show that
\if@twocolumn
\begin{align}
	&\int_{|x+d_m|>l} \hspace{-1mm} p(x)\log\tfrac{1}{p(x)} \ud x 
	\leq\int_{|x|>l} \hspace{-1mm} q_m(x)\log\tfrac{1}{q_m(x)}\ud x, \label{eq:Bnd3.1}\\
	&\int_{|x|>l}{q_m(x)\log\tfrac{1}{q_m(x)}\ud x} \nonumber\\
	&\leq \sum_{|i|>l-1} \hspace{-2mm} P[i]\log\tfrac{1}{P[i]} + \Pr{|X|>l \hspace{-0.5mm}-\hspace{-0.5mm} 1}\log 2+o_l(1), \label{eq:Bnd3.2}
\end{align}
\else
\begin{align}
	&\int_{|x+d_m|>l}  p(x)\log\tfrac{1}{p(x)} \ud x 
	\leq\int_{|x|>l}  q_m(x)\log\tfrac{1}{q_m(x)}\ud x, \label{eq:Bnd3.1}\\
	&\int_{|x|>l}{q_m(x)\log\tfrac{1}{q_m(x)}\ud x} 
	\leq \sum_{|i|>l-1}{P[i]\log\tfrac{1}{P[i]}}
	+\Pr{|X|>l-1}\log 2+o_l(1), \label{eq:Bnd3.2}
\end{align}
\fi
where $o_l(1)$ means that $\lim_{l\to\infty}{o_l(1)}=0$, and
\begin{equation*}
	P[i]=\int_{i-\frac{1}{2}}^{i+\frac{1}{2}}{p(x)\ud x}.
\end{equation*}
By changing the variable $y=Lx$, we can write
\if@twocolumn
	\begin{align*}
	\int_{|x+d_m|>l}{p(x)\log\tfrac{1}{p(x)} \ud x} 
	=&\int_{|x+d_m|>l}{p(x)\log\tfrac{1}{\frac{1}{L}p(x)} \ud x} \\
	&-\Pr{|X+d_m|>l} \log L,\nonumber\\
	=\int_{|y+Ld_m|>lL} & {\tfrac{1}{L}p\left(\tfrac{y}{L}\right)\log\tfrac{1}{\frac{1}{L}p\left(\tfrac{y}{L}\right)} \ud y} \nonumber\\
	&-\Pr{|X+d_m|>l} \log L,
	\end{align*}
\else
	\begin{align*}
	\int_{|x+d_m|>l}{p(x)\log\tfrac{1}{p(x)} \ud x}
	=&\int_{|x+d_m|>l}{p(x)\log\tfrac{1}{\frac{1}{L}p(x)} \ud x}
	-\Pr{|X+d_m|>l} \log L,\nonumber\\
	=&\int_{|y+Ld_m|>lL}{\tfrac{1}{L}p\left(\tfrac{y}{L}\right)\log\tfrac{1}{\tfrac{1}{L}p\left(\tfrac{y}{L}\right)} \ud y}
	-\Pr{|X+d_m|>l} \log L,
	\end{align*}
\fi
Because $p(x)<L$ almost everywhere, we have that:
\begin{equation*}
	\frac{1}{L}p\left(\tfrac{y}{L}\right)<1
	~\Rightarrow ~ \frac{1}{L}p\left(\tfrac{y}{L}\right)\log\tfrac{1}{\frac{1}{L}p\left(\frac{y}{L}\right)}>0.
\end{equation*}
So we obtain that
\if@twocolumn
	\begin{align*}
	&\int_{|y+Ld_m|>lL}{\tfrac{1}{L}p\left(\tfrac{y}{L}\right)\log\tfrac{1}{\frac{1}{L}p\left(\frac{y}{L}\right)} \ud y} \\
	&\qquad\geq\int_{|y|>lL+L}{\tfrac{1}{L}p\left(\tfrac{y}{L}\right)\log\tfrac{1}{\frac{1}{L}p\left(\tfrac{y}{L}\right)} \ud y}\\
	&\qquad=\int_{|x|>l+1}{p(x)\log\tfrac{1}{p(x)} \ud x}\\
	&\qquad\quad+\Pr{|X|>l+1} \log L,
	\end{align*}
\else
	\begin{align*}
	\int_{|y+Ld_m|>lL}{\frac{1}{L}p\left(\tfrac{y}{L}\right)\log\tfrac{1}{\frac{1}{L}p\left(\frac{y}{L}\right)} \ud y}
	\geq&\int_{|y|>lL+L}{\frac{1}{L}p\left(\tfrac{y}{L}\right)\log\tfrac{1}{\frac{1}{L}p\left(\frac{y}{L}\right)} \ud y}\\
	=&\int_{|x|>l+1}{p(x)\log\tfrac{1}{p(x)} \ud x}
	+\Pr{|X|>l+1} \log L,
	\end{align*}
\fi
where $x=y/L$.
Therefore, we can write that
\if@twocolumn
	\begin{align*}
	\int_{|x+d_m|>l}{p(x)\log\tfrac{1}{p(x)} \ud x}
	\geq&\int_{|x|>l+1}{p(x)\log\tfrac{1}{p(x)} \ud x}\\
	&-\Pr{|X|\in[l,l+1]} \log L.
	\end{align*}
\else
	\begin{align*}
	\int_{|x+d_m|>l}{p(x)\log\tfrac{1}{p(x)} \ud x}
	\geq&\int_{|x|>l+1}{p(x)\log\tfrac{1}{p(x)} \ud x}
	-\Pr{|X|\in[l,l+1]} \log L.
	\end{align*}
\fi
Thus, from \eqref{eq:LhpX=0} we conclude that the lower bound vanishes as $l$ tends to $\infty$ uniformly on $m$.
Now, we are going to show that \eqref{eq:Bnd3.2} leads that the upper bound vanishes as $l$ tends to $\infty$ uniformly on $m$.
In order to prove this, note that $\Pr{|X|>l-1}$ vanishes as $l$ tends to infinity uniformly on $m$.
Furthermore,
\if@twocolumn
	\begin{align*}
	\H{\quant{X}{1}}<\infty
	~\Rightarrow~ & \sum_{i\in\mathbb{Z}}{P[i]\log\tfrac{1}{P[i]}}<\infty \\
	~\Rightarrow~ & \lim_{l\to\infty}\sum_{|i|>l-1}{P[i]\log\tfrac{1}{P[i]}}=0.
	\end{align*}
\else
	\begin{align*}
	\H{\quant{X}{1}}<\infty
	~\Rightarrow~  \sum_{i\in\mathbb{Z}}{P[i]\log\tfrac{1}{P[i]}}<\infty 
	~\Rightarrow~  \lim_{l\to\infty}\sum_{|i|>l-1}{P[i]\log\tfrac{1}{P[i]}}=0.
	\end{align*}
\fi
Thus, in order to prove \eqref{eq:Bnd3}, it only remains to prove \eqref{eq:Bnd3.1} and \eqref{eq:Bnd3.2}.
The proof of \eqref{eq:Bnd3.1} exists in \cite{Renyi59} in the proof of Theorem \ref{rem.CEntDim}.
Thus, we only need to prove \eqref{eq:Bnd3.2}.
Similar to the proof of Theorem \ref{rem.CEntDim} in \cite{Renyi59}, it can be shown that
\begin{equation*}
	\int_{|x|>l}{q_m(x)\log\tfrac{1}{q_m(x)}\ud x}
	\leq\sum_{|i|>l}{P_m[i]\log\tfrac{1}{P_m[i]}},
\end{equation*}
where $P_m[i]:=\Pr{\quant{X+d_m}{1}=i}$.
This implies
\if@twocolumn
	\begin{align*}
	&\H{\quant{X+d_m}{1}\big|\left|\quant{X+d_m}{1}\right|>l}\\
	&\qquad=\sum_{|i|>l}{\tfrac{P_m[i]}{\Pr{|\quant{X+d_m}{1}|>l}}} 
	\log\tfrac{\Pr{|\quant{X+d_m}{1}|>l}}{P_m[i]},
	\end{align*}
\else
	\begin{align*}
	\H{\quant{X+d_m}{1}\big|\left|\quant{X+d_m}{1}\right|>l}
	=&\sum_{|i|>l}{\tfrac{P_m[i]}{\Pr{|\quant{X+d_m}{1}|>l}}
	\log\tfrac{\Pr{|\quant{X+d_m}{1}|>l}}{P_m[i]}},
	\end{align*}
\fi
where $\H{X|Y=y}$ is defined in \cite[p. 29]{Cover06}.
Hence, we can write
\if@twocolumn
	\begin{align}
	&\sum_{|i|>l}{P_m[i]\log\tfrac{1}{P_m[i]}} \nonumber\\
	&~~=\phantom{+}\Pr{|\quant{X+d_m}{1}|>l} 
	\H{\quant{X+d_m}{1}\big||\quant{X+d_m}{1}|>l} \nonumber\\
	&~~\phantom{=}+\Pr{|\quant{X+d_m}{1}|>l} 
	\log\tfrac{1}{\Pr{|\quant{X+d_m}{1}|>l}}. \label{eq:Bnd3.2.4}
	\end{align}
\else
	\begin{align}
	\sum_{|i|>l}{P_m[i]\log\tfrac{1}{P_m[i]}} 
	=&\Pr{|\quant{X+d_m}{1}|>l}
	\H{\quant{X+d_m}{1}\big||\quant{X+d_m}{1}|>l} \nonumber\\
	&+\Pr{|\quant{X+d_m}{1}|>l}
	\log\tfrac{1}{\Pr{|\quant{X+d_m}{1}|>l}}. \label{eq:Bnd3.2.4}
	\end{align}
\fi
Because of the definition of the quantization, we can write
\begin{equation*}
	\quant{X+d_m}{1}=\quant{X}{1}+E_m,
\end{equation*}
where $E_m$ is a random variable taking values from $\lbrace0,1\rbrace$.
Thus, $\quant{X+d_m}{1}$ is a function of $\quant{X}{1}$ and $E_m$; as a result
\if@twocolumn
	\begin{align*}
	&\H{\quant{X+d_m}{1}\big||\quant{X+d_m}{1}|>l} \\
	&\qquad\leq\H{\quant{X}{1},E_m\big||\quant{X+d_m}{1}|>l}\\
	&\qquad\leq\H{\quant{X}{1}\big||\quant{X+d_m}{1}|>l}+\H{E_m}\\
	&\qquad\leq\H{\quant{X}{1}\big||\quant{X+d_m}{1}|>l}+\log 2.
	\end{align*}
\else
	\begin{align*}
	\H{\quant{X+d_m}{1}\big||\quant{X+d_m}{1}|>l}
	\leq&\H{\quant{X}{1},E_m\big||\quant{X+d_m}{1}|>l}\\
	\leq&\H{\quant{X}{1}\big||\quant{X+d_m}{1}|>l}+\H{E_m}\\
	\leq&\H{\quant{X}{1}\big||\quant{X+d_m}{1}|>l}+\log 2.
	\end{align*}
\fi
From the definition of $\H{\quant{X}{1}\big||\quant{X+d_m}{1}|>l}$, we can write that
\begin{align*}
	&\Pr{|\quant{X+d_m}{1}|>l}\H{\quant{X}{1}\big||\quant{X+d_m}{1}|>l} \\
	&\qquad=\sum_{|i|>l}{P[i]\log\tfrac{1}{P[i]}}
	-r_m\log r_m-s_m\log s_m+t_m\log t_m \\
	&\qquad\quad -\Pr{|\quant{X+d_m}{1}|>l}\log\tfrac{1}{\Pr{|\quant{X+d_m}{1}|>l}},
\end{align*}
where
\if@twocolumn
	\begin{align*}
	r_m =& \Pr{X\in\left(l-\tfrac{1}{2}-d_m,l-\tfrac{1}{2}\right)}, \\
	s_m =& \Pr{X\in\left[-l-\tfrac{1}{2},-l+\tfrac{1}{2}-d_m\right)}, \\
	t_m =& \Pr{X\in\left[-l-\tfrac{1}{2},-l+\tfrac{1}{2}\right)}.
	\end{align*}
\else
	\begin{align*}
	r_m =& \Pr{X\in\left(l-\tfrac{1}{2}-d_m,l-\tfrac{1}{2}\right)},
	\qquad s_m = \Pr{X\in\left[-l-\tfrac{1}{2},-l+\tfrac{1}{2}-d_m\right)},\\
	t_m =& \Pr{X\in\left[-l-\tfrac{1}{2},-l+\tfrac{1}{2}\right)}.
	\end{align*}
\fi
Therefore, \eqref{eq:Bnd3.2.4} can be simplified as follows:
\if@twocolumn
	\begin{align}
	&\sum_{|i|>l}{P_m[i]\log\tfrac{1}{P_m[i]}} \nonumber\\
	&\qquad\leq\sum_{|i|>l}{P[i]\log\tfrac{1}{P[i]}} \label{eqn:UpBndSFT1}\\
	&\qquad\quad-r_m\log r_m-s_m\log s_m+t_m\log t_m \label{eqn:UpBndSFT2}\\
	&\qquad\quad+\Pr{|\quant{X+d_m}{1}|>l} \log 2 \label{eqn:UpBndSFT3}.
	\end{align}
\else
	\begin{align}
	\sum_{|i|>l}{P_m[i]\log\tfrac{1}{P_m[i]}}
	\leq&\sum_{|i|>l}{P[i]\log\frac{1}{P[i]}} \label{eqn:UpBndSFT1}\\
	&-r_m\log r_m-s_m\log s_m+t_m\log t_m \label{eqn:UpBndSFT2}\\
	&+\Pr{|\quant{X+d_m}{1}|>l} \log 2 \label{eqn:UpBndSFT3}.
	\end{align}
\fi
The first term, \eqref{eqn:UpBndSFT1}, vanishes as $l$ tends to $\infty$ because
\begin{align*}
	\H{\quant{X}{1}} \hspace{-0.5mm}=\hspace{-0.8mm} \sum_{i\in\bZ} \hspace{-0.5mm} P[i]\log\tfrac{1}{P[i]} \hspace{-0.5mm}<\hspace{-0.5mm} \infty
	\Rightarrow& \lim_{l\to\infty}\sum_{|i|>l} \hspace{-0.5mm}P[i]\log\tfrac{1}{P[i]} \hspace{-0.5mm} = 0.
\end{align*}
It can be achieved that \eqref{eqn:UpBndSFT2} and \eqref{eqn:UpBndSFT3} vanish as $l$ tends to $\infty$, uniformly on $m$, but we do not write the details here.
Thus, \eqref{eq:Bnd3.2} is proved and the proof of lemma is complete.
\qed

\subsection{Proof of Lemma \ref{lmm:h(X+Y)c}} \label{subsec:prf:lmm:h(X+Y)c}
	According to Lemma \ref{lmm:DC+DC}, we can define random variable $U$ with support $\{\mathrm{dc},\mathrm{cd},\mathrm{cc}\}$ and pmf $p_{U}(u)$ such that
	\begin{equation*}
		p_{U}(u) =
		\frac{1}{\Delta}
		\begin{cases}
			\Pr{X \text{ is discrete}} \Pr{Y \text{ is continuous}} & u=\mathrm{dc} \\
			\Pr{X \text{ is continuous}} \Pr{Y \text{ is discrete}} & u=\mathrm{cd} \\
			\Pr{X \text{ is continuous}} \Pr{Y \text{ is continuous}} & u=\mathrm{cc}
		\end{cases},
	\end{equation*}
	and
	\begin{equation*}
		p_{Z_c | U}(x|u) =
		\begin{cases}
			p_{X_D+Y_c} & u=\mathrm{dc} \\
			p_{X_c+Y_D} & u=\mathrm{cd} \\
			p_{X_c+Y_c} & u=\mathrm{cc}
		\end{cases},
	\end{equation*}
	where
	\begin{align*}
		\Delta := &
		\Pr{X \text{ is discrete}} \Pr{Y \text{ is continuous}} \\
		&+\Pr{X \text{ is continuous}} \Pr{Y \text{ is discrete}} \\
		&+\Pr{X \text{ is continuous}} \Pr{Y \text{ is continuous}}.
	\end{align*}
	Thus, we have that
	\begin{align*}
		\h{Z_c}
		\geq & \h{Z_c \Big| Q} \\
		= & p_U(\mathrm{cd}) \h{X_c+Y_D}
		+ p_U(\mathrm{dc}) \h{X_D+Y_c}
		+ p_U(\mathrm{cc}) \h{X_c+Y_c} \\
		\geq & p_U(\mathrm{cd}) \h{X_c+Y_D | Y_D}
		+p_U(\mathrm{dc}) \h{X_D+Y_c | X_D}
		+p_U(\mathrm{cc}) \h{X_c+Y_c | Y_c} \\
		\geq & \min(\h{X_c}, \h{Y_c}).
	\end{align*}
	Hence, the lemma is proved.
\qed

\subsection{Proof of Lemma \ref{lmm:AC+RV}} \label{subsec:prf:lmm:AC+RV}
From the Fubini's theorem we obtain that
\begin{equation*}
	\Pr{Z \leq z}
	=\Pr{X \leq z-Y}
	=\E{F_X(z-Y)},
\end{equation*}
where $F_X(x) := \Pr{X\leq x}$ is the cdf of $X$.
In order to prove the first part of the lemma, we only need to prove that
\begin{equation} \label{eqn:EFX=intEpX}
	\E{F_X(z-Y)}
	=\int_{-\infty}^z {\E{p_X(z-Y)} \ud z}.
\end{equation}
The above integral can be written as the following limit:
\begin{equation*}
	\int_{-\infty}^z {\E{p_X(z-Y)} \ud z}
	=\lim_{\ell\to\infty}{\int_{-\ell}^z {\E{p_X(z-Y)} \ud z}}.
\end{equation*}
Note that, from Fubini's theorem, for every finite $\ell$, we have that
\begin{align*}
	\int_{-\ell}^z {\E{p_X(z-Y)} \ud z}
	=& \E{\int_{-\ell}^z{p_X(z-Y) \ud z}} \\
	=& \E{F_X(z-Y) - F_X(-\ell-Y)} \\
	=& \E{F_X(z-Y)} - \E{F_X(-\ell-Y)}.
\end{align*}
Hence, to prove \eqref{eqn:EFX=intEpX}, it is sufficient to show that
\begin{equation*}
	\lim_{\ell\to\infty}{\E{F_X(-\ell-Y)}} = 0.
\end{equation*}
In order to do so, note that there exists $\ell_X$, $\ell_Y$ such that
\begin{equation*}
	F_X(\ell_X) \leq \epsilon/2,
	\qquad F_Y(\ell_Y) \leq \epsilon/2.
\end{equation*}
Thus, we can write
\begin{align*}
	\E{F_X(-\ell-Y)}
	=& \Pr{Y\leq\ell_Y} \E{F_X(-\ell-Y)|Y\leq\ell_Y} \\
	&+\Pr{Y>\ell_Y} \E{F_X(-\ell-Y)|Y>\ell_Y} \\
	\leq& \Pr{Y\leq\ell_Y} \times 1
	+ 1 \times \E{F_X{-\ell-\ell_Y}}
	\leq \epsilon.
\end{align*}
Therefore, the the lemma is proved.
\qed

\subsection{Proof of Lemma \ref{lmm:zeta<EntP}} \label{subsec:prf:lmm:zeta<EntP}
	Note that by the entropy power inequality (EPI) \cite[Theorem 17.7.3]{Cover06}, we have that for $n$ iid scalar continuous random variables $X_1,\cdots,X_n$
	\begin{align*}
		\ue^{2\h{X_1+\cdots+X_n}}
		\geq n \ue^{2\h{X_1}}
		\Longrightarrow &
		\h{X_1}
		\leq \h{X_1+\cdots+X_n}+\log\frac{1}{\sqrt{n}}.
	\end{align*}
	Therefore, from Lemma \ref{lmm:Seperatedindependent} we have that
	\begin{equation*}
		\h{X_1^{(n)}}
		\leq \h{X_1^{(n)}+\cdots+X_n^{(n)}}
		+\log\frac{1}{\sqrt{n}}
		=\h{X_0}+\log{\frac{1}{\sqrt{n}}},
	\end{equation*}
	where $X_1^{(n)}$ was defined in Definition \ref{def.TimeQuant}.
	According to Theorem \ref{thm:GenerelEntropyDim}, $\zeta(n) = \h{X_1^{(n)}}$ satisfies \eqref{eqn:H/k-lnm-z=0}.
	This will complete the proof since based on Theorem \ref{thm:Uniqueness}, for any $\zeta'(n)$ satisfying \eqref{eqn:H/k-lnm-z=0}, we have that
	\begin{equation*}
		\lim_{n\to\infty}\left|\zeta(n)-\zeta'(n)\right|
		=0.
	\end{equation*}
	Hence,
	\begin{equation*}
		\zeta'(n) - \log\frac{1}{\sqrt{n}}
		=\zeta'(n) - \zeta(n)
		+\zeta(n) - \log\frac{1}{\sqrt{n}}
		\leq \left|\zeta'(n) - \zeta(n)\right| + \h{X_0}.
	\end{equation*}
	Thus, the lemma is proved.
\qed

\section{Conclusion}
In this paper, a definition of quantization entropy for random processes based on quantization in the time and amplitude domains was given. The criterion was applied to a wide class of white noise processes, including stable and impulsive Poisson innovation processes.  It was shown that the stable has a higher growth rate of entropy compared to the impulsive Poisson process.
In our study, we assumed that the amplitude quantization steps $1/m$ is shrinking sufficiently fast with respect to  time quantization steps $1/n$, \emph{i.e.,}  $m$ is larger than $m(n)$ for some function $m:\bN\to\bN$. As a future work, it would be interesting to look at cases where $m$ is restricted grow slowly with $n$. Characterization of the entropy for other stochastic processes is also left as a future work.

\bibliographystyle{IEEEtran}
\bibliography{IEEEabrv,mybib}

\makeatother
\end{document}